\documentclass[a4paper,onecolumn,11pt,published]{quantumarticle}
\pdfoutput=1

\usepackage[utf8]{inputenc}
\usepackage{amsmath}
\usepackage{amssymb}
\usepackage{bbold}
\usepackage{comment}
\usepackage{graphicx}
\usepackage{scalerel}
\usepackage{braket}
\usepackage{leftidx}
\usepackage[title]{appendix}
\usepackage{tikz}
\usepackage{tikzit}
\usepackage{tikz-cd}
\usepackage{cite}
\usepackage{geometry}
\usepackage{mathtools}
\usepackage{ stmaryrd }
\usepackage{circuitikz}
\usepackage{textcomp}
\usepackage{accents}
\usepackage{comment}
\usepackage{ mathrsfs }
\usepackage{authblk}
\usepackage{breakurl,amsmath,amsthm,amssymb,xspace,xypic,enumerate,color}
\geometry{margin=1.2in}
\usepackage{float}

\tikzcdset{scale cd/.style={every label/.append style={scale=#1},
    cells={nodes={scale=#1}}}}

\usetikzlibrary{shapes.multipart}

\tikzstyle{bwSpider}=[
       rectangle split,
       rectangle split parts=2,
       rectangle split part fill={black,white},
 minimum size=3.6 mm, inner sep=-2mm, draw=black,scale=0.5,rounded corners=0.8 mm
       ]
 \tikzstyle{wbSpider}=[
       rectangle split,
       rectangle split parts=2,
       rectangle split part fill={white,black},
 minimum size=3.6 mm, inner sep=-2mm, draw=black,scale=0.5,rounded corners=0.8 mm
       ]
\tikzstyle{cWire}=[densely dotted, thick]
\tikzstyle{env}=[copoint,regular polygon rotate=0,minimum width=0.2cm, fill=black]

\tikzstyle{probs}=[shape=semicircle,fill=white,draw=black,shape border rotate=180,minimum width=1.2cm]

%
%


\tikzstyle{every picture}=[baseline=-0.25em,scale=0.5]
\tikzstyle{dotpic}=[] 
\tikzstyle{diredges}=[every to/.style={diredge}]
\tikzstyle{math matrix}=[matrix of math nodes,left delimiter=(,right delimiter=),inner sep=2pt,column sep=1em,row sep=0.5em,nodes={inner sep=0pt},text height=1.5ex, text depth=0.25ex]


\tikzstyle{inline text}=[text height=1.5ex, text depth=0.25ex,yshift=0.5mm]
\tikzstyle{label}=[font=\footnotesize,text height=1.5ex, text depth=0.25ex,yshift=0.5mm]
\tikzstyle{left label}=[label,anchor=east,xshift=1.5mm]
\tikzstyle{right label}=[label,anchor=west,xshift=-1.5mm]


\tikzstyle{braceedge}=[decorate,decoration={brace,amplitude=2mm,raise=-1mm}]
\tikzstyle{small braceedge}=[decorate,decoration={brace,amplitude=1mm,raise=-1mm}]

\tikzstyle{doubled}=[line width=1.6pt] 
\tikzstyle{boldedge}=[doubled,shorten <=-0.17mm,shorten >=-0.17mm]
\tikzstyle{boldedgegray}=[doubled,gray,shorten <=-0.17mm,shorten >=-0.17mm]
\tikzstyle{singleedgegray}=[gray]

\tikzstyle{semidoubled}=[line width=1.4pt] 
\tikzstyle{semiboldedgegray}=[semidoubled,gray,shorten <=-0.17mm,shorten >=-0.17mm]

\tikzstyle{boxedge}=[semiboldedgegray]

\tikzstyle{dottededge}=[dashed,shorten <=-0.02mm,shorten >=-0.02mm]

\tikzstyle{boldedgedashed}=[very thick,dashed,shorten <=-0.17mm,shorten >=-0.17mm]
\tikzstyle{vboldedgedashed}=[doubled,dashed,shorten <=-0.17mm,shorten >=-0.17mm]
\tikzstyle{left hook arrow}=[left hook-latex]
\tikzstyle{right hook arrow}=[right hook-latex]
\tikzstyle{sembracket}=[line width=0.5pt,shorten <=-0.07mm,shorten >=-0.07mm]

\tikzstyle{causal edge}=[->,thick,gray]
\tikzstyle{causal nondir}=[thick,gray]
\tikzstyle{timeline}=[thick,gray, dashed]

\tikzstyle{cedge}=[<->,thick,gray!70!white]

\tikzstyle{empty diagram}=[draw=gray!40!white,dashed,shape=rectangle,minimum width=1cm,minimum height=1cm]
\tikzstyle{empty diagram small}=[draw=gray!50!white,dashed,shape=rectangle,minimum width=0.6cm,minimum height=0.5cm]


\tikzstyle{dot}=[inner sep=0mm,minimum width=2mm,minimum height=2mm,draw,shape=circle]

\tikzstyle{phase dot}=[pdot,phase dimensions]
\tikzstyle{wphase dot}=[dot, phase dimensions]

\tikzstyle{leak}=[white dot, shape=regular polygon, minimum size=300mm, regular polygon sides=3, outer sep=-0.2mm, regular polygon rotate=270]
\tikzstyle{preleak}=[trapezium, trapezium angle=67.5, draw, inner sep=0.1pt, outer sep=0pt, minimum height=2mm, minimum width=4pt,rotate=270]
\tikzstyle{proj}=[white dot, shape=regular polygon, minimum size=3.3 mm, regular polygon sides=4, outer sep=-0.2mm]
\tikzstyle{Vleak}=[white dot, shape=regular polygon, minimum size=3.3 mm, regular polygon sides=3, outer sep=-0.2mm, regular polygon rotate=90]
\tikzstyle{dleak}=[white dot, line width=1.6pt, shape=regular polygon, minimum size=3.3 mm, regular polygon sides=3, outer sep=-0.2mm, regular polygon rotate=270]

\tikzstyle{Wsquare}=[white dot, shape=regular polygon, rounded corners=0.8 mm, minimum size=3.3 mm, regular polygon sides=3, outer sep=-0.2mm]
\tikzstyle{Wsquareadj}=[white dot, shape=regular polygon, rounded corners=0.8 mm, minimum size=3.3 mm, regular polygon sides=3, outer sep=-0.2mm, regular polygon rotate=180]
\tikzstyle{ddot}=[inner sep=0mm, doubled, minimum width=2.5mm,minimum height=2.5mm,draw,shape=circle]

\tikzstyle{black dot}=[dot,fill=black]
\tikzstyle{white dot}=[dot,fill=white,,text depth=-0.2mm]
\tikzstyle{white Wsquare}=[Wsquare,fill=gray,,text depth=-0.2mm]
\tikzstyle{white Wsquareadj}=[Wsquareadj,fill=white,,text depth=-0.2mm]
\tikzstyle{green dot}=[white dot] 
\tikzstyle{gray dot}=[dot,fill=gray!40!white,,text depth=-0.2mm]
\tikzstyle{red dot}=[gray dot] 


\tikzstyle{black ddot}=[ddot,fill=black]
\tikzstyle{white ddot}=[ddot,fill=white]
\tikzstyle{gray ddot}=[ddot,fill=gray!40!white]

\tikzstyle{gray edge}=[gray!60!white]

\tikzstyle{small dot}=[inner sep=0.5mm,minimum width=0pt,minimum height=0pt,draw,shape=circle]

\tikzstyle{small black dot}=[small dot,fill=black]
\tikzstyle{small white dot}=[small dot,fill=white]
\tikzstyle{small gray dot}=[small dot,fill=gray!40!white]

\tikzstyle{causal dot}=[inner sep=0.4mm,minimum width=0pt,minimum height=0pt,draw=white,shape=circle,fill=gray!40!white]


\tikzstyle{phase dimensions}=[minimum size=5mm,font=\footnotesize,rectangle,rounded corners=2.5mm,inner sep=0.2mm,outer sep=-2mm]
\tikzstyle{dphase dimensions}=[minimum size=5mm,font=\footnotesize,rectangle,rounded corners=2.5mm,inner sep=0.2mm,outer sep=-2mm]

\tikzstyle{white phase dot}=[dot,fill=white,phase dimensions]
\tikzstyle{white phase ddot}=[ddot,fill=white,dphase dimensions]

\tikzstyle{white rect ddot}=[draw=black,fill=white,doubled,minimum size=5mm,font=\footnotesize,rectangle,rounded corners=2.5mm,inner sep=0.2mm]
\tikzstyle{gray rect ddot}=[draw=black,fill=gray!40!white,doubled,minimum size=6mm,font=\footnotesize,rectangle,rounded corners=3mm]

\tikzstyle{gray phase dot}=[dot,fill=gray!40!white,phase dimensions]
\tikzstyle{gray phase ddot}=[ddot,fill=gray!40!white,dphase dimensions]
\tikzstyle{grey phase dot}=[gray phase dot]
\tikzstyle{grey phase ddot}=[gray phase ddot]

\tikzstyle{small phase dimensions}=[minimum size=4mm,font=\tiny,rectangle,rounded corners=2mm,inner sep=0.2mm,outer sep=-2mm]
\tikzstyle{small dphase dimensions}=[minimum size=4mm,font=\tiny,rectangle,rounded corners=2mm,inner sep=0.2mm,outer sep=-2mm]

\tikzstyle{small gray phase dot}=[dot,fill=gray!40!white,small phase dimensions]
\tikzstyle{small gray phase ddot}=[ddot,fill=gray!40!white,small dphase dimensions]


\tikzstyle{small map}=[draw,shape=rectangle,minimum height=4mm,minimum width=4mm,fill=white]

\tikzstyle{cnot}=[fill=white,shape=circle,inner sep=-1.4pt]

\tikzstyle{asym hadamard}=[fill=white,draw,shape=NEbox,inner sep=0.6mm,font=\footnotesize,minimum height=4mm]
\tikzstyle{asym hadamard conj}=[fill=white,draw,shape=NWbox,inner sep=0.6mm,font=\footnotesize,minimum height=4mm]
\tikzstyle{asym hadamard dag}=[fill=white,draw,shape=SEbox,inner sep=0.6mm,font=\footnotesize,minimum height=4mm]

\tikzstyle{hadamard}=[fill=white,draw,inner sep=0.6mm,font=\footnotesize,minimum height=4mm,minimum width=4mm]
\tikzstyle{small hadamard}=[fill=white,draw,inner sep=0.6mm,minimum height=1.5mm,minimum width=1.5mm]
\tikzstyle{small hadamard rotate}=[small hadamard,rotate=45]
\tikzstyle{dhadamard}=[hadamard,doubled]
\tikzstyle{small dhadamard}=[small hadamard,doubled]
\tikzstyle{small dhadamard rotate}=[small hadamard rotate,doubled]
\tikzstyle{antipode}=[white dot,inner sep=0.3mm,font=\footnotesize]

\tikzstyle{scalar}=[diamond,draw,inner sep=0.5pt,font=\small]
\tikzstyle{dscalar}=[diamond,doubled, draw,inner sep=0.5pt,font=\small]

\tikzstyle{small box}=[rectangle,inline text,fill=white,draw,minimum height=5mm,yshift=-0.5mm,minimum width=5mm,font=\small]
\tikzstyle{small gray box}=[small box,fill=gray!30]
\tikzstyle{medium box}=[rectangle,inline text,fill=white,draw,minimum height=5mm,yshift=-0.5mm,minimum width=10mm,font=\small]
\tikzstyle{square box}=[small box] 
\tikzstyle{medium gray box}=[small box,fill=gray!30]
\tikzstyle{semilarge box}=[rectangle,inline text,fill=white,draw,minimum height=5mm,yshift=-0.5mm,minimum width=12.5mm,font=\small]
\tikzstyle{large box}=[rectangle,inline text,fill=white,draw,minimum height=5mm,yshift=-0.5mm,minimum width=15mm,font=\small]
\tikzstyle{large gray box}=[small box,fill=gray!30]

\tikzstyle{Bayes box}=[rectangle,fill=black,draw, minimum height=3mm, minimum width=3mm]

\tikzstyle{gray square point}=[small box,fill=gray!50]

\tikzstyle{dphase box white}=[dhadamard]
\tikzstyle{dphase box gray}=[dhadamard,fill=gray!50!white]
\tikzstyle{phase box white}=[hadamard]
\tikzstyle{phase box gray}=[hadamard,fill=gray!50!white]

\tikzstyle{point}=[regular polygon,regular polygon sides=3,draw,scale=0.75,inner sep=-0.5pt,minimum width=9mm,fill=white,regular polygon rotate=180]
\tikzstyle{point nosep}=[regular polygon,regular polygon sides=3,draw,scale=0.75,inner sep=-2pt,minimum width=9mm,fill=white,regular polygon rotate=180]
\tikzstyle{copoint}=[regular polygon,regular polygon sides=3,draw,scale=0.75,inner sep=-0.5pt,minimum width=9mm,fill=white]
\tikzstyle{dpoint}=[point,doubled]
\tikzstyle{dcopoint}=[copoint,doubled]

\tikzstyle{pointgrow}=[shape=cornerpoint,kpoint common,scale=0.75,inner sep=3pt]
\tikzstyle{pointgrow dag}=[shape=cornercopoint,kpoint common,scale=0.75,inner sep=3pt]

\tikzstyle{wide copoint}=[fill=white,draw,shape=isosceles triangle,shape border rotate=90,isosceles triangle stretches=true,inner sep=0pt,minimum width=1.5cm,minimum height=6.12mm]
\tikzstyle{wide point}=[fill=white,draw,shape=isosceles triangle,shape border rotate=-90,isosceles triangle stretches=true,inner sep=0pt,minimum width=1.5cm,minimum height=6.12mm,yshift=-0.0mm]
\tikzstyle{wide point plus}=[fill=white,draw,shape=isosceles triangle,shape border rotate=-90,isosceles triangle stretches=true,inner sep=0pt,minimum width=1.74cm,minimum height=7mm,yshift=-0.0mm]

\tikzstyle{wide dpoint}=[fill=white,doubled,draw,shape=isosceles triangle,shape border rotate=-90,isosceles triangle stretches=true,inner sep=0pt,minimum width=1.5cm,minimum height=6.12mm,yshift=-0.0mm]

\tikzstyle{tinypoint}=[regular polygon,regular polygon sides=3,draw,scale=0.55,inner sep=-0.15pt,minimum width=6mm,fill=white,regular polygon rotate=180]

\tikzstyle{white point}=[point]
\tikzstyle{white dpoint}=[dpoint]
\tikzstyle{green point}=[white point] 
\tikzstyle{white copoint}=[copoint]
\tikzstyle{gray point}=[point,fill=gray!40!white]
\tikzstyle{gray dpoint}=[gray point,doubled]
\tikzstyle{red point}=[gray point] 
\tikzstyle{gray copoint}=[copoint,fill=gray!40!white]
\tikzstyle{gray dcopoint}=[gray copoint,doubled]

\tikzstyle{white point guide}=[regular polygon,regular polygon sides=3,font=\scriptsize,draw,scale=0.65,inner sep=-0.5pt,minimum width=9mm,fill=white,regular polygon rotate=180]

\tikzstyle{black point}=[point,fill=black,font=\color{white}]
\tikzstyle{black copoint}=[copoint,fill=black,font=\color{white}]

\tikzstyle{tiny gray point}=[tinypoint,fill=gray!40!white]

\tikzstyle{diredge}=[->]
\tikzstyle{ddiredge}=[<->]
\tikzstyle{rdiredge}=[<-]
\tikzstyle{thickdiredge}=[->, very thick]
\tikzstyle{pointer edge}=[->,very thick,gray]
\tikzstyle{pointer edge part}=[very thick,gray]
\tikzstyle{dashed edge}=[dashed]
\tikzstyle{thick dashed edge}=[very thick,dashed]
\tikzstyle{thick gray dashed edge}=[thick dashed edge,gray!40]
\tikzstyle{thick map edge}=[very thick,|->]


\makeatletter
\newcommand{\boxshape}[3]{%
\pgfdeclareshape{#1}{
\inheritsavedanchors[from=rectangle] 
\inheritanchorborder[from=rectangle]
\inheritanchor[from=rectangle]{center}
\inheritanchor[from=rectangle]{north}
\inheritanchor[from=rectangle]{south}
\inheritanchor[from=rectangle]{west}
\inheritanchor[from=rectangle]{east}
\backgroundpath{
\southwest \pgf@xa=\pgf@x \pgf@ya=\pgf@y
\northeast \pgf@xb=\pgf@x \pgf@yb=\pgf@y

\@tempdima=#2
\@tempdimb=#3

\pgfpathmoveto{\pgfpoint{\pgf@xa - 5pt + \@tempdima}{\pgf@ya}}
\pgfpathlineto{\pgfpoint{\pgf@xa - 5pt - \@tempdima}{\pgf@yb}}
\pgfpathlineto{\pgfpoint{\pgf@xb + 5pt + \@tempdimb}{\pgf@yb}}
\pgfpathlineto{\pgfpoint{\pgf@xb + 5pt - \@tempdimb}{\pgf@ya}}
\pgfpathlineto{\pgfpoint{\pgf@xa - 5pt + \@tempdima}{\pgf@ya}}
\pgfpathclose
}
}}

\boxshape{NEbox}{0pt}{5pt}
\boxshape{SEbox}{0pt}{-5pt}
\boxshape{NWbox}{5pt}{0pt}
\boxshape{SWbox}{-5pt}{0pt}
\boxshape{EBox}{-3pt}{3pt}
\boxshape{WBox}{3pt}{-3pt}
\makeatother

\tikzstyle{cloud}=[shape=cloud,draw,minimum width=1.5cm,minimum height=1.5cm]

\tikzstyle{map}=[draw,shape=NEbox,inner sep=2pt,minimum height=6mm,fill=white]
\tikzstyle{dashedmap}=[draw,dashed,shape=NEbox,inner sep=2pt,minimum height=6mm,fill=white]
\tikzstyle{mapdag}=[draw,shape=SEbox,inner sep=2pt,minimum height=6mm,fill=white]
\tikzstyle{mapadj}=[draw,shape=SEbox,inner sep=2pt,minimum height=6mm,fill=white]
\tikzstyle{maptrans}=[draw,shape=SWbox,inner sep=2pt,minimum height=6mm,fill=white]
\tikzstyle{mapconj}=[draw,shape=NWbox,inner sep=2pt,minimum height=6mm,fill=white]

\tikzstyle{medium map}=[draw,shape=NEbox,inner sep=2pt,minimum height=6mm,fill=white,minimum width=7mm]
\tikzstyle{medium map dag}=[draw,shape=SEbox,inner sep=2pt,minimum height=6mm,fill=white,minimum width=7mm]
\tikzstyle{medium map adj}=[draw,shape=SEbox,inner sep=2pt,minimum height=6mm,fill=white,minimum width=7mm]
\tikzstyle{medium map trans}=[draw,shape=SWbox,inner sep=2pt,minimum height=6mm,fill=white,minimum width=7mm]
\tikzstyle{medium map conj}=[draw,shape=NWbox,inner sep=2pt,minimum height=6mm,fill=white,minimum width=7mm]
\tikzstyle{semilarge map}=[draw,shape=NEbox,inner sep=2pt,minimum height=6mm,fill=white,minimum width=9.5mm]
\tikzstyle{semilarge map trans}=[draw,shape=SWbox,inner sep=2pt,minimum height=6mm,fill=white,minimum width=9.5mm]
\tikzstyle{semilarge map adj}=[draw,shape=SEbox,inner sep=2pt,minimum height=6mm,fill=white,minimum width=9.5mm]
\tikzstyle{semilarge map dag}=[draw,shape=SEbox,inner sep=2pt,minimum height=6mm,fill=white,minimum width=9.5mm]
\tikzstyle{semilarge map conj}=[draw,shape=NWbox,inner sep=2pt,minimum height=6mm,fill=white,minimum width=9.5mm]
\tikzstyle{large map}=[draw,shape=NEbox,inner sep=2pt,minimum height=6mm,fill=white,minimum width=12mm]
\tikzstyle{large map conj}=[draw,shape=NWbox,inner sep=2pt,minimum height=6mm,fill=white,minimum width=12mm]
\tikzstyle{very large map}=[draw,shape=NEbox,inner sep=2pt,minimum height=6mm,fill=white,minimum width=17mm]

\tikzstyle{medium dmap}=[draw,doubled,shape=NEbox,inner sep=2pt,minimum height=6mm,fill=white,minimum width=7mm]
\tikzstyle{medium dmap dag}=[draw,doubled,shape=SEbox,inner sep=2pt,minimum height=6mm,fill=white,minimum width=7mm]
\tikzstyle{medium dmap adj}=[draw,doubled,shape=SEbox,inner sep=2pt,minimum height=6mm,fill=white,minimum width=7mm]
\tikzstyle{medium dmap trans}=[draw,doubled,shape=SWbox,inner sep=2pt,minimum height=6mm,fill=white,minimum width=7mm]
\tikzstyle{medium dmap conj}=[draw,doubled,shape=NWbox,inner sep=2pt,minimum height=6mm,fill=white,minimum width=7mm]
\tikzstyle{semilarge dmap}=[draw,doubled,shape=NEbox,inner sep=2pt,minimum height=6mm,fill=white,minimum width=9.5mm]
\tikzstyle{semilarge dmap trans}=[draw,doubled,shape=SWbox,inner sep=2pt,minimum height=6mm,fill=white,minimum width=9.5mm]
\tikzstyle{semilarge dmap adj}=[draw,doubled,shape=SEbox,inner sep=2pt,minimum height=6mm,fill=white,minimum width=9.5mm]
\tikzstyle{semilarge dmap dag}=[draw,doubled,shape=SEbox,inner sep=2pt,minimum height=6mm,fill=white,minimum width=9.5mm]
\tikzstyle{semilarge dmap conj}=[draw,doubled,shape=NWbox,inner sep=2pt,minimum height=6mm,fill=white,minimum width=9.5mm]
\tikzstyle{large dmap}=[draw,doubled,shape=NEbox,inner sep=2pt,minimum height=6mm,fill=white,minimum width=12mm]
\tikzstyle{large dmap conj}=[draw,doubled,shape=NWbox,inner sep=2pt,minimum height=6mm,fill=white,minimum width=12mm]
\tikzstyle{large dmap trans}=[draw,doubled,shape=SWbox,inner sep=2pt,minimum height=6mm,fill=white,minimum width=12mm]
\tikzstyle{large dmap adj}=[draw,doubled,shape=SEbox,inner sep=2pt,minimum height=6mm,fill=white,minimum width=12mm]
\tikzstyle{large dmap dag}=[draw,doubled,shape=SEbox,inner sep=2pt,minimum height=6mm,fill=white,minimum width=12mm]
\tikzstyle{very large dmap}=[draw,doubled,shape=NEbox,inner sep=2pt,minimum height=6mm,fill=white,minimum width=19.5mm]

\tikzstyle{muxbox}=[draw,shape=rectangle,minimum height=3mm,minimum width=3mm,fill=white]
\tikzstyle{dmuxbox}=[muxbox,doubled]

\tikzstyle{box}=[draw,shape=rectangle,inner sep=2pt,minimum height=6mm,minimum width=6mm,fill=white]
\tikzstyle{dbox}=[draw,doubled,shape=rectangle,inner sep=2pt,minimum height=6mm,minimum width=6mm,fill=white]
\tikzstyle{dmap}=[draw,doubled,shape=NEbox,inner sep=2pt,minimum height=6mm,fill=white]
\tikzstyle{dmapdag}=[draw,doubled,shape=SEbox,inner sep=2pt,minimum height=6mm,fill=white]
\tikzstyle{dmapadj}=[draw,doubled,shape=SEbox,inner sep=2pt,minimum height=6mm,fill=white]
\tikzstyle{dmaptrans}=[draw,doubled,shape=SWbox,inner sep=2pt,minimum height=6mm,fill=white]
\tikzstyle{dmapconj}=[draw,doubled,shape=NWbox,inner sep=2pt,minimum height=6mm,fill=white]

\tikzstyle{ddmap}=[draw,doubled,dashed,shape=NEbox,inner sep=2pt,minimum height=6mm,fill=white]
\tikzstyle{ddmapdag}=[draw,doubled,dashed,shape=SEbox,inner sep=2pt,minimum height=6mm,fill=white]
\tikzstyle{ddmapadj}=[draw,doubled,dashed,shape=SEbox,inner sep=2pt,minimum height=6mm,fill=white]
\tikzstyle{ddmaptrans}=[draw,doubled,dashed,shape=SWbox,inner sep=2pt,minimum height=6mm,fill=white]
\tikzstyle{ddmapconj}=[draw,doubled,dashed,shape=NWbox,inner sep=2pt,minimum height=6mm,fill=white]

\boxshape{sNEbox}{0pt}{3pt}
\boxshape{sSEbox}{0pt}{-3pt}
\boxshape{sNWbox}{3pt}{0pt}
\boxshape{sSWbox}{-3pt}{0pt}
\tikzstyle{smap}=[draw,shape=sNEbox,fill=white]
\tikzstyle{smapdag}=[draw,shape=sSEbox,fill=white]
\tikzstyle{smapadj}=[draw,shape=sSEbox,fill=white]
\tikzstyle{smaptrans}=[draw,shape=sSWbox,fill=white]
\tikzstyle{smapconj}=[draw,shape=sNWbox,fill=white]

\tikzstyle{dsmap}=[draw,dashed,shape=sNEbox,fill=white]
\tikzstyle{dsmapdag}=[draw,dashed,shape=sSEbox,fill=white]
\tikzstyle{dsmaptrans}=[draw,dashed,shape=sSWbox,fill=white]
\tikzstyle{dsmapconj}=[draw,dashed,shape=sNWbox,fill=white]

\boxshape{mNEbox}{0pt}{10pt}
\boxshape{mSEbox}{0pt}{-10pt}
\boxshape{mNWbox}{10pt}{0pt}
\boxshape{mSWbox}{-10pt}{0pt}
\tikzstyle{mmap}=[draw,shape=mNEbox]
\tikzstyle{mmapdag}=[draw,shape=mSEbox]
\tikzstyle{mmaptrans}=[draw,shape=mSWbox]
\tikzstyle{mmapconj}=[draw,shape=mNWbox]

\tikzstyle{mmapgray}=[draw,fill=gray!40!white,shape=mNEbox]
\tikzstyle{smapgray}=[draw,fill=gray!40!white,shape=sNEbox]

\makeatletter

\pgfdeclareshape{cornerpoint}{
\inheritsavedanchors[from=rectangle] 
\inheritanchorborder[from=rectangle]
\inheritanchor[from=rectangle]{center}
\inheritanchor[from=rectangle]{north}
\inheritanchor[from=rectangle]{south}
\inheritanchor[from=rectangle]{west}
\inheritanchor[from=rectangle]{east}
\backgroundpath{
\southwest \pgf@xa=\pgf@x \pgf@ya=\pgf@y
\northeast \pgf@xb=\pgf@x \pgf@yb=\pgf@y

\pgfmathsetmacro{\pgf@shorten@left}{\pgfkeysvalueof{/tikz/shorten left}}
\pgfmathsetmacro{\pgf@shorten@right}{\pgfkeysvalueof{/tikz/shorten right}}

\pgfpathmoveto{\pgfpoint{0.5 * (\pgf@xa + \pgf@xb)}{\pgf@ya - 5pt}}
\pgfpathlineto{\pgfpoint{\pgf@xa - 8pt + \pgf@shorten@left}{\pgf@yb - 1.5 * \pgf@shorten@left}}
\pgfpathlineto{\pgfpoint{\pgf@xa - 8pt + \pgf@shorten@left}{\pgf@yb}}
\pgfpathlineto{\pgfpoint{\pgf@xb + 8pt - \pgf@shorten@right}{\pgf@yb}}
\pgfpathlineto{\pgfpoint{\pgf@xb + 8pt - \pgf@shorten@right}{\pgf@yb - 1.5 * \pgf@shorten@right}}
\pgfpathclose
}
}

\pgfdeclareshape{cornercopoint}{
\inheritsavedanchors[from=rectangle] 
\inheritanchorborder[from=rectangle]
\inheritanchor[from=rectangle]{center}
\inheritanchor[from=rectangle]{north}
\inheritanchor[from=rectangle]{south}
\inheritanchor[from=rectangle]{west}
\inheritanchor[from=rectangle]{east}
\backgroundpath{
\southwest \pgf@xa=\pgf@x \pgf@ya=\pgf@y
\northeast \pgf@xb=\pgf@x \pgf@yb=\pgf@y

\pgfmathsetmacro{\pgf@shorten@left}{\pgfkeysvalueof{/tikz/shorten left}}
\pgfmathsetmacro{\pgf@shorten@right}{\pgfkeysvalueof{/tikz/shorten right}}

\pgfpathmoveto{\pgfpoint{0.5 * (\pgf@xa + \pgf@xb)}{\pgf@yb + 5pt}}
\pgfpathlineto{\pgfpoint{\pgf@xa - 8pt + \pgf@shorten@left}{\pgf@ya + 1.5 * \pgf@shorten@left}}
\pgfpathlineto{\pgfpoint{\pgf@xa - 8pt + \pgf@shorten@left}{\pgf@ya}}
\pgfpathlineto{\pgfpoint{\pgf@xb + 8pt - \pgf@shorten@right}{\pgf@ya}}
\pgfpathlineto{\pgfpoint{\pgf@xb + 8pt - \pgf@shorten@right}{\pgf@ya + 1.5 * \pgf@shorten@right}}
\pgfpathclose
}
}

\makeatother

\pgfkeyssetvalue{/tikz/shorten left}{0pt}
\pgfkeyssetvalue{/tikz/shorten right}{0pt}

\tikzstyle{kpoint common}=[draw,fill=white,inner sep=1pt,minimum height=4mm]
\tikzstyle{kpoint sc}=[shape=cornerpoint,kpoint common]
\tikzstyle{kpoint adjoint sc}=[shape=cornercopoint,kpoint common]
\tikzstyle{kpoint}=[shape=cornerpoint,shorten left=5pt,kpoint common]
\tikzstyle{kpoint adjoint}=[shape=cornercopoint,shorten left=5pt,kpoint common]
\tikzstyle{kpoint conjugate}=[shape=cornerpoint,shorten right=5pt,kpoint common]
\tikzstyle{kpoint transpose}=[shape=cornercopoint,shorten right=5pt,kpoint common]
\tikzstyle{kpoint symm}=[shape=cornerpoint,shorten left=5pt,shorten right=5pt,kpoint common]

\tikzstyle{wide kpoint sc}=[shape=cornerpoint,kpoint common, minimum width=1 cm]
\tikzstyle{wide kpointdag sc}=[shape=cornercopoint,kpoint common, minimum width=1 cm]

\tikzstyle{black kpoint}=[shape=cornerpoint,shorten left=5pt,kpoint common,fill=black,font=\color{white}]

\tikzstyle{black kpoint sm}=[shape=cornerpoint,shorten left=5pt,kpoint common,fill=black,font=\color{white},scale=0.75]

\tikzstyle{black kpoint adjoint}=[shape=cornercopoint,shorten left=5pt,kpoint common,fill=black,font=\color{white}]
\tikzstyle{black kpointadj}=[shape=cornercopoint,shorten left=5pt,kpoint common,fill=black,font=\color{white}]

\tikzstyle{black kpointadj sm}=[shape=cornercopoint,shorten left=5pt,kpoint common,fill=black,font=\color{white},scale=0.75]

\tikzstyle{black dkpoint}=[shape=cornerpoint,shorten left=5pt,kpoint common,fill=black, doubled,font=\color{white}]
\tikzstyle{black dkpoint adjoint}=[shape=cornercopoint,shorten left=5pt,kpoint common,fill=black, doubled,font=\color{white}]
\tikzstyle{black dkpointadj}=[shape=cornercopoint,shorten left=5pt,kpoint common,fill=black, doubled,font=\color{white}]

\tikzstyle{black dkpoint sm}=[shape=cornerpoint,shorten left=5pt,kpoint common,fill=black, doubled,font=\color{white},scale=0.75]
\tikzstyle{black dkpointadj sm}=[shape=cornercopoint,shorten left=5pt,kpoint common,fill=black, doubled,font=\color{white},scale=0.75]

\tikzstyle{kpointdag}=[kpoint adjoint]
\tikzstyle{kpointadj}=[kpoint adjoint]
\tikzstyle{kpointconj}=[kpoint conjugate]
\tikzstyle{kpointtrans}=[kpoint transpose]

\tikzstyle{big kpoint}=[kpoint, minimum width=1.2 cm, minimum height=8mm, inner sep=4pt, text depth=3mm]

\tikzstyle{wide kpoint}=[kpoint, minimum width=1 cm, inner sep=2pt]
\tikzstyle{wide kpointdag}=[kpointdag, minimum width=1 cm, inner sep=2pt]
\tikzstyle{wide kpointconj}=[kpointconj, minimum width=1 cm, inner sep=2pt]
\tikzstyle{wide kpointtrans}=[kpointtrans, minimum width=1 cm, inner sep=2pt]

\tikzstyle{wider kpoint}=[kpoint, minimum width=1.25 cm, inner sep=2pt]
\tikzstyle{wider kpointdag}=[kpointdag, minimum width=1.25 cm, inner sep=2pt]
\tikzstyle{wider kpointconj}=[kpointconj, minimum width=1.25 cm, inner sep=2pt]
\tikzstyle{wider kpointtrans}=[kpointtrans, minimum width=1.25 cm, inner sep=2pt]

\tikzstyle{gray kpoint}=[kpoint,fill=gray!50!white]
\tikzstyle{gray kpointdag}=[kpointdag,fill=gray!50!white]
\tikzstyle{gray kpointadj}=[kpointadj,fill=gray!50!white]
\tikzstyle{gray kpointconj}=[kpointconj,fill=gray!50!white]
\tikzstyle{gray kpointtrans}=[kpointtrans,fill=gray!50!white]

\tikzstyle{gray dkpoint}=[kpoint,fill=gray!50!white,doubled]
\tikzstyle{gray dkpointdag}=[kpointdag,fill=gray!50!white,doubled]
\tikzstyle{gray dkpointadj}=[kpointadj,fill=gray!50!white,doubled]
\tikzstyle{gray dkpointconj}=[kpointconj,fill=gray!50!white,doubled]
\tikzstyle{gray dkpointtrans}=[kpointtrans,fill=gray!50!white,doubled]

\tikzstyle{white label}=[draw,fill=white,rectangle,inner sep=0.7 mm]
\tikzstyle{gray label}=[draw,fill=gray!50!white,rectangle,inner sep=0.7 mm]
\tikzstyle{black label}=[draw,fill=black,rectangle,inner sep=0.7 mm]

\tikzstyle{dkpoint}=[kpoint,doubled]
\tikzstyle{wide dkpoint}=[wide kpoint,doubled]
\tikzstyle{dkpointdag}=[kpoint adjoint,doubled]
\tikzstyle{wide dkpointdag}=[wide kpointdag,doubled]
\tikzstyle{dkcopoint}=[kpoint adjoint,doubled]
\tikzstyle{dkpointadj}=[kpoint adjoint,doubled]
\tikzstyle{dkpointconj}=[kpoint conjugate,doubled]
\tikzstyle{dkpointtrans}=[kpoint transpose,doubled]

\tikzstyle{kscalar}=[kpoint common, shape=EBox, inner xsep=-1pt, inner ysep=3pt,font=\small]
\tikzstyle{kscalarconj}=[kpoint common, shape=WBox, inner xsep=-1pt, inner ysep=3pt,font=\small]

\tikzstyle{spekpoint}=[kpoint sc,minimum height=5mm,inner sep=3pt]
\tikzstyle{spekcopoint}=[kpoint adjoint sc,minimum height=5mm,inner sep=3pt]

\tikzstyle{dspekpoint}=[spekpoint,doubled]
\tikzstyle{dspekcopoint}=[spekcopoint,doubled]


 \tikzstyle{upground}=[circuit ee IEC,thick,ground,rotate=90,scale=2.5]
 \tikzstyle{downground}=[circuit ee IEC,thick,ground,rotate=-90,scale=2.5]
 \tikzstyle{bigground}=[regular polygon,regular polygon sides=3,draw=gray,scale=0.50,inner sep=-0.5pt,minimum width=10mm,fill=gray]


\tikzstyle{arrs}=[-latex,font=\small,auto]
\tikzstyle{arrow plain}=[arrs]
\tikzstyle{arrow dashed}=[dashed,arrs]
\tikzstyle{arrow bold}=[very thick,arrs]
\tikzstyle{arrow hide}=[draw=white!0,-]
\tikzstyle{arrow reverse}=[latex-]
\tikzstyle{cdnode}=[]


\tikzstyle{discarding}=[fill=white, draw=black, shape=circle, style=upground]
\tikzstyle{smalldiscarding}=[fill=white, draw=black, style=upground, scale=0.5]
\tikzstyle{backdiscard}=[fill=white, draw=black, shape=circle, style=downground, scale=0.5]
\tikzstyle{smallbackdiscard}=[fill=white, draw=black, shape=circle, style=downground, scale=0.5]
\tikzstyle{state}=[fill=white, draw=black, style=triang, tikzit shape=rectangle]
\tikzstyle{kstate}=[fill=white, draw=black, style=kpoint, tikzit shape=rectangle]
\tikzstyle{kstateconj}=[fill=white, draw=black, style=kpoint conjugate, tikzit shape=rectangle]
\tikzstyle{kstateBIG}=[fill=white, draw=black, style=big kpoint, tikzit shape=rectangle]
\tikzstyle{effect}=[fill=white, draw=black, style=triangdag]
\tikzstyle{keffect}=[fill=white, draw=black, style=kpoint adjoint]
\tikzstyle{keffectconj}=[fill=white, draw=black, style=kpoint transpose]
\tikzstyle{morphdag}=[style=mapdag]
\tikzstyle{morph}=[style=hadamard]
\tikzstyle{WIDEmorph}=[style=hadamard, minimum width=14mm]
\tikzstyle{morphtrans}=[style=maptrans]
\tikzstyle{morphconj}=[style=mapconj]
\tikzstyle{CPMmorph}=[style=dmap]
\tikzstyle{CPMmorphconj}=[style=dmapconj]
\tikzstyle{CPMmorphdag}=[style=dmapdag]
\tikzstyle{CPMmorphtrans}=[style=dmaptrans]
\tikzstyle{CPMstate}=[fill=white, draw=black, style=triang, doubled]
\tikzstyle{CPMstateBIG}=[fill=white, draw=black, style={triang_lesssep}, doubled]
\tikzstyle{CPMkstate}=[fill=white, draw=black, style=kpoint, tikzit shape=rectangle, doubled]
\tikzstyle{CPMkstateconj}=[fill=white, draw=black, style=kpoint conjugate, tikzit shape=rectangle, doubled]
\tikzstyle{CPMkstateBIG}=[fill=white, draw=black, style=big kpoint, tikzit shape=rectangle, doubled]
\tikzstyle{CPMkeffect}=[fill=white, draw=black, style=kpoint adjoint, doubled]
\tikzstyle{CPMkeffectconj}=[fill=white, draw=black, style=kpoint transpose, doubled]
\tikzstyle{UHfB}=[fill=white, draw=black, style=triangdag, doubled, inner sep=-2pt]
\tikzstyle{leak}=[style=tinypoint, regular polygon rotate=-90]
\tikzstyle{leakfill}=[style=tinypoint, regular polygon rotate=-90, fill=black]
\tikzstyle{Z}=[style=dot, fill=green]
\tikzstyle{X}=[style=dot, fill=red]
\tikzstyle{black_dot}=[style=dot, fill=black]
\tikzstyle{white_dot}=[style=dot, fill=white]
\tikzstyle{qblack_dot}=[style=ddot, fill=black]
\tikzstyle{qwhite_dot}=[style=ddot, fill=white]
\tikzstyle{whitephase}=[style=wphase dot, fill=white]
\tikzstyle{qredphase}=[style=phase dot, fill=red]
\tikzstyle{qgreenphase}=[style=phase dot, fill=green]
\tikzstyle{had}=[style=hadamard, doubled]
\tikzstyle{box}=[style=hadamard]
\tikzstyle{classhad}=[style=hadamard]
\tikzstyle{antipode}=[style=anti]

\tikzstyle{dottededge}=[-, dotted]
\tikzstyle{double edge}=[-, style=doubled, draw=black, tikzit draw={rgb,255: red,234; green,209; blue,255}]
\tikzstyle{new edge style 0}=[<-]
\tikzstyle{new edge style 1}=[-, draw={rgb,255: red,234; green,209; blue,255}, fill={rgb,255: red,234; green,209; blue,255}]
\tikzstyle{new edge style 1b}=[-, draw={rgb,255: red,208; green,218; blue,216}, fill={rgb,255: red,208; green,218; blue,216}]
\tikzstyle{new edge style 2}=[-, draw={rgb,255: red,14; green,188; blue,83}]
\tikzstyle{new edge style 3}=[<-, draw={rgb,255: red,234; green,209; blue,255}]
\tikzstyle{new edge style 4}=[<-, draw={rgb,255: red,0; green,128; blue,128}]
\tikzstyle{new edge style 5}=[-, draw={rgb,255: red,214; green,110; blue,62}]
\tikzstyle{new edge style 6}=[-, draw={rgb,255: red,174; green,20; blue,174}]

\usetikzlibrary{calc, positioning, shapes.geometric}
\usetikzlibrary{
	arrows,
	shapes,
	decorations,
	intersections,
	backgrounds,
	positioning,
	circuits.ee.IEC
	}


\newcommand{\tikzfigscale}[2]{\scalebox{#1}{\tikzfig{#2}}}

%
\def\be{\begin{equation}}
\def\ee{\end{equation}}
\def\ba{\begin{align}}
\def\ea{\end{align}}

%








\usepackage{bbm}






\newtheorem{definition}{Definition}
\newtheorem{theorem}{Theorem}
\newtheorem*{theorem*}{Theorem}
\newtheorem{corollary}{Corollary}
\newtheorem{lemma}{Lemma}

\newtheorem{example}{Example}

\tikzstyle{every picture}=[baseline=-0.25em,shorten <=-0.1pt]
\tikzstyle{dotpic}=[scale=0.5]
\tikzstyle{braceedge}=[decorate,decoration={brace,amplitude=1mm,raise=-1mm}]
\tikzstyle{dot}=[inner sep=0.7mm,minimum width=0pt,minimum height=0pt,fill=black,draw=black,shape=circle]
\tikzstyle{small dot}=[inner sep=0.1mm,minimum width=0pt,minimum height=0pt,fill=black,draw=black,shape=circle]
\tikzstyle{black dot}=[dot]
\tikzstyle{white dot}=[dot,fill=white]
\tikzstyle{gray dot}=[dot,fill=gray!40!white]
\tikzstyle{alt white dot}=[white dot,label={[xshift=3mm,yshift=-0.05mm,font=\tiny]left:$*$}]
\tikzstyle{alt gray dot}=[gray dot,label={[xshift=3mm,yshift=-0.05mm,font=\tiny]left:$*$}]
\tikzstyle{white norm}=[rectangle,fill=white,draw=black,minimum height=2mm,minimum width=2mm,inner sep=0pt,font=\small]
\tikzstyle{gray norm}=[white norm,fill=gray!40!white]
\tikzstyle{square box}=[rectangle,fill=white,draw=black,minimum height=5mm,minimum width=5mm,font=\small]
\tikzstyle{square gray box}=[rectangle,fill=gray!30,draw=black,minimum height=6mm,minimum width=6mm]
\tikzstyle{diredge}=[->]
\tikzstyle{rdiredge}=[<-]
\tikzstyle{dashed edge}=[dashed]
\tikzstyle{cross}=[preaction={draw=white, -, line width=3pt}]


\newcommand{\dotdualmult}[1]{%
\!\begin{tikzpicture}[dotpic]
    \node [style=white dot] (0) at (0, 0.3) {};
    \node [style=none] (1) at (-0.5, -0.4) {};
    \node [style=none] (2) at (0.5, -0.4) {};
    \node [style=none] (3) at (0, 0.8) {};
    \draw [style=diredge] (3.center) to (0);
    \draw [style=diredge, in=15, out=-30, looseness=1.50] (0) to (1.center);
    \draw [style=diredge, in=165, out=-150, looseness=1.50] (0) to (2.center);
\end{tikzpicture}\!}

\newcommand{\dotconorm}[1]{%
\,\begin{tikzpicture}[dotpic,yshift=0.4mm]
    \node [style=none] (0) at (0, -0.4) {};
    \node [style=white norm] (1) at (0, 0.1) {};
    \node [style=none] (2) at (0, 0.5) {};
    \draw [style=diredge] (1) to (0.center);
    \draw (2.center) to (1);
\end{tikzpicture}\,}

\newcommand{\astfootnote}[1]{
\let\oldthefootnote=\thefootnote
\setcounter{footnote}{0}
\renewcommand{\thefootnote}{\fnsymbol{footnote}}
\footnote{#1}
\let\thefootnote=\oldthefootnote
}

\title{Quantum Supermaps are Characterized by Locality}

\author{Matt Wilson}
\email{matthew.wilson@centralesupelec.fr}
\affiliation{Quantum Group, Department of Computer Science, University of Oxford}
\affiliation{HKU-Oxford Joint Laboratory for Quantum Information and Computation}
\affiliation{Programming Principles Logic and Verification Group,
University College London
London, UK}
\affiliation{Universit\'{e} Paris-Saclay, CNRS, ENS Paris-Saclay, Inria, CentraleSup\'{e}lec, Laboratoire M\'{e}thodes Formelles}

\author{Giulio Chiribella}
\email{giulio.chiribella@cs.ox.ac.uk}
\affiliation{Quantum Group, Department of Computer Science, University of Oxford}
\affiliation{HKU-Oxford Joint Laboratory for Quantum Information and Computation}
\affiliation{QICI Quantum Information and Computation Initiative, School of Computing and Data Science, The University of Hong Kong}
\affiliation{Perimeter Institute for Theoretical Physics, Waterloo, Ontario N2L 2Y5, Canada }

\author{Aleks Kissinger}
\email{aleks.kissinger@cs.ox.ac.uk}
\affiliation{Quantum Group, Department of Computer Science, University of Oxford}

\usepackage{hyperref}
\makeatletter
\def\HyPsd@expand@utfvii{}
\makeatother

\begin{document} \emergencystretch 3em

\maketitle

\begin{abstract}
We provide a new characterisation of quantum supermaps in terms of an axiom that refers only to sequential and parallel composition. Consequently, we generalize quantum supermaps to arbitrary monoidal categories and operational probabilistic theories. We do so by providing a simple definition of \textit{locally-applicable transformation} on a monoidal category. The definition can be rephrased in the language of category theory using the principle of naturality, and can be given an intuitive diagrammatic representation in terms of which all proofs are presented. In our main technical contribution, we use this diagrammatic representation to show that locally-applicable transformations on quantum channels are in one-to-one correspondence with deterministic quantum supermaps. This alternative characterization of quantum supermaps is proven to work for more general multiple-input supermaps such as the quantum switch and on arbitrary normal convex spaces of quantum channels such as those defined by satisfaction of signaling constraints. 
\end{abstract}

\section{Introduction}


Beyond the framework of standard quantum theory, in which states representing physical degrees of freedom incur changes over time, there is the framework of higher order quantum theory \cite{Chiribella2008TransformingSupermaps, Chiribella2008QuantumArchitecture, Chiribella2013QuantumStructure, Chiribella2009TheoreticalNetworks, Chiribella2010NormalOperations,  Bisio2019TheoreticalTheory, Kissinger2019AStructure, wilson2022Aprocesses, Perinotti2017CausalComputations}, in which dynamics themselves are transformed by higher order operations termed supermaps \cite{Chiribella2008TransformingSupermaps}. Quantum supermaps were originally defined to formalise the notion of a higher order map $S$ which may be applied to part of any two-input/two-output process $\phi$, as in the following intuitive picture: \[  \tikzfig{figs/informal_cp_3}.   \]  Such higher order operations were later generalised to act on constrained spaces, and were applied in the study of quantum information to analyze protocols in which quantum processes are treated as information-theoretic resources \cite{Kristjansson2020ResourceCommunication, GourDynamicalResources, Gour2021EntanglementChannel, Ebler2018EnhancedOrder, Procopio2019CommunicationScenariob, Procopio2020SendingOrders, Chiribella2021IndefiniteChannels, Chiribella2020QuantumOrders, Wilson2020ASwitches, Quintino2019ProbabilisticOperations, Quintino2019ReversingOperations, Dong2021Success-or-Draw:Computation, Guerin2016ExponentialCommunication, Chiribella2012PerfectStructures, Araujo2014ComputationalGates, Feix2015QuantumResource}. On top of providing a framework for formalising such protocols, supermaps are broad enough to incorporate higher order processes beyond those which can be interpreted as circuits with open holes \cite{Chiribella2008QuantumArchitecture, Roman2020OpenCalculus, Roman2020CombFeedback, Boisseau2022CorneringOptics, Hedges2017CoherenceGames, Riley2018CategoriesOptics, Ghani2016CompositionalTheory, Pollock2015Non-MarkovianCharacterisation, Hefford2022CoendCombs}, the canonical example of such a supermap being the quantum switch \cite{Chiribella2009QuantumStructure}. Acting on the space of non-signaling channels, the quantum switch uses a qubit to control the order in which wires are joined and as depicted in the following inuitive picture \[   \tikzfig{figs/infromal_switch1} \quad \bigoplus \quad \tikzfig{figs/infromal_switch2}, \] is often interpreted as a superposition of causal structures \cite{Chiribella2009QuantumStructure}. By being broad enough to incorporate examples such as the quantum switch, the supermap framework provides a way to study quantum causal structure as a resource in quantum information processing protocols \cite{Ebler2018EnhancedOrder, Procopio2019CommunicationScenariob, Procopio2020SendingOrders, Chiribella2021IndefiniteChannels, Chiribella2020QuantumOrders, Wilson2020ASwitches, Araujo2014ComputationalGates, Guerin2016ExponentialCommunication, Chiribella2012PerfectStructures, Araujo2017QuantumStructures, Renner2021ReassessingGates, Felce2020QuantumOrder}, and furthermore may find application in the study of quantum gravity \cite{Hardy1992QuantumTheories}. Indeed, the dependence of causal structure on mass distribution in general relativity and the possibility to superpose mass distributions in quantum theory, suggests the possibility of naturally occurring non-classical, or indefinite, causal structures \cite{Hardy2006TowardsStructure, Oreshkov2012QuantumOrder, Baumeler2013PerfectOrder, Baumeler2015TheOrder, Chiribella2009QuantumStructure, Castro-Ruiz2018DynamicsStructures, Baumann2021NoncausalCircuits, Costa2022AOrders, Silva2017ConnectingStates, Vilasini2022EmbeddingMatrices, Ormrod2022CausalSwitch, Felce2021IndefiniteTime} such as those present in the quantum switch. To formalise this intuition supermaps can be interpreted as modelling global spacetime structures as maps from interventions chosen in local quantum laboratories to probabilities \cite{Oreshkov2012QuantumOrder}. In this context, supermaps are typically expressed in the Choi representation, and are referred to as process matrices \cite{Oreshkov2012QuantumOrder}.  This abstract approach to modeling global spacetime structures allows to study causal structures beyond those which are even switch-like. These supermaps can sometimes  break causal inequalities \cite{Oreshkov2012QuantumOrder, Oreshkov2016CausalProcesses, Wechs2021QuantumOrder, Purves2021QuantumInequalityb, tein_inequalities, pinzani_topology_geometry}, which are an analogue of Bell inequalities for causal order.

The motivating picture of a supermap appears to only reference the possibility to draw processes in quantum theory as boxes with multiple inputs and outputs, and yet, current constructions and definitions of quantum supermaps \cite{Chiribella2008TransformingSupermaps, Kissinger2019AStructure, Oreshkov2012QuantumOrder} rely on additional mathematical structures. These structures are typically those of categories into which deterministic quantum theories embed, such as compact closure/Choi-Jamiolkowski isomorphism \cite{Choi1975CompletelyMatrices, Chiribella2008TransformingSupermaps, Kissinger2019AStructure} and convexity/coarse graining \cite{Oreshkov2012QuantumOrder}. This suggests that supermaps are in need of a more principled axiomatisation,  so that the conceptual grounds on which supermaps are understood match the formal grounds on which they are defined, and so that the entire framework of higher order quantum theory may be more easily lifted to arbitrary physical theories.

In this paper we provide such an axiomatisation, by building on the process-theoretic/categorical approach to quantum theory \cite{Coecke2017PicturingReasoning, Heunen2019CategoriesTheory, Abramsky2004AProtocols, Coecke2006KindergartenNotes, Coecke2010QuantumPicturalism, Selby2018ReconstructingPostulates} we discover that supermaps can indeed be axiomatised purely at the process-theoretic level, that is, with respect to symmetric monoidal structure \cite{Lane1971CategoriesMathematician}. From this result it follows that all of the operational physical principles such as compatibility with coarse-graining, convex combinations, linearity, and tensor extensions used in usual definitions of quantum supermaps can be viewed as consequences a simple principle: \[\textit{A supermap is the kind of function that can be applied locally}.\]The formalization of local applicability of supermaps that we use as our axiom can be understood in three consecutive steps. First, supermaps are functions on processes: \[ \tikzfig{figs/informal_cp_o} . \] Second, supermaps are equipped with extensions to functions on all two-input/two-output processes: \[ \tikzfig{figs/informal_cp_3} . \]Third, localization is enforced by requiring that such functions commute with actions on their extensions: \[ \tikzfig{figs/informal_cp_4aa} \quad = \quad \tikzfig{figs/informal_cp_4ab} .   \] 
In other words, supermaps are maps that can be applied locally to parts of processes. In diagrams such as those above, the wires connected to $S$ are those which $S$ is interpreted as having acted on, the remaining wires represent the potential environments that might be present when the supermap is implemented. We model these three principles for supermaps using a formal definition of \textit{locally-applicable transformation}. The definition can be stated in entirely diagrammatic terms, and this diagrammatic phrasing is used throughout as a toolbox for proving our main theorem. 

\begin{theorem*} Quantum supermaps are in one-to-one correspondence with locally-applicable transformations.
\end{theorem*}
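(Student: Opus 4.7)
The plan is to prove the correspondence in two directions, with the content of the theorem concentrated almost entirely in one of them. The forward direction --- that every quantum supermap induces a locally-applicable transformation --- is essentially a matter of unpacking definitions: a standard quantum supermap $S$ is by construction part of a coherent family of tensor extensions $S \otimes \mathrm{id}_E$ to arbitrary ancillary systems $E$, and the locality requirement (that these extensions commute with pre- and post-composition on the extension wire) reduces to the interchange law of the symmetric monoidal category of quantum channels. I would dispatch this direction in a short paragraph after fixing notation for the spaces $\mathrm{Chan}(A,B)$.

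For the converse, the strategy is to exploit the Choi--Jamiolkowski isomorphism in its diagrammatic form, where every channel $c \colon A \to B$ can be expressed via cups on $A \otimes A$ composed with a ``plug-in'' of $c$ on one wire. The plan is to feed the given extension $\phi_A$ a carefully chosen universal bipartite channel --- essentially an identity wire together with the cup used to turn channels into states --- and thereby extract from $\phi$ a single operator $J_\phi$ that will play the role of a Choi representative. One then defines a candidate supermap $S_\phi$ as the map sending $c$ to the channel obtained by pairing its Choi state against $J_\phi$ in the standard way, after which the remaining task is to verify that $S_\phi$ and its tensor extensions coincide with $\phi$ and $\phi_E$ on all (bipartite) channels.

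The central technical step, and the one where the locality axiom bears all its weight, is to verify the reconstruction identity $\phi_E(\Phi) = (S_\phi \otimes \mathrm{id}_E)(\Phi)$ for \emph{every} bipartite channel $\Phi$ and not merely for product-form channels $c \otimes \mathrm{id}$. The argument I envisage is entirely diagrammatic: any bipartite $\Phi$ can be rewritten as the composite of the universal diagram used to extract $J_\phi$ with suitable local operations acting only on the extension wire, and the naturality equation then permits these extension-wire operations to be slid past $\phi_E$, collapsing the computation to an evaluation against $J_\phi$. I expect this sliding argument to be the main obstacle, since naturality of $\phi_E$ holds only in operations on the extension wire, so care is needed in choosing the universal diagram so that all nontrivial manipulations of the original channel can be routed through that wire. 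Once this identity is in place, linearity of $S_\phi$ follows from the bilinearity of the Choi pairing, complete-positivity preservation follows from the fact that $\phi_E$ sends channels to channels when applied to Stinespring-dilated inputs, and the two assignments $S \mapsto \phi_S$ and $\phi \mapsto S_\phi$ are then verified to be mutual inverses, giving the claimed bijection.
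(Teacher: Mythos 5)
Your overall architecture matches the paper's: the forward direction is indeed a one-paragraph application of the interchange law, and the converse does proceed by feeding the given family a single ``universal'' channel and reading off a Choi operator (the paper uses the swap channel $\texttt{SWAP}_{A,A'}$, which lies in the dilation extension precisely because the relevant sets contain all discard--prepare channels, and then composes with Bell cups and caps taken in $\mathbf{CP}$). However, your central technical step has a genuine gap. You propose to verify $\phi_E(\Phi) = (S_\phi\otimes\mathrm{id}_E)(\Phi)$ for arbitrary bipartite $\Phi$ by writing $\Phi$ as the universal diagram composed with operations on the extension wire and then sliding those operations past $\phi_E$ by naturality. But the operations needed to reconstruct a general $\Phi$ from the universal channel --- cups, caps, and Choi-operator effects --- are not trace-preserving, hence not morphisms of $\mathbf{QC}$, and the locality axiom only licenses sliding \emph{channels} along the extension wire. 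A locally-applicable transformation is a priori just a family of set-functions; it carries no linear structure, so one cannot evaluate it against non-channel CP maps without further argument. Your own caveat that ``care is needed in choosing the universal diagram'' gestures at this, but the obstruction is not a matter of diagram choice: no choice of universal channel makes all the required routing maps trace-preserving.

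The paper closes exactly this gap with two lemmas that your proposal omits and that carry most of the weight of the theorem. First, a \emph{convex-linearity} lemma: any locally-applicable transformation between convex sets preserves convex combinations, proved by exhibiting a coherently controlled channel $\Phi$ (using a pair of perfectly distinguishable control states) inside the dilation extension and applying naturality to the control wire. Second, an \emph{extension to the operational closure} lemma: $S$ extends uniquely and consistently to all CP maps obtained by applying arbitrary CP states and effects to auxiliary wires; well-posedness is not automatic and is established by a difference argument involving auxiliary trace-preserving channels $\Sigma,\Sigma'$ built from orthogonal effects, together with rescaling of subnormalised states. Only after these two results can one legitimately push the Bell cup and cap through the transformation and invoke the snake equation. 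Your outline as written would stall at the sliding step; to repair it you would need to supply (or rediscover) these two lemmas, at which point the argument becomes essentially the paper's.
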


Although we do not do so in the main text, many of the definitions and results can be neatly phrased in the language of category theory \cite{Lane1971CategoriesMathematician}. First, the presented correspondence between locally-applicable transformations and standard-definition quantum supermaps is  compositional, meaning that it can be phrased as an equivalence of categories \cite{Lane1971CategoriesMathematician}. Second, in the case of locally-well pointed theories such as quantum and classical information theory, the definition of locally-applicable transformation is simply that of a natural transformation between functors. As a consequence of this observation our main theorem can be summarised concisely as a categorical characterisation of supermaps on quantum theory: \[\textit{Quantum supermaps are equivalent to natural transformations.}\]Specialised instances of such natural transformations can in fact already be observed as playing a similar role to supermaps in the field of monoidal category theory, for instance in the definition of a traced monoidal category \cite{Joyal2021TracedCategories, Hasegawa97recursionfrom} or similarly in the formalisation of closed time-like curves \cite{Pinzani2019CategoricalTravel} where such curves are even termed super-operators. 

The results presented in this paper show that the $2^{nd}$-order of higher order quantum theory can be understood as a purely categorical/circuit-theoretic construction over $1^{st}$-order quantum theory. A first key open question is whether similar methods can be used to reconstruct all of higher order quantum theory \cite{Bisio2019TheoreticalTheory, Kissinger2019AStructure, Chiribella2008TransformingSupermaps, Oreshkov2012QuantumOrder} from purely compositional axioms. A second key open question, is whether this new axiomatization allows for easier generalisation of higher order quantum transformations to infinite dimensions and to non-monoidal physical frameworks. For instance supermaps have yet to be formalized within decompositional approaches to physics \cite{Gogioso2019ASpace, Arrighi2021QuantumTheory} or algebraic quantum field theories \cite{Haag1964AnTheory} where localizability of standard quantum transformations is of primary importance \cite{roberts_algebra, Doplichera2010TheChallenges, Brunetti2001ThePhysics, Strocchi2004RelativisticTheory} and infinite dimensionality is common. Ultimately, such further developments would bring the recently developed higher-order approach to the study of quantum causal structure closer to being applicable to theories of quantum gravity, where such structures have been predicted to play a key role \cite{Hardy2006TowardsStructure}.

\section{Preliminary Material}
To find a purely circuit-theoretic construction for supermaps we will need a notion of circuit-theory on which such a construction can be phrased. Here we present the model of circuit-theories as symmetric monoidal categories. We then review quantum supermaps, presenting them in terms of the graphical language for compact closed categories which captures the essence of the Choi-Jamiolkowski isomorphism \cite{Choi1975CompletelyMatrices}. Throughout this paper we use purple-shaded boxes to represent standard quantum processes, and white boxes to indicate parts of a diagram which should be interpreted as higher-order maps to be applied to standard processes.

\paragraph{Symmetric Monoidal Categories}
To ease the following presentation we present symmetric monoidal categories which are ``strict", meaning that equalities between objects are written in place of natural isomorphisms. For a formal treatment of non-strict symmetric monoidal categories the reader is referred to \cite{Lane1971CategoriesMathematician}, where it is noted that every symmetric monoidal category is suitably equivalent to a strict one. We from now on omit use of the word strict, leaving it as implicit for the remainder of the paper. Our working examples of symmetric monoidal categories will be the categories $\mathbf{QC},\mathbf{CP}$ of quantum channels and completely positive maps respectively. We use the term quantum channel to mean completely positive trace-preserving map. 

The first formal ingredient in the definition of a symmetric monoidal category $\mathbf{C}$ is the specification of objects of $\mathbf{C}$, which in quantum theory may be thought of as systems. In symmetric monoidal categories objects can be drawn as wires, for instance we draw an object $A$ as: \[  \tikzfig{figs/category_1}  . \]Objects of each of $\mathbf{QC}$ and $\mathbf{CP}$ are given by finite dimensional Hilbert spaces, which are used to represent quantum degrees of freedom. The second formal ingredient of a symmetric monoidal category is the assignment of a set  $\mathbf{C}(A,B)$ to each pair $A,B$ of objects. The set $\mathbf{C}(A,B)$ will be referred to as the set of of morphisms, and the elements $f \in \mathbf{C}(A,B)$ of this set are often denoted using the notation $f:A \rightarrow B$. Morphisms in symmetric monoidal categories can be drawn as boxes with input and output wires, used to represent their input and output objects: \[  \tikzfig{figs/category_2} . \]

In $\mathbf{QC}$ the morphisms $f:H_A \rightarrow H_B$ are completely positive trace preserving maps from $\mathcal{L}(H_A)$ to $\mathcal{L}(H_B)$. Similarly, in $\mathbf{CP}$ the morphisms are taken to be completely positive maps. Symmetric monoidal categories are next equipped with two compatible notions, sequential composition and parallel composition. Sequential composition of $f:A \rightarrow B$ and $g:B \rightarrow C$ can be denoted graphically by: \[  \tikzfig{figs/category_3}.  \] In quantum theory the sequential composition $g \circ f$ is typically interpreted as representing a process $g$ occurring some time after a process $f$. In $\mathbf{QC}$ and $\mathbf{CP}$ the composition rule is inherited directly from the standard notion of sequential composition for linear maps.

Parallel composition addresses both objects and morphisms, for each pair of objects $A,B$ a new object can be assigned called $A \otimes B$, typically interpreted as placing $A$ next to $B$. For each pair of morphisms $f:A \rightarrow A'$ and $g:B \rightarrow B'$ one can assign a new morphism $f \otimes g: A \otimes B \rightarrow A' \otimes B'$ typically interpreted as the concurrent action of $f$ and $g$. Diagrammatically $A \otimes B$ can be represented by placing wire $A$ next to wire $B$: \[  \tikzfig{figs/category_4}.  \]  The expression $f \otimes g$ can then be represented by the following diagram: \[ f \otimes g \quad := \quad \tikzfigscale{1}{figs/process_2} . \] In our working examples the parallel composition of objects $H_A,H_B$ is given by the standard tensor product $H_A \otimes H_B$ of finite dimensional Hilbert spaces. A key feature of symmetric monoidal categories which allows them a diagrammatic calculus is the interchange law \[  (f \circ f') \otimes (g \circ g') = (f \otimes g) \circ (f' \otimes g') , \] which implies unambiguous interpretation of the following diagram: \[  \tikzfig{figs/category_int}  . \] In a monoidal category the parallel composition is required to be associative so that $(A \otimes B) \otimes C = A \otimes (B \otimes C)$, this is indeed true for the tensor product of Hilbert spaces up to natural isomorphism $H_A \otimes (H_B \otimes H_C) \cong (H_A \otimes H_B ) \otimes H_C$ as addressed in the introductory remarks. A monoidal category furthermore comes equipped with a notion of empty space given by an object $I$ satisfying $A \otimes I = A = I \otimes A$, this object in our motivating examples is given by the Hilbert space $\mathbb{C}$ satisfying $H_A \otimes \mathbb{C} \cong H_A \cong \mathbb{C} \otimes H_A$ for every Hilbert space $H_A$. Making use of the notion of empty space $I$, the morphisms of type $\rho:I \rightarrow A$ are often interpreted states $A$. This is well motivated when $I$-wires are omitted from diagrams so that morphisms of type $I \rightarrow A$ are drawn as having only output wires: \[  \tikzfig{figs/category_state} . \] Finally, a symmetric monoidal category comes equipped with a swap-morphism $\beta: A \otimes B \Rightarrow B \otimes A$ depicted by: \[  \tikzfig{figs/category_6} , \] satisfying a variety of natural properties \cite{Lane1971CategoriesMathematician}. In the course of this paper, we will need to address two refined structures which supermaps are typically defined with respect to, those monoidal structures which are additionally causal and those which are additionally compact closed.

\paragraph{Causality: The Trace Channel}
A symmetric monoidal category is causal \cite{Coecke2017PicturingReasoning, Coecke2013CausalProcesses} if its unit object is terminal, meaning in concrete terms that for each $A$ the set $\mathbf{C}(A,I)$ is a singleton. In a causal symmetric monoidal category we typically represent the unique effect of type $A \rightarrow I$ using the discard symbol: \[  \tikzfig{figs/discard}_{A} . \] Our working example $\mathbf{QC}$ is a causal category, with the unique morphism of type $H_A \rightarrow \mathbb{C}$ given by the trace quantum channel. When a category is causal a direction of time is essentially fixed as propagating up the page. Indeed, with no way to vary between effects of type $\mathbf{C}(A,I)$ then there is no way to send information in the opposite direction, from the top of the page to the bottom.
\paragraph{Compact Closure: Channel-State Duality}
A symmetric monoidal category is compact closed if it is equipped for each $A$ with an object $A^{*}$ and morphisms $\cup_{A}:I \rightarrow A^{*} \otimes A$ and $\cap_{A}:A \otimes A^{*} \rightarrow I$ which amongst other natural conditions \cite{Coecke2017PicturingReasoning} satisfy $(\cap \otimes id) \circ (id \otimes \cup) = id$, graphically this reads as: \[  \tikzfig{figs/compact_1}  \quad = \quad \tikzfig{figs/compact_2}.  \]When compact closure is present it offers significant flexibility by giving an internalised way to connect input and output ports of circuit diagrams together. Our working example $\mathbf{CP}$ is a compact closed category with ${H_A}^{*}$ given by the dual Hilbert space to $H_A$ and the cup $\cup_{H_{A}}:\mathbb{C} \rightarrow {H_A}^{*} \otimes H_{A}$ given by the maximally entangled Bell-state. Similarly the cap is given by the maximally entangled Bell-effect. The Choi-isomorphism \cite{Choi1975CompletelyMatrices}, which provides an identification between completely positive maps and positive operators, is given by applying completely positive maps to Bell-states and so can be expressed graphically as: \[ \tikzfig{figs/category_2} \quad \longleftrightarrow \quad \tikzfig{figs/category_7} . \]

\paragraph{Additive Structure} The symmetric monoidal category $\mathbf{CP}$ is equipped with a notion of closure under positive linear combination, for any pair $\phi_0,\phi_1 \in \mathbf{CP}(A,A')$ of completely positive maps and pair $a_0,a_1 \in \mathbb{R}_{+}$ of positive real numbers a new combined completely positive linear map $a_0 \phi_0 + a_1 \phi_1 \in \mathbf{CP}(A,A')$ can be constructed. The symmetric monoidal category $\mathbf{QC}$ inherits a more restricted form of closure from $\mathbf{CP}$ termed closure under convex combinations, meaning that for any pair $\phi_0,\phi_1 \in \mathbf{QC}(A,A')$ of quantum channels and any probability $p \in [0,1]$ then the linear combination $p\phi_0 + (1-p)\phi_1 \in \mathbf{QC}(A,A')$ is itself a quantum channel.

\paragraph{Constrained Spaces}
In this paper, beyond supermaps on entire channels, we will aim to reconstruct supermaps which can be applied to channels satisfying signalling constraints  \cite{Eggeling2002SemicausalSemilocalizable, perinotti_causal_influence}. These quantum supermaps are the part of the higher-order toolbox typically used to study quantum causal structure. Let us begin with a simple example of a signalling constraint given by the specification of the set of one-way signaling channels, we denote the statement that $\phi$ be a one-way signaling channel by:  \[ \tikzfig{figs/phi} \quad \in \quad \mathcal{E}_{sig}\Big(\tikzfig{figs/phi_onesig}\Big)  . \] The depicted graph indicates that $\phi$ may not transmit information from the bottom right wire to the top left wire, this constraint is expressed concretely by the existence of a channel $\phi'$ satisfying the following equation: \[  \tikzfig{figs/phi_onesig1} \quad =  \quad  \tikzfig{figs/phi_onesig2}  .  \] In general more complex signalling constraints can be encoded by relations \cite{Lorenz2020CausalTransformationsb, Wilson2021ComposableConstraints}. For instance the statement \[ \tikzfig{figs/phi_three}  \in \quad   \mathcal{E}_{sig} \Big( \tikzfig{figs/phi_three_cone} \Big)  ,  \] can be used to encode the statements  \[ \tikzfig{figs/phi_three_discard} = \tikzfig{figs/phi_three_consiga}, \quad  \textrm{and} \quad \tikzfig{figs/phi_three_discard2} = \tikzfig{figs/phi_threee_consig2a}.  \] A formal definition for arbitrary relations is given in \cite{Wilson2021ComposableConstraints}. Useful for our reconstruction will be the observation that for any relation $\tau$ the set $\mathcal{E}_{sig}(\tau)$ of processes satisfying constraint $\tau$ has the convenient property of being closed under convex combinations. Another convenient property of signaling constraints in quantum theory for our reconstruction is their equivalence to localizability constraints, which we refer to here as pathing constraints so as to not confuse with our main definition of local applicability. As an example of a pathing constraint, the one-way signalling condition in quantum theory is equivalent to requiring the following decomposition: \[ \tikzfig{figs/phi} \quad = \quad  \tikzfig{figs/path_onesig} . \] This decomposition abstracts the impossibility of signalling from the bottom right wire to the top left wire by forbidding the existence of a vertically directed path between them. Since in quantum theory pathing constraints are equivalent to signaling constraints \cite{Eggeling2002SemicausalSemilocalizable}, pathing constraints in quantum theory are also closed under convex combinations. Pathing constraints come with the advantage however that they can be phrased in arbitrary symmetric monoidal categories without reference to a prioritised effect or time-direction. In intuitive terms pathing constraints abstract from causality to compositionality, as an example of a more complex pathing constraint consider the statement: \[ \tikzfig{figs/phi_three}  \in \quad  \mathcal{E}_{path} \Big( \tikzfig{figs/phi_three_cone} \Big),  \] this is a concise encoding of the statement that $\phi$ can be decomposed in such a way as to not provide a vertically directed path from the left-most input to the right-most output, or vice-versa, i.e \[ \tikzfig{figs/phi_three} = \tikzfig{figs/path_1} \quad  \textrm{and} \quad \tikzfig{figs/phi_three} = \tikzfig{figs/path_2}.  \] The equality $\mathcal{E}_{path}(\tau) = \mathcal{E}_{sig}(\tau)$ will allow us to reconstruct quantum supermaps on signaling constraints as locally-applicable transformations applied to their corresponding pathing constraints. Phrased in this way the quantum supermaps used to study quantum causal structure, are recovered from entirely compositional definitions with no reference to causality or chosen direction of the flow of time.  

\paragraph{Extended Process Sets}
In this paper we will be often concerned with extending sets of processes to auxiliary systems. For a channel to be treated as an extension of channels from some set, one ought to expect that no internal local dynamics on the auxiliary extended systems should be able to change that fact. Consequently we introduce the following minimum requirement for extensions of a set $K$.
\begin{definition}
A family of sets $K_{X,X'} \subseteq \mathbf{C}(A \otimes X,A' \otimes X')$ is an extension set for $K \subseteq \mathbf{C}(A,A')$ if $K_{I,I'} = K$ and for every $\phi \in K_{X,X'}$, $f:Y \rightarrow X \otimes E$, and $g:X' \otimes E \rightarrow Y'$ then \[\tikzfig{figs/ext_0} \quad \in \quad K_{Y,Y'} .  \]
\end{definition}
The examples we will make use of in the following text are those given by \textit{dilation extension}. The principle is to define, for any set $K \subseteq \mathbf{C}(A,A')$ and pair $X,X'$, the extension by $X,X'$ of $K$ to be the set of all processes $\phi \in \mathbf{C}(A \otimes X, A' \otimes X')$ which are stinespring dilations of processes of $K$.
\begin{definition}
For each $K \subseteq  \mathbf{C}(A,A')$ and pair $(X,X')$ the dilation extension by $X,X'$ denoted $\textnormal{\texttt{dExt}}_{X,X'}(K)$ is the subset of $\mathbf{C}(A \otimes X,A' \otimes X')$ given by: \[\phi \in \textnormal{\texttt{dExt}}_{X,X'}[K]  \iff \forall \rho, \sigma: \quad  \tikzfig{figs/ext_1} \quad \in \quad K  . \]
\end{definition}
Note that for the example $K = \mathbf{C}(A,A')$ then the extension $\textnormal{\texttt{dExt}}_{X,X'}(K)$ returns the entire set of two-input/two-output morphisms $\mathbf{C}(A \otimes X,A' \otimes X') $\footnote{It is natural to wonder whether this holds in the case in which a symmetric monoidal category has no states or effects, in-fact it does so vacuously (since the requirement is that the condition be satisfied for all such states and effects). }. 
Whilst this holds for any symmetric monoidal category, dilation extensions for other non-trivial subsets are conceptually better suited to categories that are expected to have non-trivial states and effects, such as operational probabilistic theories \cite{Chiribella2016QuantumPrinciples}, rather than categories of pure reversible evolutions, such as the category of unitaries.

For a causal symmetric monoidal category the extended set can be rephrased in the following way: \[\phi \in \textnormal{\texttt{dExt}}_{X,X'}[K]  \iff \forall \rho: \quad  \tikzfig{figs/ext_2} \quad \in \quad K  , \] which is the form used in \cite{Chiribella2009QuantumStructure} to define extensions to subsets of quantum channels. In the appendix we note that such extended channel sets can be viewed in categorical language as functors into the category of sets. Whenever a subset $K \subseteq \mathbf{QC}(A,A')$ is closed under convex combinations then it follows that $\textnormal{\texttt{dExt}}_{X,X'}(K)$ is closed under convex combinations, we will from now on rephrase the statement that a set $K$ be closed under convex combinations as simply the statement that $K$ be convex.


\paragraph{Quantum Supermaps: Standard Definition}
In this paper we use the category $\mathbf{CP}$ of completely positive maps to express the definition of supermap, this is sufficient for our purposes since the sets we choose to work with are internal \cite{Chiribella2009QuantumStructure, Chiribella2013QuantumStructure}. We follow the presentation of \cite{Kissinger2019AStructure} in which a completely positive linear map \[ \tikzfigscale{0.8}{figs/newsupermap_1}  ,\] is informally interpreted as a diagram with a hole, a function which accepts a channels of type $A \rightarrow A'$ and returns a channel of type $B \rightarrow B'$ by using compact closure or Choi-Jamiolkowski isomorphism \cite{Choi1975CompletelyMatrices} \[ \tikzfigscale{0.8}{figs/newsupermap_2}.  \] 
This notation puts us in a position to concisely phrase the standard definition method for quantum supermaps.
\begin{definition}
Let $\mathbf{C} \subseteq \mathbf{P}$ be an inclusion of a symmetric monoidal category $\mathbf{C}$ into a compact closed category $\mathbf{P}$ and let $K_{-,=}$ and $M_{-,=}$ be extension sets for $K \subseteq \mathbf{C}(A,A')$ and $M \subseteq \mathbf{C}(B,B')$ respectively. A $\mathbf{P}$-supermap on $\mathbf{C}$ of type $S:K_{-,=}  \rightarrow M_{-,=}$ is a morphism in $\mathbf{P}$ of type $S: A^{*} \otimes A' \rightarrow B^{*} \otimes B'$ such that for every $\phi \in K_{X,X'}$ then  \[ \tikzfigscale{0.8}{figs/newsupermap_ext} \quad \in \quad M_{X,X'} . \]
\end{definition}
In short $\mathbf{P}$-supermaps are processes from $\mathbf{P}$ which can act on processes from $\mathbf{C}$. Furthermore, in the spirit of the axiomatic definition of completely positive trace preserving maps, it is required that the $\mathbf{P}$-supermap sends processes from $\mathbf{C}$ to processes from $\mathbf{C}$ even when acting only on part of the process.


$\mathbf{P}$-supermaps can be composed in sequence, with this sequential composition inherited directly from $\mathbf{P}$, in categorical terms this means that the $\mathbf{P}$-supermaps define a category $\mathbf{Psup}[\mathbf{C}]$. For brevity we will use the term ``quantum supermap" for $\mathbf{CP}$-supermap on $\mathbf{QC}$ and denote $\mathbf{CPsup}[\mathbf{QC}]$ by $\mathbf{QS}$. Note that we are working with quantum supermaps applied to arbitrary subsets here, including those which satisfy signalling constraints. For the study of indefinite causal structure the most prevalent example is that of supermaps on non-signalling channels and their dilation extensions: \[ \tikzfig{figs/phi} \quad \in \quad \mathcal{E}_{sig}\Big(\tikzfig{figs/phi_nosig}\Big),  \quad \quad \quad \quad \tikzfig{figs/phiext} \quad \in \quad \textnormal{\texttt{dExt}}_{X,X'}\Big[ \mathcal{E}_{sig}\Big(\tikzfig{figs/phi_nosig}\Big) \Big]  .  \]

The lack of communication between wires in such channels allows them to be combined in a variety of ways without producing time-loops, so that the output may still be guaranteed to be a deterministic channel. Quantum supermaps of type\footnote{We use the symbols \texttt{-} and \texttt{=} to denote free-variables.} \[  \textnormal{\texttt{dExt}}_{\texttt{-},\texttt{=}}\Big[ \mathcal{E}_{sig}\Big(\tikzfig{figs/phi_nosig}\Big) \Big] \rightarrow \mathbf{C}(A \otimes \texttt{-}, Q \otimes A \otimes \texttt{=})  \] which have been the subject of considerable study are switches, which take channels and by reference to a control state, plug them together in a combination of orders:  \[ \tikzfigscale{0.8}{figs/switch_1a} \quad = \quad \tikzfigscale{0.8}{figs/switch_2a} \quad + \quad \tikzfigscale{0.8}{figs/switch_3a} . \] Note that when the above diagrams are drawn in the category $\mathbf{CP}$ they represent classical mixtures of causal orders. Instead, if one interprets the above diagram in $\mathbf{fHilb}$ and \textit{then} embeds into $\mathbf{CP}$ via the usual doubling functor with $\mathcal{F}(f)(\rho) := f (\rho) f^{\dagger}$ then interference is included and what is recovered is the \textit{quantum} switch. 

In the literature on higher-order quantum processes it is common to think of switches (and more generally the supermaps on the non-signalling channels) as being multiparty. More precisely, rather than thinking of supermaps as having one constrained input, they are thought of and pictorially depicted as having many unconstrained inputs. The informal intuitive picture that should be kept in mind for such maps is the following \[ \tikzfig{figs/informal_cp_3_multi} . \]

Again, when a theory comes equipped with an inclusion $\mathbf{C} \subseteq \mathbf{P}$ into a compact closed category one can find a quite direct axiomatisation of such supermaps. We refer to a supermap as multiparty when we think of its input type as a list, as in the following definition.
\begin{definition}
Let $\mathbf{C} \subseteq \mathbf{P}$ be an inclusion of a symmetric monoidal category $\mathbf{C}$ into a compact closed category $\mathbf{P}$, and let $K^{i}_{X,X'} \subseteq \mathbf{C}(A_i \otimes X, A_i' \otimes X')$, a morphism \[ \tikzfigscale{0.6}{figs/newmultisupermap1} , \] in $\mathbf{P}$ is a $\mathbf{P}$-supermap on $\mathbf{C}$ of type $S:K^1_{-,=} \dots K^n_{-,=} \rightarrow M_{-,=}$ if and only if for every family ${\phi}_{i} \in K^i_{ X_i ,  X_i'}$ then  \[ \tikzfigscale{0.6}{figs/newmultisupermap2} \quad \in \quad  M_{X_1 \otimes \dots \otimes X_n , X_1' \dots \otimes X_n'}. \]
\end{definition}
Note that we can (as we have above) allow each of the independent intputs in the list to actually be constrained sets of processes. From now on, we will keep in mind the case in which each input is unconstrained so that supermaps are required to be processes in $\mathbf{P}$ which are well-behaved on the space of all product channels. The supermaps on the non-signalling channels are the same as supermaps on the product channels, this can be seen for instance in the definition of the non-signalling channels as the double closure of the set of products of channels \cite{Kissinger2019AStructure}. As a consequence of this, in the case of either quantum theory or classical theory, the multiparty supermaps are equivalent to the supermaps on the space of non-signalling channels.

\paragraph{Summary}
In quantum information theory, supermaps on quantum channels are defined by using channel state duality (compact closure) of the theory of completely positive maps into which they embed. In this paper we ask the following: \textit{What can we say when background compact closure cannot be assumed?} In other words:  \textit{Can supermaps be characterized in terms of sequential and parallel composition alone?}. We answer positively, showing that quantum supermaps could have been defined all along by an abstract principle of local applicability (in category-theoretic language termed naturality). After motivating and defining locally-applicable transformations on arbitrary symmetric monoidal categories we show a one-to-one correspondence between them and quantum supermaps when applied to the symmetric monoidal category of quantum channels.

\section{Formalisation of Local Applicability}
Our goal is to find an axiom for quantum supermaps which can be applied to \textit{any} symmetric monoidal category. In this section we show how to so, by defining higher order functions and requiring the existence of extensions for all auxiliary systems which further commute with actions on those auxiliary systems. We begin by warming up to the definition by observing a notion of local applicability present in monoidal categories, such as standard quantum theory, which we plan to abstract to higher order functions.  
\paragraph{Local Applicability in Standard Quantum Theory}
The category $\mathbf{QC}$ of quantum channels is a symmetric monoidal category, one consequence of the parallel composition rule $f \otimes g$ is that it gives a way to view any $f:A \rightarrow A'$ as locally applicable in an intuitive sense. We discuss this locality principle in the monoidal setting and then comment on how it can be abstracted to a statement about locality of functions built from such an $f$ on state sets. To begin, consider a channel $f:A \rightarrow A'$ from $A$ to $A'$:
\[   \tikzfig{figs/apply1} , \]
whenever $A$ can be viewed as part of a larger system $N = A \otimes X$ then $f$ can be locally applied to $N$ by using $f \otimes id_X: N \rightarrow A' \otimes X$ \[   \tikzfig{figs/apply2}.  \] Crucial to the interpretation of locality in $f \otimes id_X$ is that $f \otimes id_X$ commutes with all actions on $X$, this follows in this case by the interchange law for monoidal categories $(f \otimes id_{X'} ) \circ (id_{A} \otimes g) = (id_{A'} \otimes g) \circ (f \otimes id_{X})$: \[   \tikzfig{figs/apply3} \quad \quad = \quad \quad \tikzfig{figs/apply4}. \] In other words, a general monoidal category gives a collection of morphisms, all of which can be viewed as being locally applicable, in an informal sense. 

A consequence of this local applicability is the possibility to construct from $f$ a family of functions on states $l(f)_X:\mathbf{C}(I,A \otimes X) \rightarrow \mathbf{C}(I,A' \otimes X)$ which exhibit the local applicability of $f$. Explicitly, by using tensor extensions with the identity the function $l(f)_{X}(\rho_{AX}) := (f \otimes id_X)(\rho_{AX})$ can be defined for each $X$. The abstract functions $l(f)_{X}$ which represent the action of $f$ on states indeed inherit a notion of local applicability from $f$. The functions $l(f)_X$ can be seen to leave the environment system $X$ untouched in the sense that the action of any $g$ on $X$ commutes with the application of the function $l(f)_X$. The above sentence is captured in formal terms by the equation $l(f)_{X'}(id_A \otimes g(\rho_{AX})) = (id_{A'} \otimes g) (l_{X}(\rho_{AX}))$ which is guaranteed to hold for any $g:X \rightarrow X'$ since \begin{align*}
    l(f)_{X'}(id_A \otimes g(\rho_{AX})) 
    = & (f \otimes id_{X'}) \circ (id_A \otimes g)(\rho_{AX}) \\
    = &  (id_{A'} \otimes g) \circ (f \otimes id_{X})(\rho_{AX})\\
    = & (id_{A'} \otimes g) (l_{X}(\rho_{AX})) 
\end{align*} 
We now give an axiom which re-characterises quantum supermaps by generalizing this concept of local applicability of functions on states to local applicability of functions on processes. The only instances of locally-applicable transformations on quantum channels will turn out to be those which are simulated by the standard-definition quantum supermaps of \cite{Chiribella2008TransformingSupermaps}. We split the motivations for the definition of locally-applicable transformation into three consecutive principles.

\paragraph{Principle 1: Supermaps are Functions on Processes}
The kind of picture usually drawn with the aim of capturing diagrammatically the concept of a supermap from the space of processes $\mathbf{C}(A, A')$ to the space of processes $\mathbf{C}(B , B')$ is some variation of the following \[ \tikzfig{figs/informal_cp_o} . \] As such our first step to characterising supermaps of type $\mathbf{C}(A,A') \rightarrow \mathbf{C}(B,B')$ is to consider functions of the same type $\mathbf{C}(A,A') \rightarrow \mathbf{C}(B,B')$. More generally for $K \subseteq \mathbf{C}(A,A')$ and $M \subseteq \mathbf{C}(B,B')$ the first step is to consider functions $S:K \rightarrow M$ from the set $K$ to the set $M$.

\paragraph{Principle 2: Supermaps Can be Extended to Functions on all two-input/two-output Processes} When we say that we wish for the map $S:\mathbf{C}(A,A') \rightarrow \mathbf{C}(B,B')$ to be locally applicable, we mean that we wish to formalise the following picture: \[\tikzfig{figs/informal_cp_3a}. \]The next step toward such a formalisation is to specify for each $X,X'$ the action of $S$ when applied to the $A,A'$ part of any morphism $\phi \in \mathbf{C}(A \otimes X, A' \otimes X')$. Consequently we say that a locally-applicable transformation must be equipped with a family of extended functions $S_{XX'}: \mathbf{C}(A \otimes X, A' \otimes X' ) \longrightarrow \mathbf{C}(B \otimes X, B' \otimes X' )$ for every $X,X'$. For the generalised case of supermaps of type $S: K \rightarrow M$ we instead require the specification of a function $S_{X,X'}:K_{X,X'} \rightarrow M_{X,X'}$ for every $X,X'$ with $K_{X,X'}$ some extension set for $K$ and similarly for $M$. 
For readability we will from now on notate the action of such a family of functions in the following way \[ S_{X,X'}(\phi) \quad :=     \quad \tikzfig{figs/informal_cp_3b},   \] where the dotted lines express the idea that the wires they connect are to be interpreted as auxiliary systems. Formally, the dotted lines simply allow us to denote diagrammatically which systems are the $X$ and $X'$ of $S_{X,X'}$. Conceptually, the dotted wires for $X,X'$ are intended to be indicate that $S_{X,X'}$ ought not act on them. At this moment of the formalization however, we have not imposed any additional mathematical condition on these diagrams which encodes this concept. 

\paragraph{Principle 3: Supermaps Commute With Actions on Their Extensions} A key feature of a local operation is commutation with operations applied to auxiliary spaces, we generalise this notion of locality to input-output operations, informally we aim to capture the equivalence of the following two pictures: \[ \tikzfig{figs/informal_cp_4aa} \quad = \quad \tikzfig{figs/informal_cp_4ab}, \] which can be formalised for functions on simple types $\mathbf{C}(A,A') \rightarrow \mathbf{C}(B,B')$ or on generalised types $K \rightarrow M$.%
\begin{definition}[locally-applicable transformations] Let $K_{-,=}$ and $M_{-,=}$ be extension sets,
a locally-applicable transformation of type $S:K_{-,=} \longrightarrow M_{-,=}$ on a symmetric monoidal category $\mathbf{C}$ is a family of functions $S_{XX'}:K_{X,X'} \rightarrow M_{X,X'}$ such that for every $g:X'\otimes Z \rightarrow Y'$, $f:Y \rightarrow X \otimes Z $, and $\phi:A \otimes X \rightarrow A' \otimes X'$ then 
\[ \tikzfig{figs/formal_cp_1aa} \quad = \quad \tikzfig{figs/fornal_cp_1bb}.  \]
\end{definition}
The above definition is equivalent to the requirement of the following distinct rules of \textit{naturality} \[ \tikzfig{figs/formal_cp_1a} \quad = \quad \tikzfig{figs/formal_cp_1b},  \] and \textit{dragging}  \[  \tikzfig{figs/drag1} \quad =  \quad \tikzfig{figs/drag2} . \] where in the specific case of interest of the category $\mathbf{QC}$ of quantum channels, only the first condition need actually be given, since
for all causal $\rho$ we have the following equation by naturality: \[  \tikzfig{figs/tensor1} \quad =  \quad \tikzfig{figs/tensor2}, \] and following further equations, again by naturality:  \[ =  \quad \tikzfig{figs/tensor3} \quad =  \quad \tikzfig{figs/tensor4}  .  \] Together these entail box-dragging since quantum theory has \textit{enough causal states} \cite{Kissinger2019AStructure}. In the appendix we note that consequently locally-applicable transformations on $\mathbf{QC}$ can be phrased in the language of category theory as natural transformations, motivating our use of the word ``naturality". We further note that locally-applicable transformations can be composed, given locally-applicable transformations $S:K_{-,=} \rightarrow M_{-,=}$ and $T:M_{-,=} \rightarrow N_{-,=}$ one can construct the locally-applicable transformation $(T \circ S):K_{-,=} \rightarrow N_{-,=}$ by defining for each $X,X'$ the functions $(S \circ T)_{XX'}(\phi) = S_{XX'}(T_{XX'}(\phi))$. This compositionality of locally-applicable transformations can be phrased in the language of category theory by stating that they form a category, which we denote by $\mathbf{lot}[\mathbf{C}]$.

\section{Examples} We now consider a series of constructive examples of supermaps beginning with those which can be guaranteed to exist on \textit{any} symmetric monoidal category $\mathbf{C}$. We begin with combs in arbitrary symmetric monoidal categories as defined in \cite{Coecke2014AResources}.
\begin{example}[Combs]
For every symmetric monoidal category $\mathbf{C}$ and pair of morphisms $a:A \rightarrow E \otimes B$ and $b: E \otimes A' \rightarrow B'$ one can define a locally-applicable transformation of type $\mathbf{C}(A-,A'=) \rightarrow \mathbf{C}(B-,B'=)$ by   \[  \tikzfig{figs/final10} \quad := \quad \tikzfig{figs/rep1}  . \] Indeed, note that: \[  \tikzfig{figs/formal_cp_1aa} \quad = \quad \tikzfig{figs/rep2} \quad = \quad \tikzfig{figs/rep3} \quad = \quad \tikzfig{figs/fornal_cp_1bb} . \]   
\end{example}
We from now on refer to such a locally-applicable transformation by $\texttt{comb}[a,b]$ and its components by $\texttt{comb}[a,b]_{X,X'}$, such combs form a subcategory of $\mathbf{lot}[\mathbf{C}]$ which is isomorphic as a category to $\mathbf{comb}[\mathbf{C}]$ as defined in \cite{Hefford2022CoendCombs}. This example can be generalised to combs of type $\texttt{comb}[a,b]:K_{-,=} \rightarrow M_{-,=}$ which are those which furthermore satisfy $\phi \in K_{X,X'} \implies \texttt{comb}(a,b)(\phi) \in M_{X,X'}$. These examples can be further generalised to combs from categories into-which $\mathbf{C}$ embeds, as opposed to combs from $\mathbf{C}$ itself.
\begin{example}
For every pair $\mathbf{C},\mathbf{D}$ of symmetric monoidal categories with $\mathbf{C} \subseteq \mathbf{D}$ one can define the $\mathbf{D}$-combs on $\mathbf{C}$ of type $K \rightarrow M$ to be the combs of $\mathbf{D}$ which preserve morphisms of $\mathbf{C}$. Formally, that is, the transformations of type $\texttt{comb}[a,b]$ on $\mathbf{D}$ such that for all $\phi \in K_{X,X'}$ then $\texttt{comb}[a,b]_{X,X'}(\phi) \in M_{X,X'}$. 
\end{example}
For a compact closed category $\mathbf{P}$ with $\mathbf{C} \subseteq \mathbf{P}$ the notions of $\mathbf{P}$-supermap and $\mathbf{P}$-comb on $\mathbf{C}$ are equivalent. Through this equivalence, $\mathbf{P}$-supermaps always give examples of locally-applicable transformations.
\begin{lemma}
Let $\mathbf{P}$ be a compact closed category and $\mathbf{C}$ be a symmetric monoidal category, there is a one-to-one correspondence between the $\mathbf{P}$-combs on $\mathbf{C}$ and the $\mathbf{P}$-supermaps on $\mathbf{C}$.
\end{lemma}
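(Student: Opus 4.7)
The plan is to exhibit explicit inverse constructions in both directions, using the cups and caps of $\mathbf{P}$ as the essential tool, and to verify that the two constructions agree on actions via the snake equations. Both a $\mathbf{P}$-supermap and a $\mathbf{P}$-comb describe ways of ``wrapping'' a morphism of $\mathbf{C}$ inside a surrounding $\mathbf{P}$-morphism; the content of the lemma is that compact closure allows these two descriptions to translate into one another.

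First I would define the comb-to-supermap direction. Given a comb arising from $a \colon A \to E \otimes B$ and $b \colon E \otimes A' \to B'$ in $\mathbf{P}$, I would bend the external $A$ input of $a$ and the $B'$ output of $b$ using cups and caps, while the shared $E$ wire is contracted internally. This packages the pair $(a,b)$ into a single morphism $S \colon A^* \otimes A' \to B^* \otimes B'$ in $\mathbf{P}$. The constraint-preservation required of a comb transfers to the analogous constraint on $S$ because the two induce the same action on $\phi$, as verified in the next step.

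Second I would define the supermap-to-comb direction. Given $S \colon A^* \otimes A' \to B^* \otimes B'$ in $\mathbf{P}$, I would produce a canonical decomposition into a comb by choosing an environment system (for instance $E = A \otimes B^*$) and reading $a$ and $b$ off $S$ together with cups and caps. The snake equations then guarantee that the two assignments are mutual inverses on individual comb-pairs, and a short diagrammatic calculation shows that the comb's operational action $(b \otimes \mathrm{id}) \circ (\mathrm{id}_E \otimes \phi) \circ (a \otimes \mathrm{id})$ agrees with the compact-closed ``plugging'' of $\phi$ into $S$ used to define the $\mathbf{P}$-supermap's action.

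The main obstacle is well-definedness on the comb side: different pairs $(a,b)$ can produce the same underlying $\mathbf{P}$-comb transformation, so the map comb-to-supermap must be shown to depend only on the transformation and not on the chosen decomposition. I would resolve this by observing that the associated $S$ can be recovered intrinsically as the comb's action on a canonical ``name'' morphism in $\mathbf{P}$ (essentially an identity wire interpreted through compact closure as a state), which depends only on the transformation itself. This shows that the two constructions descend to a well-defined bijection between $\mathbf{P}$-supermaps and $\mathbf{P}$-comb transformations of the specified type.
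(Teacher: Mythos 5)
Your proposal is correct and follows essentially the same route as the paper: bend wires with the cups and caps of $\mathbf{P}$ to pass from a comb pair $(a,b)$ to a morphism $A^{*}\otimes A' \to B^{*}\otimes B'$ and back, verify via the snake equations that both induce the same action on $\phi$ (so preservation of $K \to M$ transfers in each direction), and conclude the two constructions are mutually inverse. Your explicit handling of well-definedness --- recovering $S$ intrinsically as the comb's action on a canonical bent identity, so the assignment depends only on the transformation rather than on the chosen pair $(a,b)$ --- is a subtlety the paper leaves implicit, and it is resolved in the same spirit as the paper's main theorem (which evaluates a transformation on the swap).
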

\begin{proof}
Let $S$ be a $\mathbf{P}$-comb on $\mathbf{C}$ of type $K_{-,=} \rightarrow M_{-,=}$ then one can construct \[  \tikzfigscale{0.6}{figs/newsupermap_1} \quad := \quad \tikzfigscale{0.6}{figs/final_comb} \quad = \quad \tikzfigscale{0.8}{figs/rep5}, \] which indeed is a $\mathbf{P}$-supermap since \[ \tikzfigscale{0.6}{figs/final9} \textrm{ } = \quad \tikzfig{figs/rep6} \quad = \quad \tikzfig{figs/rep1} . \] Instead let $S$ be a $\mathbf{P}$-supermap then one can construct the locally-applicable transformation \[  
\mathcal{F}(S)_{X,X'} \quad :=  \quad  \tikzfigscale{0.6}{figs/rep7} . \] These two constructions are furthermore inverse to each-other. The assignment is furthermore a functor $\mathcal{F}_{\mathbf{P},\mathbf{C}}:\mathbf{Psup}[\mathbf{C}] \rightarrow \mathbf{lot}[\mathbf{C}]$ meaning in concrete terms that $\mathcal{F}$ preserves composition and identities.
\end{proof}
The above story and equivalence between supermaps made with combs or morphisms with compact closure can be generalised to embeddings which are weak in the sense of $2$-category theory, a discussion of this point is given in the appendix, where it is noted that this generalisation allows to use the compact-closed category $*\mathbf{Hilb}$ \cite{Gogioso2017Infinite-dimensionalMechanics} to define a variety of locally-applicable transformations on $\mathbf{sepU}$, the category of unitaries between seperable Hilbert spaces.

Locally-applicable transformations in short state a bare-minimum requirement expected of quantum supermaps, satisfied by a variety of more familiar examples. A clear difficulty in the definition of supermaps, is the variety of potential definition methods. By using the unifying principle of a locally-applicable transformation, as a minimum requirement for supermaps, we will in fact find that all possible definitions of supermaps on finite dimensional quantum theory are equivalent by characterizing all locally-applicable transformations on $\mathbf{QC}$ as quantum supermaps, by which we mean $\mathbf{CP}$-supermaps on $\mathbf{QC}$.

\section{Characterisation of Standard Quantum Supermaps}
We now prove that locally-applicable transformations on finite dimensional quantum channels are equivalent to standard-definition quantum supermaps. We have already proven that all quantum supermaps define locally-applicable transformations, so what remains is to prove that this assignment can be inverted, that every locally-applicable transformation defines quantum supermap. We begin by identifying a key feature of quantum theory, the existence of control for convex sets. We then give our proof in three steps, proving inheritance of convex linearity for locally-applicable transformations on convex sets, proving their unique extension to action on completely positive maps, and finally proving their realisation in terms of standard quantum supermaps. We begin by addressing the property of control.
\begin{definition}[Control]
A set $K$ has control if for every pair $\phi_{0},\phi_1 \in \textnormal{\texttt{dExt}}_{X,X'}(K)$ there exists $\phi \in  \textnormal{\texttt{dExt}}_{X \otimes Y,X' \otimes Y'}(S)$ and a pair of states $\rho_0, \rho_1$ such that: \[    \tikzfig{figs/control_1_caus} . \]
\end{definition}
Conveniently convex sets in $\mathbf{QC}$ are always controlled, in fact the existence of control is equivalent to asking for closure under convex combinations.
\begin{lemma}
A set $K \subseteq \mathbf{QC}(A,A')$ has control if and only if it is convex.
\end{lemma}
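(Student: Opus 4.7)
The plan is to prove the two directions separately, with convexity being the more familiar side and control being the side that requires explicit construction of a controlled dilation.

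For the direction \textbf{control $\Rightarrow$ convex}, I would specialize the definition of control to the case $X = X' = I$, noting that $\texttt{dExt}_{I,I}(K) = K$. Given any $\phi_0, \phi_1 \in K$ and $p \in [0,1]$, control produces some $\Phi \in \texttt{dExt}_{Y, Y'}(K)$ together with states $\rho_0, \rho_1$ on $Y$ such that $\Phi$ with $\rho_i$ plugged in returns $\phi_i$. The key step is then to plug the genuine quantum state $\rho_p := p\rho_0 + (1-p)\rho_1$ into $\Phi$: by linearity of $\Phi$, this produces $p\phi_0 + (1-p)\phi_1$, and by the very definition of $\texttt{dExt}$, plugging a causal state into $\Phi$ yields an element of $K$. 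Hence $p\phi_0 + (1-p)\phi_1 \in K$.

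For the direction \textbf{convex $\Rightarrow$ control}, the plan is to construct an explicit controlled dilation using a classical control qubit. Given $\phi_0, \phi_1 \in \texttt{dExt}_{X,X'}(K)$, choose $Y = Y' = \mathbb{C}^2$ and define the channel
\[
  \Phi(\sigma \otimes \tau) := \sum_{i=0}^1 \phi_i\!\left(\sigma \otimes \langle i | \tau | i \rangle\right) \otimes |i\rangle\langle i|,
\]
i.e.\ a dephasing on the control followed by conditional application of $\phi_0$ or $\phi_1$. Setting $\rho_i := |i\rangle\langle i|$ immediately recovers $\phi_i$ upon plugging in $\rho_i$ on $Y$ and discarding $Y'$. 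The remaining step, which is the main content of the proof, is to verify that $\Phi \in \texttt{dExt}_{X \otimes Y, X' \otimes Y'}(K)$.

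The main obstacle, then, is showing this dilation extension membership, and this is where convexity is used essentially. Given any causal state $\tau$ on $X \otimes Y$, decompose it in the control basis as $\tau = \sum_{ij} \tau_{ij} \otimes |i\rangle\langle j|$ and set $p_i := \operatorname{Tr}(\tau_{ii})$, so that $\sum_i p_i = 1$. Plugging $\tau$ into the $(X \otimes Y)$-input of $\Phi$ and tracing out $X' \otimes Y'$ kills the off-diagonal blocks and yields
\[
  \sum_{i} p_i \operatorname{Tr}_{X'}\!\left(\phi_i\!\left(- \otimes \tfrac{\tau_{ii}}{p_i}\right)\right).
\]
Each summand is, by the hypothesis $\phi_i \in \texttt{dExt}_{X,X'}(K)$ applied to the causal state $\tau_{ii}/p_i$, an element of $K$. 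The result is therefore a convex combination of elements of $K$, which lies in $K$ by convexity. This closes the proof.
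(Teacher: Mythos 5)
Your proof is correct and follows essentially the same route as the paper: the (control $\Rightarrow$ convex) direction by feeding the mixed control state $p\rho_0+(1-p)\rho_1$ into the controlling dilation, and the (convex $\Rightarrow$ control) direction by building a classically-controlled channel from a pair of perfectly distinguishable control states and checking that every causal reduction of it is a convex combination of reductions of $\phi_0,\phi_1$, hence in $K$ by convexity. The only cosmetic difference is that you instantiate the control system concretely as a qubit with the computational basis and write the construction in Dirac notation, whereas the paper works with abstract distinguishable state/effect pairs and diagrams.
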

\begin{proof}
We begin by showing that whenever $K$ is convex it has control. Note that whenever $K$ is convex then $\textnormal{\texttt{dExt}}_{X,X'}(K)$ is convex for every choice of $X$ and $X'$. Now choose a pair of states $\rho_0,\rho_1 \in \mathbf{QC}(I,Y)$ on an object $Y$ which are distinguishable in the sense that there exist effects $e_0,e_1 \in \mathbf{CP}(Y,I)$ satisfying $e_i \circ \rho_j = \delta_{ij}$ and then construct the following process $\phi$ \[ \tikzfig{figs/control_proof0} \quad := \quad  \tikzfig{figs/control_proof1} \quad + \quad  \tikzfig{figs/control-proof1b} ,  \] certainly by inserting $\rho_0,\rho_1$ into the rightmost wire the channels $\phi_0,\phi_1$ are recovered, what remains is to show that $\phi $ is in $\textnormal{\texttt{dExt}}_{X,X'}(K)$. Indeed consider checking the reduction of $\phi$ given by applying an arbitrary state and effect of $\mathbf{QC}$ to its auxiliary wires, given that $\mathbf{QC}$ is causal this is given by: \[ \tikzfig{figs/control_proof_2a} \quad + \quad  \tikzfig{figs/control_proof2b}  \] for some $\tau$. Now note that each of the post selected states is a normalised state equipped with a probability, for instance: \[ \tikzfig{figs/norm_0} \quad = \quad \tikzfig{figs/norm_1} \quad \times  \quad \frac{\mbox{\huge  1}}{ \tikzfig{figs/norm_1}} \quad \times \quad  \tikzfig{figs/norm_0}  \quad = \quad p(e_0|\tau) \times \tau_{|0} \] and similarly for $e_1$. Since $\phi_0,\phi_1$ are elements of $\textnormal{\texttt{dExt}}_{X,X'}(K)$ it then follows from the above that the reduction of $\phi$ is a convex combination of elements of $K$, explicitly the application of arbitrary state and effect to $\phi$ returns: \[ p(e_0|\tau) \quad \tikzfig{figs/norm_control1} \quad + \quad  p(e_1|\tau) \quad \tikzfig{figs/norm_control2} \] where $p(e_0|\tau) + p(e_1|\tau) = 1$. We now check the converse, that when $K$ has control it is convex. Indeed, for any $\phi_0,\phi_1 \in K$ choose their control operation $\phi \in \textnormal{\texttt{dExt}}_{Y,I}(K)$. Consider an arbitrary convex combination $p\phi_0 + (1-p)\phi_1$, this combination is given by inserting $ \sigma := p\rho_0 + (1-p) \rho_1$ into the wire $Y$ of $\phi$. Since $\phi$ is in $\textnormal{\texttt{dExt}}_{Y,I}$ then insertion of $\sigma$ into $\phi$ must return an element of $K$ and so it follows that the convex combination $p\phi_0 + (1-p)\phi_1$ is an element of $K$.
\end{proof}
This equivalence is noted to furthermore hold for classical information theory in the appendix, by an identical proof method. We finish by stating a definition we will find convenient to reference later.
\begin{definition}
The operational closure $K_{\mathbf{CP}}$ of a set $K \subseteq \mathbf{QC}(A,A')$ is given by the set of all: \[  \tikzfig{figs/extension_3b} \] where $\phi \in \textnormal{\texttt{dExt}}_{X,X'}(K)$, $\rho \in \mathbf{CP}(I,X)$, and the effect $ \sigma \in  \mathbf{CP}(X',I)$ for some $X,X'$.
\end{definition}
The operational closure $K_{\mathbf{CP}}$ of $K$ represents the set of all operations onto which locally-applicable transformations on $K$ can be uniquely extended.

Finally, we will need one extra condition on convex sets, that of at least containing all discard-prepare operations. 
\begin{definition}
    A set $K\subseteq QC(A,A')$ is normal if for every effect $d: A \rightarrow I$ and state $\rho : I \rightarrow A'$ then the process $\rho \circ d : A \rightarrow A'$ is in $K$.
\end{definition}
The reason that such sets are of importance to us is that the swap morphism, i.e the monoidal symmetry, is guaranteed to be within the dilation extension.  
\begin{lemma}
    Let $K \subseteq \mathbf{QC}(A,A')$ be a normal convex set, then $\texttt{SWAP}_{A,A'} \in\textnormal{\texttt{dExt}}_{A',A}(K)$
\end{lemma}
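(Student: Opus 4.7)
The plan is to verify membership in $\texttt{dext}_{A',A}(K)$ directly from the definition, by computing the reduced channel of $\texttt{SWAP}_{A,A'}$ for an arbitrary auxiliary state and then invoking normality.

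First I would unpack the requirement. Since $\mathbf{QC}$ is causal, the $\texttt{dext}$ condition simplifies to the form already noted in the text: a channel $\Phi : A \otimes A' \to A' \otimes A$ lies in $\texttt{dext}_{A',A}(K)$ if and only if for every state $\rho : I \to A'$ the reduced channel
\[ (\mathrm{id}_{A'} \otimes \discard_A) \circ \Phi \circ (\mathrm{id}_A \otimes \rho) \;:\; A \longrightarrow A' \]
is in $K$. So the proof reduces to computing this reduction for $\Phi = \texttt{SWAP}_{A,A'}$ and showing it lies in $K$ for every $\rho$.

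Second, I would carry out the diagrammatic computation. Feeding $\rho$ into the $A'$ input of the swap and then discarding the $A$ output simply "straightens" the crossed wires, so the reduction is the prepare-then-discard channel $\rho \circ \discard_A : A \to A'$, which ignores its input and outputs the fixed state $\rho$. This is a one-line graphical manipulation using only the yanking equation for the swap and the fact that $\discard_A$ is the unique effect on $A$ in the causal category $\mathbf{QC}$.

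Third, I would invoke the normality hypothesis. By definition of normal, for every effect $d : A \to I$ and every state $\rho : I \to A'$ the discard-prepare channel $\rho \circ d$ lies in $K$; specializing to $d = \discard_A$ shows that every such reduction of $\texttt{SWAP}_{A,A'}$ lies in $K$. Hence $\texttt{SWAP}_{A,A'} \in \texttt{dext}_{A',A}(K)$. There is no genuine obstacle here — the lemma is essentially a bookkeeping check whose only content is that causality forces the auxiliary-output side of a dilation to be a discard, which in turn forces the reduction of a swap to be a discard-and-prepare, exactly the class of channels that normality demands be present.
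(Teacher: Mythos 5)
Your proof is correct and is essentially the paper's own argument, just spelled out in more detail: the paper likewise observes that every reduction of the swap by a state and effect (equivalently, by causality, by a state and the discard) is a discard-prepare channel, which normality places in $K$. No gaps.
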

\begin{proof}
    Every process of $\mathbf{QC}(A,A')$ given by applying a state and effect to half of the swap channel, is a discard-prepare channel.
\end{proof}

\subsection{Proof of main theorem}

In this section we prove that the principle of local-applicability is sufficient to characterize quantum supermaps. Concretely we show that for any locally-applicable transformation $S$ on $\mathbf{QC}$ of type $K \rightarrow M$ with $K,M$ normal and convex,  there exists a $\mathbf{CP}$-supermap $S_{Q}$ of type $K \rightarrow M$ which implements it. Note that we from now on identify a set $K \subseteq \mathbf{QC}(A,A')$ with its dilation extension $\textnormal{\texttt{dExt}}_{-,=}(K)$ for convenience, so that we may refer to a supermap of type $\textnormal{\texttt{dExt}}_{-,=}(K) \rightarrow \textnormal{\texttt{dExt}}_{-,=}(M)$ as simply a supermap of type $K \rightarrow M$. The formal meaning of implementation is given by the existence of some $S_Q$ such that $S = \mathcal{F}_{\mathbf{CP},\mathbf{QC}}({S}_{Q})$ where $\mathcal{F}_{\mathbf{CP},\mathbf{QC}}$ is the previously defined embedding from $\mathbf{CP}$-supermaps into locally-applicable transformations. We will note in passing that, as a consequence, we will have constructed an \textit{equivalence of categories} between quantum supermaps on signalling constraints and locally-applicable transformations on pathing constraints, where the latter definition is void of any reference to compact closure, linearity, coarse-graining, or even causal structure. The equivalence will in fact hold for all locally-applicable transformations $K \rightarrow M$ between normal convex sets $K,M$ whether or not they be associated to signalling constraints. We begin by deriving convex linearity from locality in a general setting.

\begin{lemma}[Convex linearity]
Let $K,M$ be convex, then every locally-applicable transformation of type $S: K \rightarrow M$ on $\mathbf{QC}$ preserves convex combinations.
\end{lemma}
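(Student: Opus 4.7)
The plan is to exploit control to convert a convex combination of channels into a single dilated channel, and then push $S$ through by naturality.

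First I would fix $\phi_0, \phi_1 \in K_{X,X'}$ and a probability $p \in [0,1]$, and appeal to the previous lemma (convexity implies control) to produce a dilation $\Phi \in K_{X \otimes Y, X'}$ together with distinguishable states $\rho_0, \rho_1 : I \to Y$ such that inserting $\rho_i$ into the $Y$-wire of $\Phi$ yields $\phi_i$. Since sequential and parallel composition in $\mathbf{QC}$ are bilinear, inserting the state $p\rho_0 + (1-p)\rho_1$ into $\Phi$ produces exactly $p\phi_0 + (1-p)\phi_1$. Convexity of $K$ ensures this combination is still in $K$, so we may legitimately feed it to $S$.

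Next I would apply naturality in two directions. In one direction, naturality of $S$ along the auxiliary wire $Y$ (taking $f = p\rho_0 + (1-p)\rho_1$ and $g$ trivial) gives
\[
S_{X,X'}\!\bigl(p\phi_0 + (1-p)\phi_1\bigr) \;=\; S_{X \otimes Y, X'}(\Phi) \circ \bigl(\mathrm{id} \otimes (p\rho_0 + (1-p)\rho_1)\bigr).
\]
Composition in $\mathbf{QC}$ is bilinear, so the right-hand side splits as
\[
p\bigl[S_{X \otimes Y, X'}(\Phi) \circ (\mathrm{id} \otimes \rho_0)\bigr] + (1-p)\bigl[S_{X \otimes Y, X'}(\Phi) \circ (\mathrm{id} \otimes \rho_1)\bigr].
\]
Applying naturality in the opposite direction now (with $f = \rho_i$, one at a time) identifies each bracketed term with $S_{X,X'}(\phi_i)$, which delivers
\[
S_{X,X'}\!\bigl(p\phi_0 + (1-p)\phi_1\bigr) \;=\; p\,S_{X,X'}(\phi_0) + (1-p)\,S_{X,X'}(\phi_1),
\]
as required. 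The same argument goes through for each auxiliary pair $(X,X')$, giving convex linearity of the whole family $S_{X,X'}$.

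The only delicate point I anticipate is making sure the bilinearity step is legal, i.e.\ that the convex combination $p\rho_0 + (1-p)\rho_1$ is itself a valid morphism of $\mathbf{QC}$ (it is, since $\mathbf{QC}$ is convex) and that composing the auxiliary state with $\Phi$ inside $\mathbf{QC}$ distributes over the sum (true because composition is bilinear in $\mathbf{CP}$ and preserves trace). Once these linearity checks are in hand, the whole proof reduces to one application of naturality after interposing $\Phi$ between $S$ and the input.
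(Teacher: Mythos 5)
Your proposal is correct and follows essentially the same route as the paper's proof: use control to realise the convex combination as the insertion of the mixed state $p\rho_0+(1-p)\rho_1$ into a single dilation $\Phi$, pull that state out by naturality, split by bilinearity of composition, and push the $\rho_i$ back in by naturality to land on $p\,S_{X,X'}(\phi_0)+(1-p)\,S_{X,X'}(\phi_1)$. The linearity and membership checks you flag at the end are exactly the (implicit) justifications the paper relies on.
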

\begin{proof}
Consider a pair $\phi_i \in \textnormal{\texttt{dExt}}_{X,X'}(K)$ of channels, since $K$ is convex it has control, there exists $\phi \in \textnormal{\texttt{dExt}}_{X,X'}(K)$ such that \[    \tikzfig{figs/control_1_caus} , \] and so an arbitrary convex combination $p \phi_o + (1-p) \phi_1$ can be written as \[  p \phi_o + (1-p) \phi_1 \quad = \quad    \tikzfig{figs/convex_linear_2} , \] with $\rho_p := p \rho_o + (1-p) \rho_1$. Writing $p_1 = (1-p)$, then $ S_{X,X'}(p_0 \phi_0 + p_1 \phi_1 )$ is given using naturality by: \[\tikzfig{figs/convex_linear_3} \quad = \quad  \tikzfig{figs/convex_linear_4}  \quad   = \quad  \tikzfig{figs/convex_linear_5} .  \] Rewriting $\rho_{p}$ in terms of $\rho_0,\rho_1$ gives
\[     = \quad  \tikzfig{figs/convex_linear_6a} \quad + \quad  \tikzfig{figs/convex_linear_6b}, \] then using naturality again \[  = \quad \tikzfig{figs/convex_linear_7a} \quad + \quad  \tikzfig{figs/convex_linear_7b} ,  \] and finally using the definition of control recovers the result \[  = \quad \tikzfig{figs/convex_linear_8a} \quad + \quad  \tikzfig{figs/convex_linear_8b}  \quad = \quad  p_0 S_{X,X'}(\phi_0) + p_1 S_{X,X'}(\phi_1). \] 
\end{proof}
We note that $S$ consequently has a unique extension to the real-linear span of $K$. We now show using this result that locally-applicable transformations on $\mathbf{QC}(A,A')$ can be uniquely extended to $\mathbf{CP}(A,A')$, more generally for any set $K$ the map $S$ can be extended to the operational closure $K_{\mathbf{CP}}$ of $K$.

\begin{lemma}[Extension to operational closure]
Let $K,M$ be convex, then every locally-applicable transformation of type $S: K \rightarrow M$ on $\mathbf{QC}$ has a unique extension to a function ${S}_{\mathbf{CP}}:K_{\mathbf{CP}} \rightarrow M_{\mathbf{CP}}$.
\end{lemma}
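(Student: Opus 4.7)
The plan is to define $S_{\mathbf{CP}}$ by the only formula that could possibly extend $S$ in a way compatible with the defining naturality: for any $\psi \in K_{\mathbf{CP}}$ with a chosen decomposition $\psi = (\mathrm{id}_{A'}\otimes \sigma)\circ \Phi \circ (\mathrm{id}_A \otimes \rho)$, where $\Phi\in \texttt{dExt}_{X,X'}(K)$, $\rho\in \mathbf{CP}(I,X)$, $\sigma\in \mathbf{CP}(X',I)$, I would set
\[
S_{\mathbf{CP}}(\psi) \;:=\; (\mathrm{id}_{A'}\otimes \sigma)\circ S_{X,X'}(\Phi)\circ (\mathrm{id}_A \otimes \rho).
\]
The image lies in $M_{\mathbf{CP}}$ automatically because $S_{X,X'}(\Phi)\in \texttt{dExt}_{X,X'}(M)$, and uniqueness is immediate: any candidate extension is forced on every element of $K_{\mathbf{CP}}$ by this equation applied to any of its decompositions. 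So the whole content of the lemma is \emph{well-definedness}: if $\psi$ admits two decompositions indexed by $i=1,2$, then $(\mathrm{id}\otimes \sigma_1)\circ S(\Phi_1)\circ(\mathrm{id}\otimes\rho_1) = (\mathrm{id}\otimes \sigma_2)\circ S(\Phi_2)\circ(\mathrm{id}\otimes\rho_2)$.

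To prove well-definedness I would first \emph{upgrade naturality} from $\mathbf{QC}$-auxiliary data to $\mathbf{CP}$-auxiliary data. The preceding convex-linearity lemma lets me extend each $S_{X,X'}$ to a real-linear function on the real-linear span of $\texttt{dExt}_{X,X'}(K)$. In finite-dimensional quantum theory $\mathbf{QC}(Y,X)$ spans the real vector space of Hermitian-preserving maps $Y\to X$, which contains every CP map; so any CP $f:Y\to X$ can be written as a real combination $f=\sum_k c_k f_k$ of channels. For each channel summand, the extension-set axiom guarantees $\Phi\circ (\mathrm{id}_A\otimes f_k)\in \texttt{dExt}_{Y,X'}(K)$ and the defining naturality gives $S(\Phi\circ(\mathrm{id}\otimes f_k))=S(\Phi)\circ(\mathrm{id}\otimes f_k)$. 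Real-linearity of the extension then yields $S(\Phi\circ(\mathrm{id}\otimes f))=S(\Phi)\circ(\mathrm{id}\otimes f)$ for every CP $f$, and symmetrically for CP maps on the output auxiliary wire. In particular $S$ commutes with pre-composition by any CP state and post-composition by any CP effect on an auxiliary system.

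With CP-naturality established, two competing decompositions are compared by exhibiting a common refinement. Using control of the convex set $K$ (applied after padding $\Phi_1,\Phi_2$ to a shared auxiliary system by tensoring with discard--prepare channels, which is where normality of $K$ enters), one produces a single $\Psi \in \texttt{dExt}(K)$ from which each $\Phi_i$ is obtained by specialising a classical control register. A Stinespring-style purification of the CP data $\rho_i,\sigma_i$ lets me further absorb their quantum parts into $\Psi$ via CP-naturality, so that both decompositions of $\psi$ are reduced to evaluations of one and the same channel $\Psi$ against CP data on the auxiliary/control wires. Applying the already-defined $S$ to $\Psi$ and using CP-naturality to slide the auxiliary data through $S$ transports the hypothesis $\sigma_1\circ\Phi_1\circ\rho_1=\sigma_2\circ\Phi_2\circ\rho_2$ faithfully onto the corresponding equation for $S(\Psi)$, yielding the desired equality $\sigma_1\circ S(\Phi_1)\circ\rho_1=\sigma_2\circ S(\Phi_2)\circ\rho_2$.

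The hard part is precisely the CP-naturality upgrade: the defining axiom only sanctions composition with morphisms in $\mathbf{QC}$, whereas every decomposition in $K_{\mathbf{CP}}$ lives in the strictly larger $\mathbf{CP}$. Once convex linearity is leveraged to extend $S$ linearly and to justify replaying CP auxiliary data as real combinations of channels, well-definedness and uniqueness of $S_{\mathbf{CP}}:K_{\mathbf{CP}}\to M_{\mathbf{CP}}$ both fall out, and the conclusion that $S_{\mathbf{CP}}$ is indeed a function into $M_{\mathbf{CP}}$ is immediate from $S_{X,X'}(\Phi)\in\texttt{dExt}_{X,X'}(M)$.
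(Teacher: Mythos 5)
Your definition of $S_{\mathbf{CP}}$ and your reduction of the lemma to well-definedness match the paper, and your treatment of the input wire is sound: every $\rho\in\mathbf{CP}(I,X)$ is a nonnegative scalar times a normalized state, so the $\mathbb{R}$-linear extension of $S$ obtained from convex linearity lets you slide it through; this is exactly what the paper does. The gap is in the ``CP-naturality upgrade'' on the output wire. It is not true that every CP map is a real-linear combination of quantum channels: any real combination $\sum_k c_k f_k$ of channels satisfies $\Tr\circ\big(\sum_k c_k f_k\big)=\big(\sum_k c_k\big)\Tr$, so the real span of $\mathbf{QC}(Y,X)$ consists only of trace-rescaling Hermitian-preserving maps, not all of $\mathbf{CP}(Y,X)$. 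The failure is most severe exactly where you need it: since $\mathbf{QC}$ is causal, $\mathbf{QC}(X',I)$ is a singleton (the trace), so its real span is one-dimensional, whereas a generic CP effect $\sigma\in\mathbf{CP}(X',I)$ is an arbitrary positive operator. Your claim that the argument works ``symmetrically for CP maps on the output auxiliary wire'' therefore fails; states and effects are not symmetric in a causal category, and that asymmetry is the actual difficulty of this lemma.

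The paper handles the output wire by a different device: it completes $\sigma$ to a genuine trace-preserving channel $\Sigma:=\sigma\otimes|0\rangle\langle 0|+(\Tr-\sigma)\otimes|1\rangle\langle 1|$ with an at-least-two-dimensional flag output (after rescaling so that $\Tr-\sigma$ is CP), slides $\Sigma$ and $\Sigma'$ through the dotted wires using naturality for honest channels, and then uses orthogonality of the flag states together with $\mathbb{R}$-linearity to extract the $\sigma$-component and show that the difference of the two candidate values of $S_{\mathbf{CP}}(\phi)$ vanishes. Some such completion-plus-orthogonality step is indispensable and is absent from your argument; your subsequent ``common refinement'' construction also leans on the broken CP-naturality, so as written the well-definedness proof does not go through.
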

\begin{proof}
We give a candidate definition and then show that it is well-posed, consider some $\phi \in K_{\mathbf{CP}}$, for this $\phi$ there exists $\phi \in \textnormal{\texttt{dExt}}_{X,X'}(K)$ and $\rho,\sigma$ such that: \[  \tikzfig{figs/extension_1b} \quad = \quad \tikzfig{figs/extension_3b} . \] With respect to this choice we can define 
\[ \tikzfig{figs/final7} \quad := \quad \tikzfig{figs/extension_2} . \]
We now show that this definition is well-posed by showing that for any other choice of $\phi,\rho,\sigma$ the resulting outcome $S_{\mathbf{CP}}(\phi)$ would be the same. Indeed, let \[   \tikzfig{figs/extension_3b} \quad = \quad \tikzfig{figs/extension_3a}  \] then consider the process $\Sigma$ defined by \[   \tikzfig{figs/extension_4a} \quad = \quad \tikzfig{figs/extension_4b} \quad + \quad \tikzfig{figs/extension_4c}    \] where the output is taken to be at least $2$-dimensional. A process $\Sigma'$ can be defined similarly, note that both $\Sigma,\Sigma'$ are trace preserving and so are members of $\mathbf{QC}$, meaning that they can be slid along dotted wires. We now consider the result of applying them to $\phi,\phi'$: \[   \tikzfig{figs/extension_5a} \quad = \quad \tikzfig{figs/extension_5b} \quad + \quad \tikzfig{figs/extension_5c} \quad - \quad   \tikzfig{figs/extension_5d} , \]
this in turn implies that \[  \tikzfig{figs/extension_5a} \quad -  \quad \tikzfig{figs/extension_6a} \quad = \quad \tikzfig{figs/extension_5c} \quad - \quad   \tikzfig{figs/extension_6c} ,  \] and so then by linearity \begin{equation}   \tikzfig{figs/extension_7a} \quad -  \quad \tikzfig{figs/extension_7b} \quad = \quad \tikzfig{figs/extension_7c} \quad - \quad   \tikzfig{figs/extension_7d}  \end{equation}
We begin by showing that $\sigma,\sigma'$ can be safely pulled across dotted wires maintaining equality even though they are not quantum channels, Indeed consider the difference, and express in terms of $\Sigma,\Sigma'$: 
\[   \tikzfig{figs/extension_8a} \quad -  \quad \tikzfig{figs/extension_8b} \quad = \quad \tikzfig{figs/extension_9a} \quad - \quad   \tikzfig{figs/extension_9b},  \] 
Now, since $\Sigma,\Sigma'$ \textit{are} quantum channels they can be pulled through dotted lines, after which we use equation (1): 
\[ = \quad   \tikzfig{figs/extension_10a} \quad -  \quad \tikzfig{figs/extension_10b} \quad = \quad \tikzfig{figs/extension_11a} \quad - \quad   \tikzfig{figs/extension_11b} . \] 
The preparation $1:I \rightarrow X'$ is a quantum channel it can be pulled back through dotted wires, after which orthogonality implies that the difference has to be $0$:
\[ = \quad   \tikzfig{figs/extension_12a} \quad -  \quad \tikzfig{figs/extension_12b} \quad = \quad 0 . \] 
Since the difference is $0$ then it follows that \[  \tikzfig{figs/extension_8a} \quad =  \quad \tikzfig{figs/extension_8b} . \] We can now consider the bottom side, which is easier to reason with since every $\rho \in \mathbf{CP}(I,X)$ is given by $a \eta$ with $\eta$ a state in $\mathbf{QC}(I,X)$ and $a$ a scalar. Indeed, using the extension of $S$ to $\mathbb{R}$-linearity (and so to multiplication by scalar $\alpha$) gives \[  \tikzfig{figs/extension_8a} \quad =  \quad \tikzfig{figs/extension_8a1}  \quad =  \quad \tikzfig{figs/extension_8a3}  \quad  =  \quad \tikzfig{figs/extension_8a4},  \] and similarly for the $\phi',\rho',\sigma'$ we find \[  \tikzfig{figs/extension_8b} \quad =  \quad \tikzfig{figs/extension_8b4} .   \] All together then, using the seperate deductions made for $\rho$ and $\sigma$ gives  \[  \tikzfig{figs/extension_8b4} \quad =  \quad \tikzfig{figs/extension_8b}  \quad =  \quad \tikzfig{figs/extension_8a}  \quad =  \quad \tikzfig{figs/extension_8a4}  ,  \] and so $S_{\mathbf{CP}}$ is indeed well defined.
\end{proof}
The above is the key to our result, we now are ready to construct a candidate quantum supermap for simulating the action of our locally-applicable transformation $S_{X,X'}$ by tensor extension with identities on $X,X'$. To do so we apply our locally-applicable transformation $S$ to the swap-morphism, the intuition being that the swap gives a way to noiselessly extract information about the input behaviour of a higher order map, by converting its input into a pair of lower-order objects.
\begin{theorem}[Re-characterisation of supermaps]
Let $K,M$ be normal convex subsets of channels of $\mathbf{QC}$, there is a one-to-one correspondence between quantum supermaps of type $K \rightarrow M$ and locally-applicable transformations of the same type.
\end{theorem}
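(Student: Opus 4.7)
The plan is to invert the embedding $\mathcal{F}_{\mathbf{CP},\mathbf{QC}}$: given a locally-applicable transformation $S: K \to M$ I will construct a $\mathbf{CP}$-supermap $S_Q$ and verify $\mathcal{F}_{\mathbf{CP},\mathbf{QC}}(S_Q) = S$. Injectivity of $\mathcal{F}_{\mathbf{CP},\mathbf{QC}}$ is separately easy: two supermaps that agree as locally-applicable transformations must agree in particular on the swap, from which they can be recovered by compact closure. So the core of the proof is surjectivity.

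The construction uses the swap trick advertised just above. By normality of $K$ and the swap lemma, $\texttt{SWAP}_{A,A'} \in \texttt{dext}_{A',A}(K)$, so I can set $\tilde S := S_{A',A}(\texttt{SWAP}_{A,A'}) \in \texttt{dext}_{A',A}(M) \subseteq \mathbf{QC}(B\otimes A',\, B'\otimes A)$, and then define $S_Q \in \mathbf{CP}(A^{*}\otimes A',\, B^{*}\otimes B')$ as the compact-closed transpose of $\tilde S$ in $\mathbf{CP}$: cap the $A$-output against an $A^{*}$-input and cup a $B^{*}$-output out of the $B$-input. To show $\mathcal{F}_{\mathbf{CP},\mathbf{QC}}(S_Q)_{X,X'}(\Phi) = S_{X,X'}(\Phi)$ for every $\Phi \in \texttt{dext}_{X,X'}(K)$, I would decompose $\Phi$ inside $\mathbf{CP}$ as $\texttt{SWAP}_{A,A'}$ surrounded by compact-closed plumbing (Bell states, Bell effects, plus $\Phi$ itself acting on auxiliary wires), arranged so that the desired right-hand side is obtained from $\tilde S$ by exactly the cup-and-cap contraction that defines the action of $S_Q$ in the image of $\mathcal{F}_{\mathbf{CP},\mathbf{QC}}$.

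The main obstacle is this final equational step: sliding the cups, caps, and auxiliary $\Phi$-pieces across the dotted wires of $S$ in order to convert $S$ applied to the swap into $S$ applied to $\Phi$. Cups and caps are completely positive maps but not quantum channels, so naturality and dragging do not directly apply; this is precisely where the extension of $S$ to the operational closure $K_{\mathbf{CP}}$ does the work, licensing CP-but-not-trace-preserving morphisms to commute past the $S$-box. Convex linearity handles the sums arising from the Bell decompositions. Once that slide is carried out, $S_Q$ is automatically a $\mathbf{CP}$-supermap of the stated type, since its output on any $\Phi \in \texttt{dext}_{X,X'}(K)$ coincides with $S_{X,X'}(\Phi) \in \texttt{dext}_{X,X'}(M)$; together with injectivity this yields the claimed one-to-one correspondence.
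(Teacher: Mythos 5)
Your proposal is correct and follows essentially the same route as the paper: define $S_Q$ by applying $S$ to the swap (licensed by normality) and bending wires with cups and caps in $\mathbf{CP}$, then verify $\mathcal{F}_{\mathbf{CP},\mathbf{QC}}(S_Q)=S$ by sliding the non-trace-preserving plumbing past the $S$-box via the unique extension $S_{\mathbf{CP}}$ to the operational closure, finishing with the snake equation. The only cosmetic difference is that the paper organises the final equational step by applying arbitrary $\mathbf{CP}$ states and effects $\rho,\sigma$ to the auxiliary wires and concluding by separation, rather than by an explicit Bell-decomposition of $\Phi$, but the underlying ingredients (tensor separation, $S_{\mathbf{CP}}$, compact closure) are identical.
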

\begin{proof}
Given a locally-applicable transformation $S$ of type $K \rightarrow M$ on $\mathbf{QC}$ with $K \subseteq \mathbf{QC}(A,A')$ and $M \in \subseteq \mathbf{QC}(B,B')$ we define $S_{Q}: A^{*} \otimes A' \rightarrow B^{*} \otimes B'$ by: \[  \tikzfigscale{0.6}{figs/final1} \textrm{ } := \quad \tikzfig{figs/final2}. \] In other words we apply $S$ to the swap in $\mathbf{QC}$ and then embed into $\mathbf{CP}$ so that we may apply caps and cups, note that normality of $K,M$ is required here to ensure that the swap lives within their dilation extensions. We now consider the application of arbitrary states and effects $\rho, \sigma$ in $\mathbf{CP}$ to the auxiliary wires, and use the tensor seperation property: \[  \tikzfigscale{0.6}{figs/final3} \textrm{ } = \quad \tikzfig{figs/final4} \quad = \quad \tikzfig{figs/final5} . \] Using the well-posed definition of $S_{\mathbf{CP}}$ and then using compact closure to replace the cup and cap with the identity gives: \[ = \quad \tikzfig{figs/final6} \quad = \quad \tikzfig{figs/final7} \quad = \quad \tikzfig{figs/final8} . \] Since this is true for all $\rho, \sigma$ in $\mathbf{CP}$ it follows that: \[  \tikzfigscale{0.6}{figs/final9} \textrm{ } = \quad \tikzfig{figs/final10}, \] and so there indeed exists a quantum supermap of type $S_{Q}:K \rightarrow M ,$ such that $\mathcal{F}_{\mathbf{CP}}(S_{Q}) = S$ where $\mathcal{F}_{\mathbf{CP}}$ is the previously defined embedding of $\mathbf{CP}$-supermaps on $\mathbf{QC}$ into locally-applicable transformations on $\mathbf{QC}$.
\end{proof}
Note that in $\mathbf{QC}$ the set $\mathcal{E}_{sig}(\tau)$ is convex and that furthermore $\mathcal{E}_{sig}(\tau) = \mathcal{E}_{path}(\tau)$. It follows then that for arbitrary relations $\tau$ the locally-applicable transformations of type $\mathcal{E}_{path}(\tau) \rightarrow \mathcal{E}_{path}(\lambda)$ characterize the quantum supermaps of type $\mathcal{E}_{sig}(\tau) \rightarrow \mathcal{E}_{sig}(\lambda)$. As a stricter corollary the quantum supermaps on non-signalling channels, also referred to in the literature as process matrices have been characterised from principles of compositionality, without reference to causality or preferred time direction. To phrase this concisely we refer to the morphisms in: \[\mathcal{E}_{path}\Big(\tikzfig{figs/phi_nosig}\Big) , \]as non-pathing morphisms and refer to supermaps and locally-applicable transformations of type $K \rightarrow \mathbf{C}(B,B')$ for some $B,B'$ as being $\textit{on}$ the set $K$.
\begin{corollary}
There is a one-to-one correspondence between quantum supermaps on the set of non-signalling channels and locally-applicable transformations on the set of non-pathing channels.
\end{corollary}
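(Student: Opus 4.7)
The plan is to derive the corollary as a direct instance of the re-characterisation theorem above, applied to the particular subset of $\mathbf{QC}(A\otimes B, A'\otimes B')$ singled out by the no-signalling constraint. First I would fix notation by writing $\tau_{ns}$ for the relation whose solutions are the bipartite non-signalling channels, so that the claim concerns the domains
\[
\mathcal{E}_{sig}(\tau_{ns}) \quad \text{and} \quad \mathcal{E}_{path}(\tau_{ns}).
\]
The target object in both cases is $\mathbf{C}(B,B')$, which is automatically normal and convex, so only the domain needs to be checked against the hypotheses of the main theorem.

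Next I would verify that $\mathcal{E}_{sig}(\tau_{ns})$ is a normal convex subset of $\mathbf{QC}$. Convexity was already recorded in the preliminary material: for any relation $\tau$, the set $\mathcal{E}_{sig}(\tau)$ is closed under convex combinations, since both the defining equations of no-signalling and the trace-preservation condition are preserved by convex mixtures. Normality amounts to showing that every discard-prepare channel $\rho\circ d: A\otimes B \to A'\otimes B'$ lies in $\mathcal{E}_{sig}(\tau_{ns})$; this is immediate because such channels factor through the tensor unit and therefore transmit no information between any two wires, so they trivially satisfy every signalling constraint (in particular no-signalling in both directions).

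Having established these two properties, I would invoke the main theorem with $K=\mathcal{E}_{sig}(\tau_{ns})$ and $M=\mathbf{C}(B,B')$, obtaining a bijection between quantum supermaps on $K$ and locally-applicable transformations on $K$. The final step is to translate the locally-applicable transformation side from $\mathcal{E}_{sig}(\tau_{ns})$ to $\mathcal{E}_{path}(\tau_{ns})$: this uses the equality of signalling and pathing constraints in quantum theory, $\mathcal{E}_{sig}(\tau)=\mathcal{E}_{path}(\tau)$, as recorded earlier via the Eggeling--Schlingemann--Werner semicausal-equals-semilocalizable theorem, so the same bijection is precisely a one-to-one correspondence between quantum supermaps on non-signalling channels and locally-applicable transformations on non-pathing channels.

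The only potentially non-routine step is the normality verification, but since the relation defining non-signalling only constrains information flow and discard-prepare channels carry no information at all, this amounts to a one-line observation. I therefore do not anticipate any genuine technical obstacle; the corollary is really a packaging of the main theorem together with the earlier operational identity $\mathcal{E}_{sig}=\mathcal{E}_{path}$.
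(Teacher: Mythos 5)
Your proposal is correct and follows essentially the same route as the paper, which justifies the corollary in the surrounding text by noting that $\mathcal{E}_{sig}(\tau)$ is convex, that $\mathcal{E}_{sig}(\tau)=\mathcal{E}_{path}(\tau)$ in quantum theory, and then invoking the main re-characterisation theorem. In fact you are slightly more careful than the paper: you explicitly verify the normality hypothesis (discard-prepare channels trivially satisfy every signalling constraint), which the paper's discussion leaves implicit even though it is required by the statement of the main theorem.
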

Other example of convex sets of interest for which supermaps are now consequently characterised in terms of locality are those convex sets which are specified by sectorial constraints \cite{VanrietveldeRoutedCircuits,Wilson2021ComposableConstraints, Ormrod2022CausalSwitch, vanrietvelde2021coherent} which find use in the phrasing of more complex instances of pathing constraints \cite{Lorenz2020CausalTransformationsb} and in the analysis of fine-grained causal structures \cite{Ormrod2022CausalSwitch, Vilasini2022EmbeddingMatrices}. Finally, we comment that by denoting the restriction of the category $\mathbf{CPsup}[\mathbf{QC}]$ to convex sets by $ \mathbf{CPsup}[\mathbf{QC}]_{con}:= \mathbf{QS}_{con}$ and similarly denoting restriction of the category $\mathbf{lot}[\mathbf{C}]$ to convex sets by $\mathbf{lot}[\mathbf{C}]_{con}$ we can state the one-to-one correspondence in concise categorical language.
\begin{corollary}
There is an equivalence of categories $\mathbf{QS}_{con} \cong \mathbf{lot}[\mathbf{QC}]_{con}$
\end{corollary}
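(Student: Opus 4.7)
The plan is to package the pointwise bijection already established in the main theorem into a functor that is the identity on objects and bijective on each hom-set, and thereby an isomorphism (and in particular an equivalence) of categories.

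First, I would observe that by construction the two categories $\mathbf{QS}_{con}$ and $\mathbf{lot}[\mathbf{QC}]_{con}$ share the same class of objects, namely normal convex subsets $K \subseteq \mathbf{QC}(A,A')$. So my candidate functor $\widetilde{\mathcal{F}}$ will be the restriction of $\mathcal{F}_{\mathbf{CP},\mathbf{QC}} : \mathbf{CPsup}[\mathbf{QC}] \to \mathbf{lot}[\mathbf{QC}]$, constructed in the earlier lemma relating $\mathbf{P}$-supermaps to $\mathbf{P}$-combs, to the convex subcategories. Since $\mathcal{F}_{\mathbf{CP},\mathbf{QC}}$ was already shown to preserve sequential composition and identities, $\widetilde{\mathcal{F}}$ is automatically a functor, and it is the identity on objects, giving essential surjectivity for free.

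Next, for each pair $(K,M)$ of normal convex sets the main theorem already supplies a one-to-one correspondence between $\mathbf{QS}_{con}(K,M)$ and $\mathbf{lot}[\mathbf{QC}]_{con}(K,M)$. Fullness of $\widetilde{\mathcal{F}}$ is read off directly from the explicit reconstruction $S \mapsto S_Q$ using the swap channel, which yields a $\mathbf{CP}$-supermap with $\widetilde{\mathcal{F}}(S_Q) = S$. Faithfulness follows because any $S_Q$ is recoverable from $\widetilde{\mathcal{F}}(S_Q)$ evaluated at $\mathrm{SWAP}_{A,A'}$ via the compact-closure formula used in the proof of the theorem; equivalently, the main theorem states the assignment is a bijection, so distinct quantum supermaps necessarily induce distinct locally-applicable transformations.

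The \emph{main obstacle} was already overcome in the main theorem, especially the derivation of convex linearity and the well-posed extension to the operational closure, both of which rely essentially on normality and convexity of $K$ and $M$. The present corollary requires only the observation that a bijection on hom-sets, combined with the identity on objects and functoriality of $\mathcal{F}_{\mathbf{CP},\mathbf{QC}}$, assembles into an isomorphism of categories, which is strictly tighter than the claimed equivalence.
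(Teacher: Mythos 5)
Your proposal is correct and follows essentially the same route as the paper, which likewise notes that $\mathcal{F}_{\mathbf{CP}}$ is functorial, surjective on objects, and full and faithful by the one-to-one correspondence of the main theorem. Your only addition is the (valid) observation that since the functor is the identity on objects and bijective on hom-sets, one actually gets an isomorphism of categories rather than merely an equivalence.
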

This equivalence is confirmed by noting that the assignment $\mathcal{F}_{\mathbf{CP}}$ is functorial, surjective on objects, and full and faithful by the one-to-one correspondence observed in this section. In summary, quantum supermaps and equivalently process matrices, originally defined in terms of Choi-isomorphisms and probabilistic structure respectively, are instances of a purely compositional definition of higher order mapping of quantum channels.
\section{Multiparty Approach}
In this section we generalise the definition of locally-applicable transformation to multi-input transformations. Whilst we have technically already recovered the standard definition supermaps on multiple parties using locally-applicable transformations on non-signalling processes, the following approach has the advantage that it comes closer to constructing for free (without characterization in the quantum setting) some key compositional features of quantum supermaps.
\begin{definition}
A locally-applicable transformation of type $K^1_{-,=} \dots K^{n}_{-,=} \longrightarrow M_{-,=}$ is a family of functions \[S_{X_1 \dots X_n}^{X_1' \dots X_n'}: K^1_{X_1,X_1'} \dots K^{n}_{X_n,X_n'} \longrightarrow M_{X_1 \dots X_n,X_1' \dots X_n'} \] satisfying: \[ \tikzfig{figs/multi_slot_1a} \quad = \quad \tikzfig{figs/multi_slot_1b} . \]
\end{definition}
This definition for supermaps with multiple inputs allows to more easily include important examples of supermaps on arbitrary symmetric monoidal categories, such as combs \cite{Chiribella2008QuantumArchitecture} on general monoidal categories \cite{Coecke2014AResources}.
\begin{example}
Let $\mathbf{C}$ be a symmetric monoidal category, the locally-applicable transformation $\texttt{comb}[c_1 \dots c_{n+1}]$ of type $\texttt{comb}[c_1 \dots c_{n+1}] : \mathbf{C}(A_1 - , A_1' =) \dots \mathbf{C}(A_n - , A_n' =) \rightarrow \mathbf{C}(B - , B =)$ is the family of functions of type $\texttt{comb}[c_1 \dots c_{n+1}]_{X_1 \dots X_n,X_1' \dots X_n'} : \mathbf{C}(A_1 X_1 , A_1' X_1') \dots \mathbf{C}(A_n X_n , A_n' X_n') \rightarrow \mathbf{C}(B X_1 \dots X_n , B X_1' \dots X_n' )$ given by \[ \tikzfigscale{1}{figs/rep_big} . \]
\end{example}
Again, this definition can be extended to those combs of type $\texttt{comb}(c_1 \dots c_{n+1}) : K^1_{-,=} \dots K^{n}_{-,=} \longrightarrow M_{-,=}$ which are those combs such that for all $\phi_i \in K^i_{X_i,X_i'}$ then $\texttt{comb}[c_1 \dots c_{n+1}]_{X_1 \dots X_n,X_1' \dots X_n'}(\phi_1 \dots \phi_n) \in M_{ X_1 \dots X_n  , X_1' \dots X_n' }$. This definition can further be generalised to combs in $\mathbf{D}$ for any symmetric monoidal category $\mathbf{D}$ into which $\mathbf{C}$ is included.
\begin{example}
Let $\mathbf{C} \subseteq \mathbf{D}$ be an inclusion of symmetric monoidal categories and $K^i_{-,=},M_{-,=}$ be extendable sets of $\mathbf{C}$, then the $\mathbf{D}$-combs of type $K^1_{-,=} \dots K^{n}_{-,=} \longrightarrow M_{-,=}$ on $\mathbf{C}$ are those $\texttt{comb}[d_1 \dots d_{n+1}]$ in $\mathbf{D}$ such that for any $\phi_i \in K^i_{X_i,X_i'}$ then $\texttt{comb}[d_1 \dots d_{n+1}]_{X_1 \dots X_n,X_1' \dots X_n'}(\phi_1 \dots \phi_n) \in M_{ X_1 \dots X_n  , X_1' \dots X_n' }$.
\end{example}

By a proof method identical to that which is given in the singly-party case, for any compact closed category $\mathbf{P}$ a one-to-one correspondence can be given between the multiparty $\mathbf{P}$-combs and the multiparty $\mathbf{P}$-supermaps on any symmetric monoidal category $\mathbf{C} \subseteq \mathbf{P}$.

The advantage of the multi-input approach is that it freely recovers two key aspects of the compositional semantics of supermaps, the first being that they may be composed via nesting. Diagrammatically this nesting composition is given by taking $ S \circ (T^1 \dots T^m)(\phi_i^j)$ to be  \[  \tikzfig{figs/multi_slot_comp1} . \]
In category theoretic terms this nesting composition means that locally-applicable transformations always define a multicategory, with objects given by extendable sets $K_{-,=}$ and multi-morphisms of type $K^1_{-,=} \dots K^{n}_{-,=}  \rightarrow M_{-,=}$ given by locally-applicable transformations of the same type. This generalises the multi-categorical structures inherited by monoidal structure of $\mathbf{Caus}[\mathbf{C}]$ \cite{Kissinger2019AStructure}. The second key compositional feature of such supermaps is their enriched structure \cite{wilson2022Aprocesses}, there always exists locally-applicable transformations with multiple inputs which simply compose their input processes in sequence, or in parallel. The former appears as a locally-applicable transformation of type $\circ:\mathbf{C}(A,B)\mathbf{C}(B,C) \rightarrow \mathbf{C}(A,C)$: \[\tikzfig{figs/enrich_1}, \] the latter appears as a locally-applicable transformation of type $\mathbf{C}(A,A') \mathbf{C}(B,B') \rightarrow \mathbf{C}(A \otimes B,A' \otimes B')$: \[\tikzfig{figs/enrich_2}. \] We note here that at this level enrichment is in a multicategory rather than the more standard setting of enrichment in a monoidal category \cite{wilson2022Aprocesses}.

Consequently, the multi-input approach has some structural advantages. Conveniently, the results on single-input locally-applicable transformations generalise to multi-input locally applicable transformations by noting that when all but one input is filled, what remains is a standard locally applicable transformation.

\begin{corollary}
let $K_1, \dots , K_n, M$ be convex sets of morphisms of $\mathbf{QC}$, there is a one-to-one correspondence between $\mathbf{CP}$-supermaps of type $K_1 \dots K_n \rightarrow M$ and locally-applicable transformations of the same type.
\end{corollary}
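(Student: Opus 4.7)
The plan is to reduce the multi-input case to the single-input theorem already proven. As the authors hint immediately before the corollary, the key observation is that plugging all but one input of a multi-input locally-applicable transformation with fixed channels leaves behind an ordinary single-input locally-applicable transformation on the remaining slot. From this reduction, the multi-input $\mathbf{CP}$-supermap is built by iterating the single-input swap construction across every slot.

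First I would make the slot-fixing observation precise. Given a locally-applicable transformation $S:K^1_{-,=}\dots K^n_{-,=}\to M_{-,=}$ on $\mathbf{QC}$ with each $K^i$ and $M$ normal convex, and any fixed channels $\phi_j\in K^j_{X_j,X_j'}$ for $j\neq i$, the residual assignment $\phi_i\mapsto S_{X_1\dots X_n,X_1'\dots X_n'}(\phi_1,\dots,\phi_n)$ satisfies the defining equation of a single-input locally-applicable transformation in slot $i$: the relevant naturality is just the multi-input naturality of $S$ instantiated with $f,g$ acting only on the $X_i,X_i'$ wires and identities on the other $X_j,X_j'$. The residual target lies inside $\mathbf{QC}(B\otimes X_{j\neq i}\otimes X_i,B'\otimes X'_{j\neq i}\otimes X_i')$, which is itself a normal convex extension set, so the single-input main theorem applies.

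Second I would iterate the swap construction. For a single input, the realising $\mathbf{CP}$-supermap is $S_Q = S(\texttt{SWAP}_{A,A'})$ bent through compact closure, with normality of $K$ guaranteeing that $\texttt{SWAP}_{A,A'}$ lies in the dilation extension. Apply this to slot $1$ with $(\phi_2,\dots,\phi_n)$ fixed, producing a $\mathbf{CP}$-morphism $S^{(1)}_Q(\phi_2,\dots,\phi_n)$ such that capping $\phi_1$ against it recovers $S(\phi_1,\dots,\phi_n)$. The assignment $(\phi_2,\dots,\phi_n)\mapsto S^{(1)}_Q(\phi_2,\dots,\phi_n)$ is itself a multi-input locally-applicable transformation in the remaining $n-1$ slots, so by induction on $n$ it is realised by a single $\mathbf{CP}$-morphism. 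Unwinding the induction yields
\[ S_Q \;=\; S\bigl(\texttt{SWAP}_{A_1,A_1'},\dots,\texttt{SWAP}_{A_n,A_n'}\bigr), \]
bent via compact closure into a morphism $\bigotimes_i (A_i^{*}\otimes A_i')\to B^{*}\otimes B'$ in $\mathbf{CP}$. The converse direction is immediate: every multi-input $\mathbf{CP}$-supermap induces a locally-applicable transformation by the same prescription as the single-input embedding $\mathcal{F}_{\mathbf{CP},\mathbf{QC}}$, and injectivity follows by evaluating on swaps as in the single-input case.

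The main obstacle I expect is bookkeeping rather than conceptual: tracking the many auxiliary wires through the $n$ successive reductions, and verifying that at each stage the inductive hypothesis applies because the residual target still embeds into a normal convex hom-set of $\mathbf{QC}$. Once these combinatorial checks are dispatched, no new ideas beyond the single-input theorem, convex linearity, extension to $\mathbf{CP}$, and the swap trick are required.
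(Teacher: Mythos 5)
Your proposal is correct and follows essentially the same route as the paper: the authors likewise observe that fixing all but one input (up to braiding) yields a single-input locally-applicable transformation, invoke the main theorem on that slot, and repeat consecutively for slots $2$ through $n$ until every input is replaced by a swap bent through compact closure, with the converse direction handled by the interchange law. Your explicit attention to normality of the residual target at each stage is a point the paper glosses over, but it does not change the argument.
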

\begin{proof}
That $\mathbf{CP}$-supermaps still give locally-applicable transformations follows from multiple uses of the interchange law for symmetric monoidal categories. What remains is to prove that every locally-applicable transformation is implemented by a $\mathbf{CP}$-supermap. Up to braiding the family of functions given by $S((-)\psi_2 \dots \psi_n)_{X_1,X_1'}(\psi_1) := S(\psi_1 \dots \psi_n)_{X_1 \dots X_n, X_1' \dots X_n'}$ is a locally-applicable transformation with one-input, consequently we can use our main theorem to show that \[\tikzfig{figs/multinew1}\] is equal to \[ \tikzfig{figs/multinew2}. \] Repeating this step for each consecutive input from $2$ to $n$ returns  \[ \tikzfig{figs/multinew3}, \] which completes the proof. 
\end{proof}
The $\mathbf{CP}$-supermaps of type $\mathbf{C}(A_1,A_1')  \mathbf{C}(A_2,A_2') \rightarrow \mathbf{C}(B,B')$ are in one to one correspondence with $\mathbf{CP}$-supermaps of type \[\mathcal{E}_{path}\Big(\tikzfig{figs/phi_nosig}\Big)  \longrightarrow \mathbf{C}(B,B') \] which can be viewed as a consequence of linear distributivity of the $\mathbf{Caus}[\mathbf{C}]$ construction along with the fact that non-signalling channels are given by the double closure as defined in \cite{Kissinger2019AStructure} of the set of product channels. This statement can be generalized, for the same reasons the $\mathbf{CP}$-supermaps of type $\mathbf{C}(A_1,A_1') \dots \mathbf{C}(A_n,A_n') \rightarrow \mathbf{C}(B,B')$ are in one-to-one correspondence with the $\mathbf{CP}$-supermaps on the $n$-input/$n$-output non-signalling channels. 

Consequently, the locally-applicable transformations with multiple inputs give another way to characterize the supermaps on non-signalling channels used to study indefinite causal structure. This time, the construction provides for free some key compositional features of such supermaps.

\section{Summary and Outlook}

A definition of locally-applicable transformations is introduced which refers only to the circuit-theoretic structure of deterministic quantum information theory. By being purely compositional in nature it may be applied to arbitrary symmetric monoidal categories and stated concisely in the language of category theory using the notion of a natural transformation. When applied to quantum channels, including those equipped with signaling constraints, locally-applicable transformations are in one to one correspondence with quantum supermaps, so provide a re-axiomatisation for supermaps in terms of the principles of sequential composition, parallel composition, and locality.

A clear application of the redefinition of supermaps as locally applicable transformations is that it provides a candidate definition for supermaps on arbitrary operational probabilistic theories (OPTs) \cite{Chiribella2016QuantumPrinciples}. Indeed, the problem of defining indefinite causal orders over generalized physical theories has recently gained interest \cite{bavaresco2024indefinitecausalorderboxworld, sengupta2024achievingmaximalcausalindefiniteness}. The characterization theorems in this article (along with the outline of their analogues for classical information theory in the appendix) demonstrate that this definition is a particularly strong candidate for supermaps on OPTs, as it recovers the established definitions of supermaps in cases where they are well understood. Efficient application of this categorical approach would further benefit from the characterization of locally applicable transformations in broader classes of OPTs, as well as from addressing the problem of extracting probabilistic correlations from locally-applicable transformations (i.e those used to device-independently verify properties of causal structures \cite{Oreshkov2012QuantumOrder}).
 
Beyond this point, the construction presented in this article could in principle be a seed for a variety of new projects, including comparison of its infinite iteration with the construction of higher order causal categories, the free construction of various compositional features of quantum supermaps \cite{Kissinger2019AStructure, Bisio2019TheoreticalTheory, wilson2022Aprocesses}, and characterisation of locally-applicable transformations on infinite dimensional quantum theory. Less concretely and more broadly it is the authors' hope that this definition of supermap is closer to allowing for a suitable generalisation to more elaborate notions of space and parallel composition  \cite{Gogioso2019ASpace, Arrighi2021QuantumTheory, Moliner2017SpaceCategories} including but not limited to algebraic quantum field theories \cite{Haag1964AnTheory}, so that higher order quantum theories can be brought into closer contact with theories of quantum gravity where supermaps are hoped to provide insight by modelling of quantum causal structure \cite{Hardy2006TowardsStructure}.

\subsubsection*{Acknowledgements}
MW is grateful to J Hefford, C Comfort, A Vanreitvelde, N Ormrod, N Pinzani, and H Kristjánsson for useful conversations, and again to N Ormrod for reviewing an earlier draft of this manuscript. AK and GC are supported by the John Templeton Foundation through grant 62312, The Quantum Information Structure of Spacetime (qiss.fr). The opinions expressed in this publication are those of the authors and do not necessarily reflect the views of the John Templeton Foundation. GC was supported by the Croucher Foundation and by the Hong Kong Research Grant Council (RGC) though the Senior Research Fellowship Scheme SRFS2021-7S02. AK would like to acknowledge additional support from the Engineering and Physical Sciences Research Council grant number EP/Z002230/1: (De)constructing quantum software (DeQS).
MW was supported by University College London and the EPSRC Doctoral Training Centre for Delivering Quantum Technologies [grant numbers EP/L015242/1 and EP/W524335/1].

\bibliographystyle{utphys.bst}
\bibliography{ref_local}

\providecommand{\href}[2]{#2}\begingroup\raggedright\begin{thebibliography}{10}

\bibitem{Chiribella2008TransformingSupermaps}
G.~Chiribella, G.~M. D'Ariano, and P.~Perinotti, ``{Transforming quantum operations: Quantum supermaps},'' \href{http://dx.doi.org/10.1209/0295-5075/83/30004}{{\em EPL (Europhysics Letters)} {\bfseries 83} no.~3, (7, 2008) 30004}.

\bibitem{Chiribella2008QuantumArchitecture}
G.~Chiribella, G.~M. D'Ariano, and P.~Perinotti, ``{Quantum circuit architecture},'' \href{http://dx.doi.org/10.1103/PhysRevLett.101.060401}{{\em Physical Review Letters} {\bfseries 101} no.~6, (8, 2008) 060401}.

\bibitem{Chiribella2013QuantumStructure}
G.~Chiribella, G.~M. D'Ariano, P.~Perinotti, and B.~Valiron, ``{Quantum computations without definite causal structure},'' \href{http://dx.doi.org/10.1103/PhysRevA.88.022318}{{\em Physical Review A - Atomic, Molecular, and Optical Physics} {\bfseries 88} no.~2, (8, 2013) 022318}.

\bibitem{Chiribella2009TheoreticalNetworks}
G.~Chiribella, G.~M. D'Ariano, and P.~Perinotti, ``{Theoretical framework for quantum networks},'' \href{http://dx.doi.org/10.1103/PhysRevA.80.022339}{{\em Physical Review A - Atomic, Molecular, and Optical Physics} {\bfseries 80} no.~2, (4, 2009) }.

\bibitem{Chiribella2010NormalOperations}
G.~Chiribella, A.~Toigo, and V.~Umanit{\`{a}}, ``{Normal completely positive maps on the space of quantum operations},'' \href{http://dx.doi.org/10.1142/S1230161213500030}{{\em Open Systems and Information Dynamics} {\bfseries 20} no.~1, (12, 2010) }.

\bibitem{Bisio2019TheoreticalTheory}
A.~Bisio and P.~Perinotti, ``{Theoretical framework for higher-order quantum theory},'' \href{http://dx.doi.org/10.1098/rspa.2018.0706}{{\em Proceedings of the Royal Society A: Mathematical, Physical and Engineering Sciences} {\bfseries 475} no.~2225, (5, 2019) 20180706}.

\bibitem{Kissinger2019AStructure}
A.~Kissinger and S.~Uijlen, ``A categorical semantics for causal structure,'' \href{http://dx.doi.org/10.23638/LMCS-15(3:15)2019}{{\em Logical Methods in Computer Science} {\bfseries 15} no.~3, (2019) }.

\bibitem{wilson2022Aprocesses}
M.~Wilson and G.~Chiribella, ``{A Mathematical Framework for Transformations of Physical Processes},'' \href{http://arxiv.org/abs/2204.04319}{{\ttfamily arXiv:2204.04319 [quant-ph]}}.

\bibitem{Perinotti2017CausalComputations}
P.~Perinotti, {\em Causal Structures and the Classification of Higher Order Quantum Computations}, \href{http://dx.doi.org/10.1007/978-3-319-68655-4_7}{pp.~103--127}.
\newblock Springer International Publishing, Cham, 2017.

\bibitem{Kristjansson2020ResourceCommunication}
H.~Kristj{\'{a}}nsson, G.~Chiribella, S.~Salek, D.~Ebler, and M.~Wilson, ``{Resource theories of communication},'' \href{http://dx.doi.org/10.1088/1367-2630/ab8ef7}{{\em New Journal of Physics} {\bfseries 22} no.~7, (7, 2020) 073014}.

\bibitem{GourDynamicalResources}
G.~Gour and C.~M. Scandolo, ``{Dynamical Resources},'' \href{http://arxiv.org/abs/2101.01552v1}{{\ttfamily arXiv:2101.01552v1 [quant-ph]}}.

\bibitem{Gour2021EntanglementChannel}
G.~Gour and C.~M. Scandolo, ``{Entanglement of a bipartite channel},'' \href{http://dx.doi.org/10.1103/PhysRevA.103.062422}{{\em Physical Review A} {\bfseries 103} (2021) 062422}.

\bibitem{Ebler2018EnhancedOrder}
D.~Ebler, S.~Salek, and G.~Chiribella, ``{Enhanced Communication with the Assistance of Indefinite Causal Order},'' \href{http://dx.doi.org/10.1103/PhysRevLett.120.120502}{{\em Physical Review Letters} {\bfseries 120} no.~12, (3, 2018) 120502}.

\bibitem{Procopio2019CommunicationScenariob}
L.~M. Procopio, F.~Delgado, M.~Enr{\'{i}}quez, N.~Belabas, and J.~A. Levenson, ``{Communication Enhancement through Quantum Coherent Control of N Channels in an Indefinite Causal-Order Scenario},'' \href{http://dx.doi.org/10.3390/e21101012}{{\em Entropy} {\bfseries 21} no.~10, (10, 2019) 1012}.

\bibitem{Procopio2020SendingOrders}
L.~M. Procopio, F.~Delgado, M.~Enr{\'{i}}quez, N.~Belabas, and J.~A. Levenson, ``{Sending classical information via three noisy channels in superposition of causal orders},'' \href{http://dx.doi.org/10.1103/PhysRevA.101.012346}{{\em Physical Review A} {\bfseries 101} no.~1, (1, 2020) 012346}.

\bibitem{Chiribella2021IndefiniteChannels}
G.~Chiribella, M.~Banik, S.~S. Bhattacharya, T.~Guha, M.~Alimuddin, A.~Roy, S.~Saha, S.~Agrawal, and G.~Kar, ``{Indefinite causal order enables perfect quantum communication with zero capacity channels},'' \href{http://dx.doi.org/10.1088/1367-2630/abe7a0}{{\em New Journal of Physics} (2, 2021) }.

\bibitem{Chiribella2020QuantumOrders}
G.~Chiribella, M.~Wilson, and H.~Chau, ``{Quantum and Classical Data Transmission Through Completely Depolarising Channels in a Superposition of Cyclic Orders},'' \href{http://dx.doi.org/10.1103/PhysRevLett.127.190502}{{\em Physical Review Letters} {\bfseries 127} no.~19, (11, 2021) 190502}.

\bibitem{Wilson2020ASwitches}
M.~Wilson and G.~Chiribella, ``{A Diagrammatic Approach to Information Transmission in Generalised Switches},'' \href{http://dx.doi.org/10.4204/EPTCS.340.17}{{\em Electronic Proceedings in Theoretical Computer Science (EPTCS)} {\bfseries 340} (2021) 333--348}.

\bibitem{Quintino2019ProbabilisticOperations}
M.~T. Quintino, Q.~Dong, A.~Shimbo, A.~Soeda, and M.~Murao, ``Probabilistic exact universal quantum circuits for transforming unitary operations,'' \href{http://dx.doi.org/10.1103/PhysRevA.100.062339}{{\em Phys. Rev. A} {\bfseries 100} (Dec, 2019) 062339}.

\bibitem{Quintino2019ReversingOperations}
M.~T. Quintino, Q.~Dong, A.~Shimbo, A.~Soeda, and M.~Murao, ``{Reversing Unknown Quantum Transformations: Universal Quantum Circuit for Inverting General Unitary Operations},'' \href{http://dx.doi.org/10.1103/PhysRevLett.123.210502}{{\em Physical Review Letters} {\bfseries 123} no.~21, (11, 2019) 210502}.

\bibitem{Dong2021Success-or-Draw:Computation}
Q.~Dong, M.~T. Quintino, A.~Soeda, and M.~Murao, ``{Success-or-Draw: A Strategy Allowing Repeat-Until-Success in Quantum Computation},'' \href{http://dx.doi.org/10.1103/PhysRevLett.126.150504}{{\em Physical Review Letters} {\bfseries 126} no.~15, (4, 2021) 150504}.

\bibitem{Guerin2016ExponentialCommunication}
P.~A. Gu{\'{e}}rin, A.~Feix, M.~Ara{\'{u}}jo, and {\^C}.~Brukner, ``{Exponential Communication Complexity Advantage from Quantum Superposition of the Direction of Communication},'' \href{http://dx.doi.org/10.1103/PhysRevLett.117.100502}{{\em Physical Review Letters} {\bfseries 117} no.~10, (9, 2016) 100502}.

\bibitem{Chiribella2012PerfectStructures}
G.~Chiribella, ``{Perfect discrimination of no-signalling channels via quantum superposition of causal structures},'' \href{http://dx.doi.org/10.1103/PhysRevA.86.040301}{{\em Physical Review A - Atomic, Molecular, and Optical Physics} {\bfseries 86} no.~4, (10, 2012) 040301}.

\bibitem{Araujo2014ComputationalGates}
M.~Ara{\'{u}}jo, F.~Costa, and {\^C}.~Brukner, ``{Computational advantage from quantum-controlled ordering of gates},'' \href{http://dx.doi.org/10.1103/PhysRevLett.113.250402}{{\em Physical Review Letters} {\bfseries 113} no.~25, (12, 2014) 250402}.

\bibitem{Feix2015QuantumResource}
A.~Feix, M.~Ara{\'u}jo, and {\v C}.~Brukner, ``{Quantum superposition of the order of parties as a communication resource},'' \href{http://dx.doi.org/10.1103/PhysRevA.92.052326}{{\em Physical Review A} {\bfseries 92} no.~5, (11, 2015) 052326}.

\bibitem{Roman2020OpenCalculus}
M.~Rom\'an, ``{Open Diagrams via Coend Calculus},'' \href{http://dx.doi.org/10.4204/EPTCS.333.5}{{\em Electronic Proceedings in Theoretical Computer Science} {\bfseries 333} (2, 2021) 65--78}.

\bibitem{Roman2020CombFeedback}
M.~Rom{\'{a}}n, ``{Comb Diagrams for Discrete-Time Feedback},'' tech. rep., 2020.
\newblock \href{http://arxiv.org/abs/2003.06214v1}{{\ttfamily arXiv:2003.06214v1 [quant-ph]}}.

\bibitem{Boisseau2022CorneringOptics}
G.~Boisseau, C.~Nester, and M.~Rom{\'a}n, ``{Cornering Optics},'' \href{http://dx.doi.org/10.4204/EPTCS.380.6}{{\em Electronic Proceedings in Theoretical Computer Science} {\bfseries 380} (2023) 97--110}.

\bibitem{Hedges2017CoherenceGames}
J.~Hedges, ``{Coherence for lenses and open games},'' \href{http://arxiv.org/abs/1704.02230}{{\ttfamily arXiv:1704.02230 [quant-ph]}}.

\bibitem{Riley2018CategoriesOptics}
M.~Riley, ``{Categories of Optics},'' \href{http://arxiv.org/abs/1809.00738}{{\ttfamily arXiv:1809.00738 [math.CT]}}.

\bibitem{Ghani2016CompositionalTheory}
N.~Ghani, J.~Hedges, V.~Winschel, and P.~Zahn, ``{Compositional game theory},'' \href{http://dx.doi.org/10.1145/3209108.3209165}{{\em Proceedings of the 33rd Annual ACM/IEEE Symposium on Logic in Computer Science (LICS 2018)} (2018) 472--481}.

\bibitem{Pollock2015Non-MarkovianCharacterisation}
F.~A. Pollock, C.~Rodr{\'{i}}guez-Rosario, T.~Frauenheim, M.~Paternostro, and K.~Modi, ``{Non-Markovian quantum processes: complete framework and efficient characterisation},'' \href{http://dx.doi.org/https://doi.org/10.1103/PhysRevA.97.012127}{{\em Physical Review A} {\bfseries 97} no.~1, (Jan, 2018) 012127}.

\bibitem{Hefford2022CoendCombs}
J.~Hefford and C.~Comfort, ``{Coend Optics for Quantum Combs},'' \href{http://arxiv.org/abs/2205.09027}{{\ttfamily arXiv:2205.09027 [quant-ph]}}.

\bibitem{Chiribella2009QuantumStructure}
G.~Chiribella, G.~M. D'Ariano, P.~Perinotti, and B.~Valiron, ``{Quantum computations without definite causal structure},'' \href{http://dx.doi.org/10.1103/PhysRevA.88.022318}{{\em Physical Review A - Atomic, Molecular, and Optical Physics} {\bfseries 88} no.~2, (12, 2009) }.

\bibitem{Araujo2017QuantumStructures}
M.~Ara{\'{u}}jo, P.~A. Gu{\'{e}}rin, and {\"A}.~Baumeler, ``{Quantum computation with indefinite causal structures},'' \href{http://dx.doi.org/https://doi.org/10.1103/PhysRevA.96.052315}{{\em Physical Review A} {\bfseries 96} no.~5, (6, 2017) }.

\bibitem{Renner2021ReassessingGates}
M.~J. Renner and {\^C}.~Brukner, ``{Reassessing the computational advantage of quantum-controlled ordering of gates},'' \href{http://dx.doi.org/https://doi.org/10.1103/PhysRevResearch.3.043012}{{\em Physical Review Research} {\bfseries 3} no.~4, (2, 2021) }.

\bibitem{Felce2020QuantumOrder}
D.~Felce and V.~Vedral, ``Quantum refrigeration with indefinite causal order,'' \href{http://dx.doi.org/10.1103/PhysRevLett.125.070603}{{\em Phys. Rev. Lett.} {\bfseries 125} (Aug, 2020) 070603}.

\bibitem{Hardy1992QuantumTheories}
L.~Hardy, ``{Quantum mechanics, local realistic theories, and Lorentz-invariant realistic theories},'' \href{http://dx.doi.org/10.1103/PhysRevLett.68.2981}{{\em Physical Review Letters} {\bfseries 68} no.~20, (1992) 2981--2984}.

\bibitem{Hardy2006TowardsStructure}
L.~Hardy, ``{Towards Quantum Gravity: A Framework for Probabilistic Theories with Non-Fixed Causal Structure},'' \href{http://dx.doi.org/https://doi.org/10.1088/1751-8113/40/12/S12}{{\em Journal of Physics A: Mathematical and Theoretical} {\bfseries 40} no.~12, (8, 2006) 3081--3099}.

\bibitem{Oreshkov2012QuantumOrder}
O.~Oreshkov, F.~Costa, and {\^C}.~Brukner, ``{Quantum correlations with no causal order},'' \href{http://dx.doi.org/10.1038/ncomms2076}{{\em Nature Communications} {\bfseries 3} (2012) }.

\bibitem{Baumeler2013PerfectOrder}
{\"A}.~Baumeler and S.~Wolf, ``{Perfect signaling among three parties violating predefined causal order},'' \href{http://dx.doi.org/https://doi.org/10.1109/ISIT.2014.6874888}{{\em IEEE International Symposium on Information Theory - Proceedings} (12, 2013) 526--530}.

\bibitem{Baumeler2015TheOrder}
{\"A}.~Baumeler and S.~Wolf, ``{The space of logically consistent classical processes without causal order},'' \href{http://dx.doi.org/https://doi.org/10.1088/1367-2630/18/1/013036}{{\em New Journal of Physics} {\bfseries 18} no.~1, (7, 2015) }.

\bibitem{Castro-Ruiz2018DynamicsStructures}
E.~Castro-Ruiz, F.~Giacomini, and {\^C}.~Brukner, ``{Dynamics of Quantum Causal Structures},'' \href{http://dx.doi.org/10.1103/PhysRevX.8.011047}{{\em Physical Review X} {\bfseries 8} no.~1, (3, 2018) 011047}.

\bibitem{Baumann2021NoncausalCircuits}
V.~Baumann, M.~Krumm, P.~A. Gu{\'{e}}rin, and {\^C}.~Brukner, ``{Noncausal Page-Wootters circuits},'' \href{http://dx.doi.org/https://doi.org/10.1103/PhysRevResearch.4.013180}{{\em Physical Review Research} {\bfseries 4} no.~1, (5, 2021) }.

\bibitem{Costa2022AOrders}
F.~Costa, ``{A no-go theorem for superpositions of causal orders},'' \href{http://dx.doi.org/https://doi.org/10.22331/q-2022-03-01-663}{{\em Quantum} {\bfseries 6} (3, 2022) 663}.

\bibitem{Silva2017ConnectingStates}
R.~Silva, Y.~Guryanova, A.~J. Short, P.~Skrzypczyk, N.~Brunner, and S.~Popescu, ``{Connecting processes with indefinite causal order and multi-time quantum states},'' \href{http://dx.doi.org/https://doi.org/10.1088/1367-2630/aa84fe}{{\em New Journal of Physics} {\bfseries 19} no.~10, (1, 2017) }.

\bibitem{Vilasini2022EmbeddingMatrices}
V.~Vilasini and R.~Renner, ``Embedding cyclic information-theoretic structures in acyclic space-times: No-go results for indefinite causality,'' \href{http://dx.doi.org/10.1103/PhysRevA.110.022227}{{\em Phys. Rev. A} {\bfseries 110} (Aug, 2024) 022227}.

\bibitem{Ormrod2022CausalSwitch}
N.~Ormrod, A.~Vanrietvelde, and J.~Barrett, ``Causal structure in the presence of sectorial constraints, with application to the quantum switch,'' \href{http://dx.doi.org/10.22331/q-2023-06-01-1028}{{\em {Quantum}} {\bfseries 7} (June, 2023) 1028}.

\bibitem{Felce2021IndefiniteTime}
D.~Felce, N.~T. Vidal, V.~Vedral, and E.~O. Dias, ``Indefinite causal orders from superpositions in time,'' \href{http://dx.doi.org/10.1103/PhysRevA.105.062216}{{\em Phys. Rev. A} {\bfseries 105} (Jun, 2022) 062216}.

\bibitem{Oreshkov2016CausalProcesses}
O.~Oreshkov and C.~Giarmatzi, ``{Causal and causally separable processes},'' \href{http://dx.doi.org/https://doi.org/10.1088/1367-2630/18/9/093020}{{\em New Journal of Physics} {\bfseries 18} no.~9, (9, 2016) 093020}.

\bibitem{Wechs2021QuantumOrder}
J.~Wechs, H.~Dourdent, A.~A. Abbott, and C.~Branciard, ``{Quantum circuits with classical versus quantum control of causal order},'' \href{http://dx.doi.org/https://doi.org/10.1103/PRXQuantum.2.030335}{{\em PRX Quantum} {\bfseries 2} no.~3, (1, 2021) }.

\bibitem{Purves2021QuantumInequalityb}
T.~Purves and A.~J. Short, ``Quantum theory cannot violate a causal inequality,'' \href{http://dx.doi.org/10.1103/PhysRevLett.127.110402}{{\em Phys. Rev. Lett.} {\bfseries 127} (Sep, 2021) 110402}.

\bibitem{tein_inequalities}
T.~van~der Lugt, J.~Barrett, and G.~Chiribella, ``Device-independent certification of indefinite causal order in the quantum switch,'' \href{http://dx.doi.org/10.1038/s41467-023-40162-8}{{\em Nature Communications} {\bfseries 14} no.~1, (2023) 5811}.

\bibitem{pinzani_topology_geometry}
S.~Gogioso and N.~Pinzani, ``The topology and geometry of causality,'' 2022.
\newblock \url{https://arxiv.org/abs/2206.08911}.

\bibitem{Choi1975CompletelyMatrices}
M.~D. Choi, ``{Completely positive linear maps on complex matrices},'' \href{http://dx.doi.org/10.1016/0024-3795(75)90075-0}{{\em Linear Algebra and Its Applications} {\bfseries 10} no.~3, (1975) 285--290}.

\bibitem{Coecke2017PicturingReasoning}
B.~Coecke and A.~Kissinger, \href{http://dx.doi.org/10.1017/9781316219317}{{\em {Picturing quantum processes: A first course in quantum theory and diagrammatic reasoning}}}.
\newblock Cambridge University Press, 3, 2017.

\bibitem{Heunen2019CategoriesTheory}
C.~Heunen and J.~Vicary, ``{Categories for Quantum Theory},'' \href{http://dx.doi.org/10.1093/OSO/9780198739623.001.0001}{{\em Categories for Quantum Theory} (11, 2019) }.

\bibitem{Abramsky2004AProtocols}
S.~Abramsky and B.~Coecke, ``A categorical semantics of quantum protocols,'' \href{http://dx.doi.org/10.1109/LICS.2004.1319636}{{\em Proceedings of the 19th Annual IEEE Symposium on Logic in Computer Science, 2004.} (2004) 415--425}.

\bibitem{Coecke2006KindergartenNotes}
B.~Coecke, ``Kindergarten quantum mechanics: Lecture notes,'' \href{http://dx.doi.org/10.1063/1.2158713}{{\em AIP Conference Proceedings} {\bfseries 810} no.~1, (01, 2006) 81--98}.

\bibitem{Coecke2010QuantumPicturalism}
B.~Coecke, ``{Quantum picturalism},'' \href{http://dx.doi.org/10.1080/00107510903257624}{{\em Contemporary Physics} {\bfseries 51} no.~1, (1, 2010) 59--83}.

\bibitem{Selby2018ReconstructingPostulates}
J.~H. Selby, C.~M. Scandolo, and B.~Coecke, ``Reconstructing quantum theory from diagrammatic postulates,'' \href{http://dx.doi.org/10.22331/q-2021-04-28-445}{{\em {Quantum}} {\bfseries 5} (Apr., 2021) 445}.

\bibitem{Lane1971CategoriesMathematician}
S.~M. Lane, \href{http://dx.doi.org/https://doi.org/10.1007/978-1-4757-4721-8}{{\em {Categories for the Working Mathematician}}}, vol.~5 of {\em Graduate Texts in Mathematics}.
\newblock Springer New York, 1971.
\newblock 10.1007/978-1-4612-9839-7.

\bibitem{Joyal2021TracedCategories}
A.~Joyal and D.~Verity, ``{Traced monoidal categories},'' \href{http://dx.doi.org/10.1017/S0305004100074338}{{\em Math. Proc. Gamb. Phil. Soc} {\bfseries 119} (2021) 447--451}.

\bibitem{Hasegawa97recursionfrom}
M.~Hasegawa, ``{Recursion from Cyclic Sharing: Traced Monoidal Categories and Models of Cyclic Lambda Calculi},'' \href{http://dx.doi.org/https://doi.org/10.1007/3-540-62688-3_37}{{\em Springer Verlag} (1997) 196213}.

\bibitem{Pinzani2019CategoricalTravel}
N.~Pinzani, S.~Gogioso, and B.~Coecke, ``{Categorical Semantics for Time Travel},'' \href{http://dx.doi.org/10.48550/arxiv.1902.00032}{{\em Proceedings - Symposium on Logic in Computer Science} {\bfseries 2019-June} (1, 2019) }.

\bibitem{Gogioso2019ASpace}
S.~Gogioso, ``{A Process-Theoretic Church of the Larger Hilbert Space},'' \href{http://arxiv.org/abs/1905.13117}{{\ttfamily arXiv:1905.13117 [quant-ph]}}.

\bibitem{Arrighi2021QuantumTheory}
P.~Arrighi, A.~Durbec, and M.~Wilson, ``Quantum networks theory,'' \href{http://dx.doi.org/10.22331/q-2024-10-23-1508}{{\em {Quantum}} {\bfseries 8} (Oct., 2024) 1508}.

\bibitem{Haag1964AnTheory}
R.~Haag and D.~Kastler, ``{An Algebraic approach to quantum field theory},'' \href{http://dx.doi.org/10.1063/1.1704187}{{\em J.Math.Phys.} {\bfseries 5} no.~7, (1964) 848--861}.

\bibitem{roberts_algebra}
J.~E. {Roberts}, ``{Localization in algebraic field theory},'' \href{http://dx.doi.org/10.1007/BF02029135}{{\em Communications in Mathematical Physics} {\bfseries 85} no.~1, (Aug., 1982) 87--98}.

\bibitem{Doplichera2010TheChallenges}
S.~Doplichera, ``{The principle of locality: Effectiveness, fate, and challenges},'' \href{http://dx.doi.org/10.1063/1.3276100}{{\em Journal of Mathematical Physics} {\bfseries 51} no.~1, (1, 2010) 015218}.

\bibitem{Brunetti2001ThePhysics}
R.~Brunetti, K.~Fredenhagen, and R.~Verch, ``{The generally covariant locality principle -- A new paradigm for local quantum physics},'' \href{http://dx.doi.org/10.1007/s00220-003-0815-7}{{\em Communications in Mathematical Physics} {\bfseries 237} no.~1-2, (12, 2001) 31--68}.

\bibitem{Strocchi2004RelativisticTheory}
F.~Strocchi, ``{Relativistic Quantum Mechanics and Field Theory},'' \href{http://dx.doi.org/10.1023/B:FOOP.0000019625.30165.35}{{\em Foundations of Physics} {\bfseries 34} no.~3, (1, 2004) 501--527}.

\bibitem{Coecke2013CausalProcesses}
B.~Coecke and R.~Lal, ``{Causal Categories: Relativistically Interacting Processes},'' \href{http://dx.doi.org/10.1007/s10701-012-9646-8}{{\em Foundations of Physics} {\bfseries 43} no.~4, (4, 2013) 458--501}.

\bibitem{Eggeling2002SemicausalSemilocalizable}
T.~Eggeling, D.~Schlingemann, and R.~F. Werner, ``{Semicausal operations are semilocalizable},'' \href{http://dx.doi.org/10.1209/epl/i2002-00579-4}{{\em Europhysics Letters} {\bfseries 57} no.~6, (2002) 782--788}.

\bibitem{perinotti_causal_influence}
P.~Perinotti, ``Causal influence in operational probabilistic theories,'' \href{http://dx.doi.org/10.22331/q-2021-08-03-515}{{\em Quantum} {\bfseries 5} (Aug, 2021) 515}.

\bibitem{Lorenz2020CausalTransformationsb}
R.~Lorenz and J.~Barrett, ``Causal and compositional structure of unitary transformations,'' \href{http://dx.doi.org/10.22331/q-2021-07-28-511}{{\em {Quantum}} {\bfseries 5} (July, 2021) 511}.

\bibitem{Wilson2021ComposableConstraints}
M.~Wilson and A.~Vanrietvelde, ``Composable constraints,'' \href{http://arxiv.org/abs/2112.06818}{{\ttfamily arXiv:2112.06818 [quant-ph]}}.

\bibitem{Chiribella2016QuantumPrinciples}
G.~M. D’Ariano, G.~Chiribella, and P.~Perinotti, \href{http://dx.doi.org/10.1017/9781107338340}{{\em Quantum Theory from First Principles: An Informational Approach}}.
\newblock Cambridge University Press, 2017.

\bibitem{Coecke2014AResources}
B.~Coecke, T.~Fritz, and R.~W. Spekkens, ``A mathematical theory of resources,'' \href{http://dx.doi.org/10.1016/j.ic.2016.02.008}{{\em Inf. Comput.} {\bfseries 250} (2016) 59--86}.

\bibitem{Gogioso2017Infinite-dimensionalMechanics}
S.~Gogioso and F.~Genovese, ``{Infinite-dimensional categorical quantum mechanics},'' \href{http://dx.doi.org/10.4204/EPTCS.236.4}{{\em Electronic Proceedings in Theoretical Computer Science, EPTCS} {\bfseries 236} (1, 2017) 51--69}.

\bibitem{VanrietveldeRoutedCircuits}
A.~Vanrietvelde, H.~Kristj{\'{a}}nsson, and J.~Barrett, ``{Routed quantum circuits},'' \href{http://dx.doi.org/10.22331/q-2021-07-13-503}{{\em Quantum} {\bfseries 5} (7, 2021) 503}.

\bibitem{vanrietvelde2021coherent}
A.~Vanrietvelde and G.~Chiribella, ``Universal control of quantum processes using sector-preserving channels,'' \href{http://dx.doi.org/10.26421/QIC21.15-16-5}{{\em Quantum Information and Computation} {\bfseries 21} no.~15-16, (Dec, 2021) 1320--1352}.

\bibitem{bavaresco2024indefinitecausalorderboxworld}
J.~Bavaresco, Ämin Baumeler, Y.~Guryanova, and C.~Budroni, ``Indefinite causal order in boxworld theories,'' \href{http://arxiv.org/abs/2411.00951}{{\ttfamily arXiv:2411.00951 [quant-ph]}}.

\bibitem{sengupta2024achievingmaximalcausalindefiniteness}
K.~Sengupta, ``Achieving maximal causal indefiniteness in a maximally nonlocal theory,'' \href{http://arxiv.org/abs/2411.04201}{{\ttfamily arXiv:2411.04201 [quant-ph]}}.

\bibitem{Moliner2017SpaceCategories}
P.~E. Moliner, C.~Heunen, and S.~Tull, ``{Space in Monoidal Categories},'' \href{http://dx.doi.org/10.4204/EPTCS.266.25}{{\em Electronic Proceedings in Theoretical Computer Science, EPTCS} {\bfseries 266} (4, 2017) 399--410}.

\end{thebibliography}\endgroup

\appendix

\section{Category Theory}
Here we introduce three key concepts in abstract algebra, categories, functors, and natural transformations. Avoiding issues of size, a category is a set of objects $A,B,\dots$ with for each pair $A,B$ a set $\mathbf{C}(A,B)$ of morphisms often denoted as $f:A \rightarrow B$. A category is furthermore equipped with, for each triple $(A,B,C)$ of objects a composition function $\circ: \mathbf{C}(A,B) \times \mathbf{C}(B,C) \rightarrow \mathbf{C}(A,B)$ which is associative so that $f \circ (g \circ h) = (f \circ g) \circ h$, and for each object $A$ a unit $id_A$  so that for all $f:A \rightarrow B$ then $f \circ id_A = f = id_B \circ f$. In graphical terms a morphism $f:A \rightarrow A'$ may be written as a box labelled $f$ with input wire $A$ and output wire $A'$, sequential ($\circ$) composition represented by:
\[ g \circ f \quad := \quad \tikzfigscale{1}{figs/process_1} \quad \quad \textrm{and} \quad  \quad id \quad = \quad \tikzfig{figs/wire} \] The identity box, can be diagrammatically represented as a wire. This notation has the convenient property that it absorbs the structural equations of a category. There is no graphical difference for instance between the diagram representing $(f \circ g) \circ h$ and the diagram representing $f \circ (g \circ h)$. A functor $\mathcal{F}: \mathbf{C} \rightarrow \mathbf{D}$ is a structure preserving map between categories, formally it is an assignment of an object $\mathcal{F}A$ to each object $A$ along with for each pair $A,B$ a function $\mathcal{F}_{AB}:\mathbf{C}(A,B) \rightarrow \mathbf{D}(\mathcal{F}A,\mathcal{F}B)$ which preserves composition in the sense that $\mathcal{F}(f \circ g) = \mathcal{F}(f) \circ \mathcal{F}(g)$ and $\mathcal{F}(i) = i$. Graphically a functor can be represented by a surrounding box, which satisfies box-merging and identity removal: \[ \tikzfig{figs/functorbox1} \quad = \quad \tikzfig{figs/functorbox2} \quad \quad \quad \tikzfig{figs/functorbox3} \quad = \quad \tikzfig{figs/wire}. \] A natural transformation $\eta: F \Rightarrow G$ is for every object $A$ a morphism $\eta_{A}:\mathcal{F}A \rightarrow \mathcal{G}A$ such that for every $f:A \rightarrow B$ then $\eta_B \circ \mathcal{F}(f) = \mathcal{G}(f) \circ \eta_A$. Graphically this reads as: \[ \tikzfig{figs/functorboxnat2} \quad = \quad \tikzfig{figs/functorboxnat1}. \] For any category $\mathbf{C}$ the (\textit{reverse}) opposite category $\mathbf{C}^{op}$ can be defined in which $ \mathbf{C}^{op}(A,B) := \mathbf{C}(B,A)$ with composition and identity inherited from $\mathbf{C}$ so that $f \circ_{op} g := g \circ f$. Finally for any categories $\mathbf{C},\mathbf{D}$ the product category $\mathbf{C} \times \mathbf{D}$ can be defined in which objects are given by pairs $(A,B)$ and morphisms given by pairs of morphisms $(f,g)$ and composition rule inherited from $\mathbf{C}$ and $\mathbf{D}$ as $(f,g) \circ (f',g') := (f' \circ f, g' \circ g)$.

An example of a natural transformation of interest in this paper is given intuitively by the following: a family of functions $\eta_{A,B}: \mathbf{C}(A,B) \rightarrow \mathbf{C}(A,B)$ which commute with pre and post-composition by arbitrary processes, meaning that for all $f:A' \rightarrow A $ and $ g: B \rightarrow B' $: \[ \eta_{A',B'}(g \circ \phi \circ f) = g \circ \eta_{A,B}(\phi) \circ f  \] In this paper we choose to represent such families graphically by introducing a function-box notation. We represent $\eta_{A,B}$ by \[\eta_{A,B} \quad \approx \quad \tikzfig{figs/box1} \quad \quad \quad \eta_{A,B}(\phi) \quad \approx \quad \tikzfig{figs/box2}  \] The dotted wires are used to suggest the commutation condition satisfied by the family $\eta_{A,B}$: \[  \tikzfig{figs/box3} \quad = \quad  \tikzfig{figs/box4}  \] Such a family of functions can be phrased as a natural transformation in the following way, we begin with definition of the ``hom functor" $\mathbf{C}(-,=): \mathbf{C}^{op} \times \mathbf{C} \rightarrow \mathbf{Set}$. The category $\mathbf{C}^{op} \times \mathbf{C}$ has as objects pairs $(A,B)$ of objects of $\mathbf{C}$, for morphisms $(A,B) \rightarrow (A',B')$ it has pairs $(f:A' \rightarrow A, g:B \rightarrow B')$. A functor of type $\mathbf{C}^{op} \times \mathbf{C} \rightarrow \mathbf{Set}$ must send each pair $(A,B)$ to a set, indeed the functor $\mathbf{C}(-,=)$ is defined as sending the object $(A,B)$ to the set $\mathbf{C}(A,B)$. A functor of type $\mathbf{C}^{op} \times \mathbf{C} \rightarrow \mathbf{Set}$ must send each pair $(f:A' \rightarrow A, g:B \rightarrow B')$ to a function  $\mathbf{C}(A,B) \rightarrow \mathbf{C}(A',B')$. Indeed the functor $\mathbf{C}(-,=)$ is defined by sending each pair $(f,g)$ to the function 
\begin{align*}
& \mathbf{C}(f,g):\mathbf{C}(A,B) \rightarrow \mathbf{C}(A',B') \\ 
& \mathbf{C}(f,g)(\phi) = g \circ \phi \circ f 
\end{align*} Indeed this is a functor since $\mathbf{C}(i,i)(\phi) = \phi$ and $\mathbf{C}(f \circ f', g' \circ g)( \phi ) = g' \circ g \circ \phi  \circ f \circ f' =  \mathbf{C}(f', g' ) (\mathbf{C}(f, g) (\phi)) = ( \mathbf{C}(f', g' ) \circ \mathbf{C}(f, g)) (\phi)$. A natural transformation of type $\eta: \mathbf{C}(-,=) \Rightarrow \mathbf{C}(-,=)$ is a family of functions $\eta_{A,B}: \mathbf{C}(A,B) \rightarrow \mathbf{C}(A,B)$ such that $\eta_{A,B} \circ \mathbf{C}(f,g) = \mathbf{C}(f,g) \circ \eta_{A,B}$, which when applied as functions reads $\forall \phi: \textrm{ }\eta_{A',B'}(g \circ \phi \circ f) = g \circ \eta_{A,B}(\phi) \circ f  $, our original commutativity condition.

\section{Phrasing of localizability as naturality}
The previously introduced hom-functor $\mathbf{C}(-,=)$ can be generalised in monoidal categories to a functor of type \[\mathbf{C}(A \otimes -,A' \otimes =): \mathbf{C}^{op} \times \mathbf{C} \longrightarrow \mathbf{SET}\] which assigns to each $(X,X')$ the object $\mathbf{C}(A \otimes X,A' \otimes X')$ and to each morphism $f:Y \rightarrow X$ and each morphism $g:X' \rightarrow Y'$ the function 
\begin{align*}
    & \mathbf{C}(A \otimes f,A' \otimes g): \mathbf{C}(A \otimes X,A' \otimes X') \rightarrow \mathbf{C}(A \otimes Y,A' \otimes Y') \\
     & \mathbf{C}(A \otimes f,A' \otimes g) :: \phi \mapsto (i \otimes g) \circ \phi \circ (i \otimes f)
\end{align*}
This functor can be further generalised to a functor $\textnormal{\texttt{dExt}}(K)(-,=)$, which we now define.
\begin{definition}[Extension functor]
For every $K \subseteq \mathbf{C}(A,A')$ in a symmetric monoidal category $\mathbf{C}$ one can define a functor $\textnormal{\texttt{dExt}}(K)(-,=):\mathbf{C}^{op} \times \mathbf{C}  \rightarrow \mathbf{Set}$ given by
\begin{itemize}
    \item $\textnormal{\texttt{dExt}}(K)(X,X') := \textnormal{\texttt{dExt}}_{X,X'}(K)$
    \item $\textnormal{\texttt{dExt}}(K)(f,g): \textnormal{\texttt{dExt}}_{X,X'}(K) \rightarrow \textnormal{\texttt{dExt}}_{Y,Y'}(K)$ defined by \[   \textnormal{\texttt{dExt}}(K)(f,g)(\phi) := \tikzfig{figs/functor_ext1}   \]
\end{itemize}
\end{definition}
The functor $\mathbf{C}(A \otimes -,A' \otimes =)$ can be defined as the special case given by $ \textnormal{\texttt{dExt}}(\mathbf{C}(A,A'))$. $\textnormal{\texttt{dExt}}$ is well defined, whenever $\phi \in \textnormal{\texttt{dExt}}_{X,X'}(S)$ then $\textnormal{\texttt{dExt}}(S)(f,g)(\phi) \in \textnormal{\texttt{dExt}}_{Y,Y'}(S) $ since for each $f,g$ and $\rho,\sigma$ then \[    \tikzfig{figs/functor_ext2} \quad = \quad  \tikzfig{figs/functor_ext3} \quad \in \quad  K \] since $\phi \in \textnormal{\texttt{dExt}}_{X,X'}(K)$. Furthermore the assignment $\textnormal{\texttt{dExt}}(S)(f,g)$ is functorial since \[  \textnormal{\texttt{dExt}}(S)(f',g')(\textnormal{\texttt{dExt}}(S)(f,g)(\phi)) \quad  = \quad  \tikzfig{figs/functor_ext4} \quad = \quad \textnormal{\texttt{dExt}}(S)(f \circ f',g' \circ g)(\phi) \]

The definition of a locally-applicable transformation can be rephrased as the definition of a natural transformation respect to this functor. A natural transformation $S: \mathbf{C}(A \otimes -, A' \otimes = ) \longrightarrow \mathbf{C}(B \otimes -, B' \otimes = )$ will be any family of functions $S_{XX'}$ making the following diagram commute for all $f,g$: \[ \begin{tikzcd}
{\mathbf{C}(A \otimes X,A' \otimes X')} \arrow[rr, "S_{XX'}"] \arrow[d, "{\mathbf{C}(A \otimes f,A' \otimes g)}"'] &  & {\mathbf{C}(B \otimes X,B' \otimes X')} \arrow[d, "{\mathbf{C}(B \otimes f,B' \otimes g)}"] \\
{\mathbf{C}(A \otimes Y,A' \otimes Y')} \arrow[rr, "S_{YY'}"']                                                     &  & {\mathbf{C}(B \otimes Y,B' \otimes Y')}                                                    
\end{tikzcd}  \] In other words such that $S_{Y,Y'} \circ \mathbf{C}(B \otimes f,B' \otimes g) = \mathbf{C}(B \otimes f,B' \otimes g) \circ S_{X,X'}$. Evaluated on processes $\phi$, this condition reads $S_{Y,Y'} ( \mathbf{C}(B \otimes f,B' \otimes g)(\phi)) = \mathbf{C}(B \otimes f,B' \otimes g)(  S_{X,X'}( \phi))$, which unpacking the definition of $ \mathbf{C}(B \otimes f,B' \otimes g)$ is precisely the \textit{sliding} rule which in the case of quantum theory entails the stricter notion of locally-applicable transformation. This observation extends to general subsets, so that a locally-applicable transformation of type $S:K \rightarrow M$ in $\mathbf{QC}$ is exactly a natural transformation of type $S_{X,X'}:\textnormal{\texttt{dExt}}_{X,X'}(K) \rightarrow \textnormal{\texttt{dExt}}_{X,X'}(M)$. Locally applicable transformations of type $K^1_{-,=} \dots K^n_{-,=} \rightarrow M_{-,=}$ on $\mathbf{QC}$ can similarly be phrased as natural transformations of type \[  K^1_{-,=} \times \dots \times K^n_{-,=} \rightarrow M_{-,=}   \] where for any $\mathcal{F}:\mathbf{C}_{1} \rightarrow \mathbf{C}_{2}$ and $\mathcal{G}:\mathbf{D}_{1} \rightarrow \mathbf{D}_{2}$ the product functor $\mathcal{F} \times \mathcal{G} : \mathbf{C}_{1} \times \mathbf{D}_{1} \rightarrow \mathbf{C}_{2} \times \mathbf{D}_{2}$ is defined by $\mathcal{F} \times \mathcal{G} (c,d) = (\mathcal{F}(c),\mathcal{G}(d))$ and similarly on morphisms. Noting that functors of type $\mathbf{C}^{op} \times \mathbf{C} \rightarrow \mathbf{SET}$ are those which are termed endo-profunctors on $\mathbf{C}$, we can conclude that the study of quantum supermaps is the study of natural transformations between certain endo-profunctors on the category of quantum channels. This connection is of particular interest since such profunctors are currently used in the field of applied category theory to build profunctor-optics, which are closely related to combs in arbitrary monoidal categories \cite{Chiribella2008QuantumArchitecture, Roman2020OpenCalculus, Roman2020CombFeedback, Boisseau2022CorneringOptics, Hedges2017CoherenceGames, Riley2018CategoriesOptics, Ghani2016CompositionalTheory, Pollock2015Non-MarkovianCharacterisation, Hefford2022CoendCombs}.

\section{$*$\textbf{Hilb} Supermaps}
We introduce functor box notation for weak symmetric monoidal functors. Whilst $\mathcal{F}(f)$ will be notated as before,
for a weak monoidal functor, functorality is only-up-to ismorphism so that we may write: \[   \tikzfig{figs/functorbox1} \quad \cong \quad   \tikzfig{figs/functorbox2}  \]
However, we will say that a functor is $2$-faithful if \[   \mathcal{F}(f) \cong \mathcal{F}(g) \quad \implies \quad \tikzfig{figs/infinitefunct5} \quad = \quad   \tikzfig{figs/infinitefunct6} \]
The above allows us to generalise $\mathbf{D}$-representable supermaps to a setting which allows us to use compact closure when defining supermaps on infinite dimensional quantum systems. \begin{lemma}
Let $\mathbf{C}$ be a symmetric monoidal $2$-category with trivial $2$-morphisms and $\mathbf{D}$ be a symmetric monoidal $2$-category with a weak $2$-faithful symmetric monoidal $2$-functor $\mathcal{G}:\mathbf{C} \rightarrow \mathbf{D}$. Any $\texttt{comb}[a,b]$ with $a,b \in \mathbf{D}$ such that for all $\phi \in \textnormal{\texttt{dExt}}_{X,X'}(K)$ there exists $\psi \in \textnormal{\texttt{dExt}}_{X,X'}(M)$ such that $\mathcal{G}(\psi) = \texttt{comb}[a,b]_{\mathcal{G}X\mathcal{G}X'}(\phi)$ defines a locally-applicable transformation by taking $S_{X,X'}(\phi) $ to be the unique $\psi$ s.t $\mathcal{G}(\psi) = 
\texttt{comb}[a,b](\phi)$. Such a locally-applicable transformation is termed a  $\mathcal{G}$~representable supermap on $\mathbf{C}$ of type $K \rightarrow M$.
\end{lemma}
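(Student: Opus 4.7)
The plan is to verify two things in sequence: first that the prescription $S_{X,X'}(\phi) := !\,\psi$ such that $\mathcal{G}(\psi) = \texttt{comb}[a,b]_{\mathcal{G}X\mathcal{G}X'}(\mathcal{G}\phi)$ is well-defined as a function $K_{X,X'} \rightarrow M_{X,X'}$, and second that the resulting family satisfies the sliding axiom of a locally-applicable transformation.

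For well-definedness, existence of some suitable $\psi \in \texttt{dExt}_{X,X'}(M)$ is granted by hypothesis, so only uniqueness must be established. If two candidates $\psi,\psi'$ both satisfy $\mathcal{G}(\psi) = \mathcal{G}(\psi') = \texttt{comb}[a,b]_{\mathcal{G}X\mathcal{G}X'}(\mathcal{G}\phi)$, then a fortiori $\mathcal{G}(\psi) \cong \mathcal{G}(\psi')$ via the identity $2$-morphism in $\mathbf{D}$. Since $\mathbf{C}$ has only trivial $2$-morphisms, the $2$-faithfulness of $\mathcal{G}$ collapses here to the assertion $\mathcal{G}(\psi) \cong \mathcal{G}(\psi') \Rightarrow \psi = \psi'$, giving uniqueness immediately.

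For the sliding axiom, I would fix $f:Y \rightarrow X \otimes Z$, $g:X' \otimes Z \rightarrow Y'$, and $\phi \in \texttt{dExt}_{X,X'}(K)$, and show that the two sides of the naturality equation agree in $\mathbf{C}$ by pushing them through $\mathcal{G}$ and then using $2$-faithfulness. Applying $\mathcal{G}$ to the left-hand side yields (by definition) $\texttt{comb}[a,b]_{\mathcal{G}Y\mathcal{G}Y'}(\mathcal{G}(\phi'))$, where $\phi'$ denotes the $f,g$-contracted form of $\phi$; weak monoidality of $\mathcal{G}$ provides a coherent isomorphism between $\mathcal{G}(\phi')$ and the $f,g$-contraction built from $\mathcal{G}(\phi)$, $\mathcal{G}(f)$, $\mathcal{G}(g)$. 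Then the fact that $\texttt{comb}[a,b]$ is already a locally-applicable transformation on $\mathbf{D}$ lets me slide $\mathcal{G}(f), \mathcal{G}(g)$ outside of $\texttt{comb}[a,b]$. Using weak monoidality of $\mathcal{G}$ once more (in the other direction) and the definition of $S_{X,X'}$ gives, up to coherent $2$-isomorphism, the image under $\mathcal{G}$ of the right-hand side. Hence the two sides become isomorphic in $\mathbf{D}$, and $2$-faithfulness promotes this to equality in $\mathbf{C}$.

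The main obstacle will be the bookkeeping of the coherence $2$-isomorphisms arising from the weak monoidality of $\mathcal{G}$: one must ensure that the chain of isomorphisms assembled on the $\mathbf{D}$ side really is a single isomorphism between the images under $\mathcal{G}$ of the two morphisms in $\mathbf{C}$ being compared, so that $2$-faithfulness applies cleanly. Aside from this, the argument is essentially formal, since existence of the preimage $\psi$ is assumed as part of the hypothesis and the sliding rule for $\texttt{comb}[a,b]$ on $\mathbf{D}$ was already established in the ordinary (strict) comb example earlier in the text.
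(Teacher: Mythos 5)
Your proposal is correct and follows essentially the same route as the paper: apply $\mathcal{G}$, use weak monoidality of the functor to relate $\mathcal{G}$ of the contracted process to the contraction of $\mathcal{G}(\phi)$ by $\mathcal{G}(f),\mathcal{G}(g)$, slide through $\texttt{comb}[a,b]$ using its naturality in $\mathbf{D}$, and promote the resulting isomorphism to an equality in $\mathbf{C}$ via $2$-faithfulness. Your explicit check that $2$-faithfulness also yields uniqueness of the preimage $\psi$ (hence well-definedness of $S_{X,X'}$) is a small addition the paper leaves implicit, but it is not a different argument.
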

\begin{proof}
Note that \[   \tikzfig{figs/infiniteproofbegin} \quad = \quad   \tikzfig{figs/infiniteproof2}   \] \[    \cong \quad \tikzfig{figs/infiniteproof3} \quad = \quad   \tikzfig{figs/infiniteproof4} \] and so \[   \tikzfig{figs/infiniteproofbegin} \quad \cong \quad   \tikzfig{figs/infiniteproofend}   \] which by $2$-faithful-ness of $\mathcal{F}$ gives \[ \tikzfig{figs/formal_cp_1aa} \quad = \quad \tikzfig{figs/fornal_cp_1bb}.  \]
\end{proof}
This gives a way to construct and represent examples of supermaps on the category of unitaries $\mathbf{sepU} \subseteq \mathbf{sephilb}$ by using the embedding of $\mathbf{sephilb}$ into $*\mathbf{Hilb}$ \cite{Gogioso2017Infinite-dimensionalMechanics}.
\begin{example}
There is a $2$-faithful weak symmetric monoidal $2$-functor $\mathcal{G}:\mathbf{sepU} \rightarrow {}^{*}\mathbf{Hilb}$ given by composition of the embedding $\mathbf{sepU} \subseteq \mathbf{sephilb}$ and the truncation functor $\texttt{trunc}[-]_{w}:\mathbf{sephilb} \rightarrow {}^{*}\mathbf{Hilb}$ \cite{Gogioso2017Infinite-dimensionalMechanics}. The induced supermaps are then termed $\texttt{trunc}[-]$~representable supermaps on $\mathbf{C}$.
\end{example}
One can straight-forwardly generalise the above construction to define $\mathcal{G}$-representable supermaps with multiple inputs, and so in particular define $\texttt{trunc}[-]$~representable supermaps on $\mathbf{sepU}$ with multiple inputs to realise the structural maps of monoidal enrichment, and furthermore infinite dimensional switches.


\section{Characterisation of Classical Supermaps}
The proof methods presented in the main text can also be used to characterize the supermaps on finite dimensional classical information theory. These supermaps were referred to in \cite{Baumeler2015TheOrder} as classical processes, and by coinciding with the effects on the $n$-input/$n$-output non-signalling channels in \cite{Kissinger2019AStructure} can be simply defined as the $\mathbf{Mat}[\mathbb{R}_+]$-supermaps on $\mathbf{Stoch}$. We outline the story here, noting the key features common to quantum classical information theory that were used in our proof. First we must declare what would be aimed to be proven, and what we mean by classical information theory. We define deterministic classical maps to be stochastic matrices, just as with quantum channels it is easier to first state a compact closed category from which they are constructed.
\begin{definition}[Positive real matrices]
The category $\mathbf{Mat}[\mathbb{R}_{+}]$ of positive real matrices is given by taking as objects the positive integers $\mathbb{Z}_{+}$ and as morphisms of type $f:n \rightarrow m$ the matrices of dimension $n \times m$. Sequential composition is given by matrix multiplication and identity morphism is given by the diagonal matrix of ones.
\end{definition}
The category $\mathbf{Mat}[\mathbb{R}_{+}]$ is symmetric monoidal and furthermore compact closed.
\begin{definition}[Stochastic maps]
The category $\mathbf{Stoch}$ of stochastic maps is given by the subcategory of $\mathbf{Mat}[\mathbb{R}_{+}]$ which contains only those matrices with column vectors which sum to $1$.
\end{definition}
The category $\mathbf{Stoch}$ is symmetric monoidal and furthermore causal, the unique effect of type $n \rightarrow 1$ is given by the row vector $(1, \dots ,1)$. Using the inclusion $\mathbf{Stoch} \subseteq \mathbf{Mat}[\mathbb{R}]$ one can immediately define the $\mathbf{Mat}[\mathbb{R}]$-supermaps on $\mathbf{Stoch}$. We will now observe that these are precisely the locally-applicable transformations on $\mathbf{Stoch}$. First, we adress the equivalence between convexity and control. 
\begin{lemma}
A subset $K \in \mathbf{Stoch}(A,A')$ is convex if and only if it has control
\end{lemma}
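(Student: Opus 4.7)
The plan is to mirror exactly the proof strategy used for $\mathbf{QC}$ in the main text, observing that the only ingredients we used there were (i) causality, (ii) the existence of a pair of perfectly distinguishable states $\rho_0,\rho_1$ with effects $e_0,e_1$ satisfying $e_i \circ \rho_j = \delta_{ij}$, and (iii) closure of $\mathbf{Stoch}(A,A')$ (and hence of $\texttt{dExt}_{X,X'}$) under convex combinations whenever $K$ itself is convex. All three are available: $\mathbf{Stoch}$ is causal via the unique row-of-ones effect, and on the two-element object $Y=2$ the standard basis states and their dual effects give the required perfect discrimination.

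For the forward direction (convex $\Rightarrow$ control), given $\phi_0,\phi_1 \in \texttt{dExt}_{X,X'}(K)$ I would define the candidate control extension exactly as in the quantum proof:
\[ \tikzfig{figs/control_proof0} \quad := \quad \tikzfig{figs/control_proof1} \quad + \quad \tikzfig{figs/control-proof1b}. \]
Inserting $\rho_i$ on the rightmost wire recovers $\phi_i$ by orthogonality $e_i\circ\rho_j=\delta_{ij}$. To verify $\Phi \in \texttt{dExt}_{X\otimes Y, X'}(K)$, I would apply an arbitrary state and effect on the auxiliary wires; because $\mathbf{Stoch}$ is causal the only available effect on $Y$ is the discard, so the effect can be rewritten as $p(e_0|\rho)\, d + p(e_1|\rho)\, d$ via normalization. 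The result is the convex combination $p(e_0|\rho)\,\phi_0{\restriction} + p(e_1|\rho)\,\phi_1{\restriction}$ of reductions of $\phi_0,\phi_1$, and these reductions lie in $K$ because $\phi_i\in\texttt{dExt}_{X,X'}(K)$. Convexity of $K$ then places the combination in $K$, establishing that $\Phi$ is in $\texttt{dExt}_{X\otimes Y,X'}(K)$.

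For the reverse direction (control $\Rightarrow$ convex), given $\phi_0,\phi_1 \in K$, control provides some $\Phi \in \texttt{dExt}_{Y,I}(K)$ with $\phi_i$ recovered by inserting $\rho_i$. Then for any $p\in[0,1]$ the state $\sigma := p\rho_0 + (1-p)\rho_1$ is a stochastic state on $Y$ (it is a convex combination of stochastic states), so inserting $\sigma$ into $\Phi$ yields a member of $K$ by the definition of dilation extension. By linearity of sequential composition in $\mathbf{Mat}[\mathbb{R}_+]$, this inserted channel is exactly $p\phi_0 + (1-p)\phi_1$, showing $K$ is convex.

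The proof is essentially a transcription of the quantum argument, so no step is genuinely hard; the only place one must be careful is verifying that the candidate $\Phi$ is a stochastic matrix (not merely a positive matrix) and that its reductions via arbitrary states/effects on the auxiliary system still land in $K$. Both points follow from causality of $\mathbf{Stoch}$, which collapses the effect side of the computation so that only a convex mixture of existing elements of $K$ can appear, making the convexity hypothesis on $K$ sufficient to close the argument.
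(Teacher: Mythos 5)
Your proof is correct and follows exactly the route the paper takes: the paper's own (much terser) proof simply observes that the quantum argument transcribes verbatim once one has an object $Y$ with perfectly distinguishable states $\rho_i$ and effects $e_j$ satisfying $e_i \circ \rho_j = \delta_{ij}$ (take $Y=2$ with the standard basis vectors and their duals) together with positive sums from $\mathbf{Mat}[\mathbb{R}_+]$. Your more detailed transcription of both directions, including the use of causality of $\mathbf{Stoch}$ to collapse the effect on the control wire, is exactly what the paper intends.
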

\begin{proof}
All that was required to construct the proof was the existence of an object $Y$ with a pair of distinguishable states in the sense that $e_i \circ \rho_j = \delta_{ij}$ and the possibility to take positive sums. Sums are taken care of by $\mathbf{Mat}[\mathbb{R}_{+}]$ and $Y$ may be taken to be $2$. Indeed one can define $\rho_{i}:1 \rightarrow 2$ by taking the $k^{th}$ component of the column vector $\rho_{i}$ to be $\delta_{ik}$ and similarly for the effects $e_{j}:2 \rightarrow 1$.
\end{proof}
\begin{lemma}
Let $K \in \mathbf{Stoch}(A,A')$ and $M \in \mathbf{Stoch}(B,B')$ be convex subsets, then every locally-applicable transformation $S:K \rightarrow M$ is convex linear.
\end{lemma}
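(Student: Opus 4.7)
The plan is to mirror the proof of convex linearity given earlier for $\mathbf{QC}$, since the only three ingredients used there were (i) the equivalence between convexity and control of the source set, (ii) naturality of $S$ with respect to any morphism of the underlying symmetric monoidal category, and (iii) bilinearity of composition in an ambient additive compact closed category. The first is the preceding lemma for $\mathbf{Stoch}$. The second is just the definition of a locally-applicable transformation on $\mathbf{Stoch}$. The third is available by virtue of the inclusion $\mathbf{Stoch} \subseteq \mathbf{Mat}[\mathbb{R}_{+}]$, in which composition is ordinary matrix multiplication and is therefore bilinear over $\mathbb{R}_+$.

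First I would fix $\phi_0,\phi_1 \in \texttt{dExt}_{X,X'}(K)$ and a scalar $p \in [0,1]$, and invoke control on $K$ to produce a stochastic map $\Phi \in \texttt{dExt}_{X\otimes Y, X'}(K)$ together with distinguishable states $\rho_0,\rho_1: 1 \to 2$ such that precomposing $\Phi$ with $\rho_i$ on the wire $Y$ returns $\phi_i$. Writing $\rho_p := p\rho_0 + (1-p)\rho_1$, which is itself a stochastic state, the combination $p\phi_0 + (1-p)\phi_1$ equals $\Phi$ precomposed with $\rho_p$. Naturality of $S$ with respect to $\rho_p$ then allows us to slide $\rho_p$ through the dotted auxiliary wire, yielding
\[
S_{X,X'}\bigl(p\phi_0 + (1-p)\phi_1\bigr) \;=\; S_{Y,I}(\Phi) \circ (\mathrm{id} \otimes \rho_p).
\]
I would then expand $\rho_p$ on the right-hand side using bilinearity of matrix composition in $\mathbf{Mat}[\mathbb{R}_{+}]$, distribute to obtain a sum of two terms each containing a stochastic state $\rho_i$ with the scalars $p,1-p$ pulled outside as scalar morphisms $1 \to 1$, and apply naturality of $S$ in reverse on each term to re-absorb $\rho_i$ back into the $S$-box. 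Finally I would use the defining property of control to recognise $S_{Y,I}(\Phi)\circ(\mathrm{id}\otimes \rho_i)$ as $S_{X,X'}(\phi_i)$, which produces $pS_{X,X'}(\phi_0) + (1-p)S_{X,X'}(\phi_1)$.

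The one delicate point, which I would flag as the main obstacle worth spelling out, is that the intermediate terms $p\rho_0$ and $(1-p)\rho_1$ are in general sub-stochastic and therefore not morphisms of $\mathbf{Stoch}$ itself, so naturality cannot be invoked on them directly. The workaround is the same as in the quantum proof: naturality is only applied to the genuinely stochastic pieces $\rho_p$, $\rho_0$ and $\rho_1$, while the scalars $p$ and $1-p$ are manipulated as elements of $\mathbf{Mat}[\mathbb{R}_{+}](1,1) = \mathbb{R}_+$ on the outside of the diagram. Because $M$ is assumed convex and the image $S_{X,X'}(\phi_i)$ lies in $M$, the final sum is a genuine element of $M$, so the whole chain of equalities lives in $\mathbf{Stoch}$ even though it passes through $\mathbf{Mat}[\mathbb{R}_+]$ en route.
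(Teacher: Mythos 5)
Your proposal is correct and follows essentially the same route as the paper: the paper's own proof of this lemma is a one-line remark that it ``follows directly from the equivalence between convexity and control,'' implicitly re-running the quantum convex-linearity argument (control $\Rightarrow$ write $p\phi_0+(1-p)\phi_1$ as $\Phi$ precomposed with $\rho_p$, slide $\rho_p$ out by naturality, expand by bilinearity in $\mathbf{Mat}[\mathbb{R}_{+}]$, slide $\rho_0,\rho_1$ back in), which is exactly what you spell out. Your explicit flagging that naturality is only ever invoked on the genuinely stochastic states $\rho_p,\rho_0,\rho_1$ while the scalars are handled in the ambient additive category is a faithful reading of the quantum proof rather than a deviation from it.
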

\begin{proof}
Follows directly from equivalence between convexity and control.
\end{proof}
We take the operational closure $K_{ \mathbf{Mat}[\mathbb{R}_{+}]}$ of a set $K \subseteq \mathbf{Mat}[\mathbb{R}_{+}]$ to be defined in the equivalence way as for subsets $M$ of $\mathbf{QC}$, replacing the applying of effects from $\mathbf{CP}$ with the applying of effects from $\mathbf{Mat}[\mathbb{R}_{+}]$ 
\begin{lemma}
Every locally-applicable transformation $S:K \rightarrow M$ between convex sets in $\mathbf{Stoch}$ extends to a function $S_{\mathbf{Mat}[\mathbb{R}_{+}]}: K_{\mathbf{Mat}[\mathbb{R}_{+}]} \rightarrow \mathbf{M}_{\mathbf{Mat}[\mathbb{R}_{+}]}$.
\end{lemma}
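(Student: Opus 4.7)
The plan is to transport the proof of the corresponding quantum lemma (\emph{Extension to operational closure}) verbatim, replacing $(\mathbf{QC},\mathbf{CP})$ with $(\mathbf{Stoch},\mathbf{Mat}[\mathbb{R}_{+}])$, and verifying that every ingredient used there is present in the classical setting. Concretely, I would define the candidate
\[ S_{\mathbf{Mat}[\mathbb{R}_{+}]}(\phi) \;:=\; (\mathrm{id}\otimes\sigma)\circ S_{X,X'}(\Phi)\circ(\mathrm{id}\otimes\rho) \]
for any chosen witness $\Phi\in\texttt{dExt}_{X,X'}(K)$, $\rho:1\to X$, $\sigma:X'\to 1$ in $\mathbf{Mat}[\mathbb{R}_{+}]$ realising $\phi$, and then show well-posedness: if a second witness $(\Phi',\rho',\sigma')$ gives the same $\phi$, the two candidate values agree.

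For well-posedness I would first handle the $\sigma$-side exactly as in the quantum case. Using the pair of perfectly distinguishable states $e_0,e_1$ on the object $2$ that already powered the control lemma, I would package $\sigma$ into a genuine stochastic map $\Sigma:X'\to 2$ of the form $\Sigma := e_0\otimes\sigma + e_1\otimes(d_{X'}-\sigma)$, and analogously $\Sigma'$ from $\sigma'$. Because $\Sigma,\Sigma'$ are stochastic they may be slid freely through dotted wires by naturality of $S$; writing the difference of the two putative values of $S_{\mathbf{Mat}[\mathbb{R}_{+}]}(\phi)$ in terms of $\Sigma,\Sigma'$, pushing them through $S$, and projecting back with $e_0$ shows (using orthogonality $e_i\circ\rho_j=\delta_{ij}$) that the difference vanishes. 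The $\rho$-side is then disposed of by writing any positive $\rho \in \mathbf{Mat}[\mathbb{R}_{+}](1,X)$ as $a\eta$ with $a\in\mathbb{R}_{+}$ a scalar and $\eta$ a stochastic state, and invoking the extension of $S$ to positive real scalars granted by the convex-linearity lemma proven immediately before.

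The main obstacle is the construction of $\Sigma$: forming $d_{X'}-\sigma$ is only a legitimate morphism of $\mathbf{Mat}[\mathbb{R}_{+}]$ when every entry of $\sigma$ is bounded by $1$. This is the genuine subtlety and is resolved, as in the quantum proof, by the rescaling trick: one first multiplies $\sigma$ by a small positive constant $\lambda$ (picking $\lambda$ smaller than the reciprocal of $\sigma$'s largest entry), absorbs the same $\lambda$ into $\rho$ via the already-established $\mathbb{R}_{+}$-linearity of $S$, and then performs the $\Sigma$-construction on the rescaled effect. This uses crucially that $\mathbf{Mat}[\mathbb{R}_{+}]$ is closed under arbitrary non-negative real linear combinations and that $\mathbf{Stoch}$ contains precisely those matrices whose columns sum to $1$, exactly mirroring the role of trace-preservation in the quantum proof.

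Finally, having verified well-posedness, I would observe that $S_{X,X'}(\Phi)\in\texttt{dExt}_{X,X'}(M)$ by hypothesis on $S$, so sandwiching it by $\rho$ and $\sigma$ in $\mathbf{Mat}[\mathbb{R}_{+}]$ produces a morphism lying in $M_{\mathbf{Mat}[\mathbb{R}_{+}]}$ by the very definition of the operational closure; thus $S_{\mathbf{Mat}[\mathbb{R}_{+}]}$ is well-defined as a function $K_{\mathbf{Mat}[\mathbb{R}_{+}]}\to M_{\mathbf{Mat}[\mathbb{R}_{+}]}$, and uniqueness is automatic from the defining formula, completing the proof.
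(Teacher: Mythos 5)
Your proposal is correct and matches the paper's approach: the paper's own proof is just a two-line remark that the quantum argument transports, naming exactly the two ingredients you identify — the ability to take differences by embedding $\mathbf{Mat}[\mathbb{R}_{+}]$ into $\mathbf{Mat}[\mathbb{R}]$, and the fact that every positive effect $\sigma$ admits $\lambda\in\mathbb{R}_{+}$ and a positive $\sigma'$ with $\lambda\sigma+\sigma'$ equal to the discard $(1,\dots,1)$, which is precisely your rescaling step for building the stochastic $\Sigma$. Your write-up is in fact more detailed than the paper's; the only quibble is a notational slip where $\Sigma$ should be built from the distinguishable \emph{states} on the object $2$ (composed with the effects $\sigma$ and $d_{X'}-\lambda\sigma$) rather than from the effects $e_0,e_1$.
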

\begin{proof}
The required elements of the proof are that $\mathbf{Mat}[\mathbb{R}_{+}]$ embeds into $\mathbf{Mat}[\mathbb{R}]$ so that subtractions can be defined, and that for every effect $\sigma \in \mathbf{Mat}[\mathbb{R}_{+}](A,I)$ there exists $\lambda \in \mathbb{R}_{+}$ and $\sigma' \in \mathbf{Mat}[\mathbb{R}_{+}](A,I)$ such that $\lambda \sigma + \sigma'  =  (1, \dots ,1)$ (The discard).
\end{proof}
\begin{lemma}[Tensor seperation]
Every locally-applicable transformation of type $S:K \rightarrow M$ on $\mathbf{Stoch}$ tensor seperates. 
\end{lemma}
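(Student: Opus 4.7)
The plan is to mirror the argument already used for $\mathbf{QC}$ in the main text, which established box-dragging from naturality plus the existence of ``enough causal states.'' Concretely, for any $\phi: A\otimes X \to A' \otimes X'$ and $\psi:Y\to Y'$ in $\mathbf{Stoch}$, I want to show
\[
S_{XY,X'Y'}(\phi \otimes \psi) \;=\; S_{X,X'}(\phi)\otimes \psi.
\]
First I would reduce the problem to a statement about reductions by arbitrary causal states on $Y'$: because $\mathbf{Stoch}$ is causal, two stochastic maps with common codomain $B\otimes X'\otimes Y'$ agree iff they agree after composing with every deterministic distribution on $Y'$, which is the classical ``enough causal states'' property. (The analogous reduction on $Y$ is even simpler since one can test by basis states $\rho_k:I\to Y$.)

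The main computation then proceeds exactly as in the quantum proof: writing $\psi = (\mathrm{id}_{Y'}\otimes d)\circ (\psi\otimes\mathrm{id})\circ \cdots$ with causal discards whenever needed, one applies naturality to slide the $\psi$ box through the dotted auxiliary wire of $S$. Using the diagrammatic manipulation from the main text,
\[
\tikzfig{figs/tensor1} \;=\; \tikzfig{figs/tensor2} \;=\; \tikzfig{figs/tensor3} \;=\; \tikzfig{figs/tensor4},
\]
where all equalities are instances of the naturality rule for $S$ and the fact that discards are the unique effect in each type. Collecting these, one obtains the desired equation after composing with an arbitrary causal effect on $Y'$.

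The last step is to invoke enough causal states to promote the family of equalities ``after testing with all $\rho:I\to Y'$'' into an equality of stochastic maps. In $\mathbf{Stoch}$ this is immediate: the point distributions form a basis for $\texttt{st}_{\mathbb{R}}(Y')$ under the linear embedding, and stochastic matrices are determined by their values on these deterministic inputs. No new ingredient beyond naturality and causality of $\mathbf{Stoch}$ is required, so the main (and only) ``obstacle'' is purely notational: carefully matching the quantum box-dragging derivation wire for wire in the stochastic setting. Thus tensor separation holds for every locally-applicable transformation on $\mathbf{Stoch}$.
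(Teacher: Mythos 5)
Your proposal is correct and follows essentially the same route as the paper: the paper's proof is a one-line appeal to the fact that $\mathbf{Mat}[\mathbb{R}_{+}]$ has enough causal states, deferring to the naturality/box-dragging derivation given for $\mathbf{QC}$ in the main text, which is exactly the argument you reconstruct (test against the point distributions $\rho_k : I \to Y$, slide boxes by naturality, then conclude equality since these states span). The only cosmetic slip is your initial reference to causal states ``on $Y'$'' where the reduction is really performed on the input wire $Y$, as your own parenthetical remark already corrects.
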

\begin{proof}
Follows since $\mathbf{Mat}[\mathbb{R}_{+}]$ has enough causal states.
\end{proof}
\begin{theorem}
For $K,M$ convex in $\mathbf{Stoch}$ there is a one-to-one correspondence between the $\mathbf{Mat}[\mathbb{R}_{+}]$-supermaps of type $K \rightarrow M$ on $\mathbf{Stoch}$ and the locally-applicable transformations of type $K \rightarrow M$ on $\mathbf{Stoch}$.
\end{theorem}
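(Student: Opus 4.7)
The plan is to mimic the quantum proof almost verbatim, leveraging the four preceding lemmas which collectively establish that classical information theory enjoys exactly the structural features that were used in the quantum argument: convex linearity, unique extension to the operational closure, tensor separation, and an ambient compact closed category $\mathbf{Mat}[\mathbb{R}_+]$ into which $\mathbf{Stoch}$ strictly embeds. The forward direction is free: every $\mathbf{Mat}[\mathbb{R}_+]$-supermap $S_Q$ on $\mathbf{Stoch}$ gives rise to a locally-applicable transformation via the general embedding $\mathcal{F}_{\mathbf{Mat}[\mathbb{R}_+],\mathbf{Stoch}}(S_Q)_{X,X'}(\phi) := (S_Q \otimes \mathrm{id}_{X\otimes X'})(\phi)$, where naturality follows immediately from the interchange law in the symmetric monoidal category $\mathbf{Mat}[\mathbb{R}_+]$. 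This assignment is injective for the same reason as in the quantum case: distinct $S_Q$ are witnessed to disagree on their action on the swap.

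For the converse, given a locally-applicable transformation $S:K \rightarrow M$ between convex subsets, with $K \subseteq \mathbf{Stoch}(A,A')$ and $M \subseteq \mathbf{Stoch}(B,B')$, I would define the candidate supermap
\[ S_Q \quad := \quad \tikzfig{figs/final2}, \]
exactly as in the quantum case, where the inner application of $S$ is to the swap channel (which lies in $\texttt{dext}_{A',A}(K)$ since every discard-prepare matrix in $\mathbf{Stoch}$ is a stochastic map, and normality is automatic for convex $K$ containing such maps, or may be assumed as in the quantum statement). To check that $S_Q$ implements $S$, one applies arbitrary state $\rho$ and effect $\sigma$ from $\mathbf{Mat}[\mathbb{R}_+]$ to the auxiliary wires, then uses tensor separation to pull them inside the $S$-box, applies the unique extension $S_{\mathbf{Mat}[\mathbb{R}_+]}$ to the operational closure, and finishes with the snake equation of the compact closed structure of $\mathbf{Mat}[\mathbb{R}_+]$. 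Since this matches $S_{X,X'}(\phi)$ on all probes $\rho,\sigma$, and product states together with product effects separate morphisms in $\mathbf{Mat}[\mathbb{R}_+]$, we conclude $(S_Q \otimes \mathrm{id})(\phi) = S_{X,X'}(\phi)$ for every $\phi \in K_{X,X'}$, so $S_Q$ is a $\mathbf{Mat}[\mathbb{R}_+]$-supermap of type $K \to M$.

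Finally, injectivity of the assignment $S \mapsto S_Q$ is immediate from the recipe (distinct $S$ produce distinct images of the swap), and surjectivity follows because we just checked $\mathcal{F}_{\mathbf{Mat}[\mathbb{R}_+],\mathbf{Stoch}}(S_Q) = S$. Thus the correspondence is one-to-one, exactly parallel to Theorem~1. The only point deserving care, and the main potential obstacle, is to verify that the steps of the quantum proof really go through with $\mathbf{CP}$ replaced by $\mathbf{Mat}[\mathbb{R}_+]$. Concretely, one needs: (i) that the swap lies in the dilation extension of $K$ (handled by normality/closure under discard-prepare, and by the convex structure); (ii) that the completion argument in the extension lemma works, which requires $\mathbf{Mat}[\mathbb{R}_+]\hookrightarrow \mathbf{Mat}[\mathbb{R}]$ so subtractions are defined and for every positive effect $\sigma$ there exists $\lambda\in\mathbb{R}_+$ and effect $\sigma'$ with $\lambda\sigma+\sigma'$ equal to the discard; (iii) tensor separation, which holds because $\mathbf{Mat}[\mathbb{R}_+]$ has enough causal states. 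All three hold and are recorded in the preceding lemmas, so the quantum proof transports line-by-line, with the $\mathbf{CP}$ cup/cap replaced by the standard compact closed structure on $\mathbf{Mat}[\mathbb{R}_+]$, yielding the asserted bijection.
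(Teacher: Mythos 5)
Your proposal is correct and follows essentially the same route as the paper, whose own proof is just the one-line observation that the quantum argument transports given compact closure of $\mathbf{Mat}[\mathbb{R}_{+}]$ and enough causal states; you have simply spelled out the transported steps explicitly. Your aside about normality (needed so the swap lies in $\texttt{dExt}_{A',A}(K)$) is a fair catch, since the classical statement omits the normality hypothesis that the quantum theorem carries, but you resolve it the same way the paper implicitly does.
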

\begin{proof}
Here all that is required is compact closure of $\mathbf{Mat}[\mathbb{R}_{+}]$, and again that $\mathbf{Mat}[\mathbb{R}_{+}]$ has enough causal states.
\end{proof}
Again as a corollary of this theorem, the classical supermaps on non-signalling channels are characterised as locally-applicable transformations on non-pathing channels in $\mathbf{Stoch}$.
\end{document}